\documentclass[11pt]{article}

\usepackage[utf8]{inputenc}
\usepackage[T1]{fontenc}

\usepackage[a4paper,margin=1in]{geometry}

\usepackage{amsthm, amsmath, amssymb}
\usepackage{booktabs}
\usepackage{tabularx}
\usepackage{float}
\usepackage{xcolor}
\usepackage{enumitem}

\usepackage{eucal}

\usepackage{hyperref}
\hypersetup{%
  colorlinks=true,
  linkbordercolor=red,
  citecolor=green!70!black,
}

\usepackage{authblk} 
\usepackage{titlesec} 
 
\usepackage[linesnumbered,algoruled,boxed,lined]{algorithm2e}
\usepackage{complexity} 

\usepackage{needspace}
\usepackage{graphicx}
\usepackage{booktabs}
\usepackage{comment}
\usepackage{amsmath, amssymb, amsthm}
\usepackage{tikz}
\usetikzlibrary{shapes.multipart, positioning, arrows.meta, fit, patterns}
\usepackage{enumitem}
 
\usepackage{cleveref} 

\usepackage{soul}

\def\cqedsymbol{\ifmmode$\lrcorner$\else{\unskip\nobreak\hfil
\penalty50\hskip1em\null\nobreak\hfil$\lrcorner$
\parfillskip=0pt\finalhyphendemerits=0\endgraf}\fi}

\usepackage{tikz}
\usetikzlibrary{positioning,backgrounds,patterns,calc}

\usepackage{enumitem}
\usepackage{needspace}
\usepackage{thmtools}
\usepackage{thm-restate}

\newtheorem{theorem}{Theorem}[section]
\newtheorem{lemma}[theorem]{Lemma}

\newtheorem{definition}[theorem]{Definition}

\newtheorem{reduction rule}[theorem]{Reduction Rule}
\newtheorem{marking-scheme}{Marking Scheme}
\newtheorem{remark}[theorem]{Remark}

\usetikzlibrary{patterns}

\usepackage{subcaption}
\newcommand{\SoPP}{\textsc{Isometric Path Partition}\xspace}
\newcommand{\SoPC}{\textsc{Isometric Path Cover}\xspace}

\newcommand{\diam}{\textsf{diam}}
\newcommand{\tw}{\textsf{tw}}

\newcommand{\calP}{\mathcal{P}}
\newcommand{\calQ}{\mathcal{Q}}
\newcommand{\calS}{\mathcal{S}}

\makeatletter
\newcommand{\customlabel}[2]{%
\protected@write \@auxout {}{\string \newlabel {#1}{{#2}{}}}}
\makeatother

\newcommand{\defproblem}[3]{
  \vspace{1mm}
\noindent\fbox{
  \begin{minipage}{0.96\textwidth}
  \begin{tabular*}{\textwidth}{@{\extracolsep{\fill}}lr} #1 \\ \end{tabular*}
  {\bf{Input:}} #2 \\
  {\bf{Question:}} #3
  \end{minipage}
  }
  \vspace{1mm}
}

\newcommand{\defproblemout}[3]{
  \vspace{1mm}
\noindent\fbox{
  \begin{minipage}{0.96\textwidth}
  \begin{tabular*}{\textwidth}{@{\extracolsep{\fill}}lr} #1 \\ \end{tabular*}
  {\bf{Input:}} #2  \\
  {\bf{Output:}} #3
  \end{minipage}
  }
  \vspace{1mm}
}

\usepackage{ifthen}
\newboolean{long-version}
\setboolean{long-version}{true}
\newcommand{\iflongelse}[2]{\ifthenelse{\boolean{long-version}}{#1}{{\color{blue!80!white}#2}}}

\title{Parameterized Complexity of Isometric\\ Path Cover: Treewidth and Diameter}
	
\author[1]{Dibyayan Chakraborty} 
\author[2]{Oscar Defrain} 
\author[3]{Florent Foucaud} 

\author[4]{\authorcr Mathieu Mari} 
\author[5]{Prafullkumar Tale} 

\affil[1]{School of Computer Science, University of Leeds, UK}
\affil[2]{Aix-Marseille Université, CNRS, LIS, Marseille, France}
\affil[3]{Université Clermont Auvergne, CNRS, Mines Saint-Étienne,\authorcr \normalsize  Clermont Auvergne INP, LIMOS, 63000 Clermont-Ferrand, France}
\affil[4]{LIRMM, Université de Montpellier, CNRS, Montpellier, France}
\affil[5]{Indian Institute of Science Education and Research Pune, Pune, India}

\date{October 2026} 

\begin{document}

\maketitle

\begin{abstract}
We study the fixed-parameter tractability of the \textsc{Isometric Path Cover} problem where, given an integer $k$ and a simple, undirected, and unweighted graph $G$, the objective is to decide whether $G$ admits a set of $k$ isometric paths (i.e., shortest paths between their endpoints) whose union is the entire vertex set of $G$.

We show that \textsc{Isometric Path Cover} is NP-hard on graphs of pathwidth (and thus treewidth) at most 287. This is an interesting contrast with the closely related problems \textsc{Isometric Path Partition} and \textsc{Path Cover}, which are both known to be XP for treewidth, see [Chakraborty et al., WG 2026] and [Foucaud et al., TCS 2026].

On the positive side, we show, by a suitable dynamic programming scheme, that \textsc{Isometric Path Cover} admits an XP algorithm w.r.t.~the parameter combination of treewidth and diameter, running in time $n^{\diam^{O(\tw)}}$ where $\diam$, $\tw$, and $n$ denote the diameter, the treewidth, and the number of vertices of the input graph, respectively.

However, we also prove that \textsc{Isometric Path Cover} becomes W[1]-hard for pathwidth on graphs of diameter~7, showing that an XP algorithm is the best one can hope for w.r.t.~this parameterization, under standard complexity theory assumptions.

\vskip5pt\noindent{}{\bf Keywords:} \SoPC, parameterized complexity, parameterized reductions, treewidth, diameter.
\end{abstract}

\newpage
\tableofcontents
\newpage

\section{Introduction}\label{sec:intro}

Our goal is to study the parameterized complexity of \SoPC, where one aims at covering the vertices of an input graph using a minimum number of shortest (isometric) paths. A fixed-parameter-tractable (\FPT) running time has the form $f(k)n^{O(1)}$, and an \XP\ running time has the form $f(k)n^{g(k)}$~\cite{cygan2015parameterized}, where $n$ is the input size, $k$ the parameter, and $f,g$ computable functions. All graphs in this paper are unweighted, undirected, and simple. 

Covering graphs with paths with special properties is central to (algorithmic) graph theory. For example, decades of research have been devoted to the algorithmic and structural study of \emph{Hamiltonian graphs}, i.e., whose vertex set can be covered with only one path. \textsc{Hamiltonian Path}, i.e.,~the algorithmic task of deciding whether a graph is Hamiltonian, is a classic NP-complete problem~\cite{garey2002computers}, and the study of its fixed-parameter tractability gave rise to the famous \emph{cut-and-count} technique~\cite{cygan2022solving}, leading to single-exponential \FPT\ algorithms for many algorithmic problems parameterized by treewidth. A natural generalization of \textsc{Hamiltonian Path} is \textsc{Path Cover} where, given a graph $G$ and an integer $k$, the objective is to decide whether there exists a collection of $k$ paths that covers the vertex set of $G$. \textsc{Path Cover} derives motivation from the Gallai-Millgram Theorem~\cite{gallai1960verallgemeinerung}. The parameterized complexity of \textsc{Path Cover} has recently been studied~\cite{fomin2024hamiltonicity,FOUCAUD2026116226}. While \textsc{Path Cover} admits an \FPT\ algorithm w.r.t.~the combined parameters treewidth and solution size (via Courcelle's Theorem~\cite{DBLP:journals/iandc/Courcelle90}), it is a challenging open problem whether it is \FPT\ for treewidth alone~\cite{FOUCAUD2026116226}.

A path is \emph{isometric} if it is a shortest path between its endpoints. We consider a variant of \textsc{Path Cover} known as \SoPC, defined as follows.  

\medskip

\defproblemout{\SoPC}
{A graph $G$ and an integer $k$.}{Is there a collection of at most $k$ isometric paths in $G$, such that each vertex of $G$ belongs to at least one of the paths?}

\medskip

\SoPC was introduced in the context of the Cops and Robber game on graphs~\cite{FF01} and cop-decompositions~\cite{cop-decs}, that are tree-decomposition where the vertices of any bag can be covered by a bounded number of isometric paths in a suitably chosen subgraph. \SoPC serves a natural simplified model for network design problems of covering transportation networks by traffic routes. \SoPC has also been applied to machine learning on graphs~\cite{TG21}. Recently, the study of the approximability of \SoPC led to the discovery of the graph parameter \emph{isometric path complexity}~\cite{chakraborty2026isometric,chakraborty2025strong}, which connects graph hyperbolicity to aspects of coarse graph theory, an area that recently garnered considerable attention~\cite{georgakopoulos2025graph,nguyen2025coarse, hickingbotham2025graphs}. Moreover, graphs that can be covered with a bounded number of isometric paths have been shown to have bounded treewidth~\cite{baste2025polynomial,dumas2024graphs}; other results related to \SoPC are surveyed in Section~\ref{sec:related}.

\subsection{Our results}

Despite growing interest for \SoPC, a thorough study on its parameterized complexity is missing from the literature. The problem has been shown to be \FPT\ w.r.t.~parameters vertex cover and neighborhood diversity~\cite{PPP}.
Our first result shows that these results cannot be extended to pathwidth or treewidth, unless \P=\NP\ (even with an \XP\ algorithm). This negatively answers open questions from~\cite{foucaud2022,dumas2024graphs}.

\begin{restatable}{theorem}{restatethmIPCtwparaNPh}
	\label{thm:cover-tw-paraNPh}
	{\SoPC is \NP-complete for graphs
	of pathwidth (and thus treewidth) at most 287.}
\end{restatable}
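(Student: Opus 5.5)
Membership in \NP\ is immediate: a certificate is a collection of at most $k$ paths, and whether each is isometric (via BFS distances) and whether their union is $V(G)$ can both be checked in polynomial time. The substance is \NP-hardness. I would prove it by a polynomial reduction from a problem that is \NP-hard on instances of bounded structure. A natural source is \textsc{3-SAT} restricted so that every variable appears in a bounded number of clauses; this remains \NP-hard, e.g.\ for at most three occurrences. The target graph should be assembled from constant-size gadgets arranged along a path-like skeleton, so that a path decomposition of constant width can be written down explicitly. The constant $287$ would then be a count of the vertices in a bag of that decomposition.

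The construction I would aim for is a long ``grid-like'' host graph whose distance structure is rigid, for instance pieces of a square grid or a long ladder. In such a graph isometric paths are monotone staircases, and the number of paths needed to cover the grid is forced. The key tool is \emph{corner} or \emph{pendant-like} vertices: a vertex of degree $1$, or a simplicial vertex, must be an endpoint of some covering path. Placing $2k$ such vertices forces exactly $k$ paths, and pins down their endpoints. Each variable then corresponds to a forced path that has exactly two possible isometric routes, encoding true and false. Each clause gadget contains a vertex lying on one of the variable routes only when a literal satisfying it is chosen. Distances must be tuned with shortcut edges or local detours of controlled length so that the intended routes are geodesic and no other route between the forced endpoints is.

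To keep pathwidth bounded I would not let the variable paths traverse the whole graph in parallel, since that would make the width grow with the number of variables. Instead I would process clauses sequentially, with the truth value of each variable carried only through its bounded number of occurrences. One option is a variable-occurrence chain: a bounded-length path for each occurrence, with consistency between consecutive occurrences enforced by small local gadgets. Another option is to order the incidence graph of the bounded-occurrence formula so that only a constant number of ``active'' wires cross each cut. The fallback if that fails is to start from an \NP-hard problem already known to be hard at constant pathwidth, so that the source instance has bounded pathwidth with an explicit layout.

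The main obstacle is the interaction between isometry and locality. Shortest paths are global objects, so adding edges anywhere can create shortcuts that destroy the geodesicity of intended routes, or create unintended isometric paths that cover clause vertices ``for free''. I would handle this by making all gadgets embed isometrically in a common grid-like frame, where distances can be computed by an $\ell_1$-type formula. The correctness proof would first show that any cover of size $k$ uses exactly one path per forced endpoint pair, by counting degree-one vertices. It would then show that each such path is one of the two intended routes, by a distance computation. Finally, it would read off a satisfying assignment. The width bound would follow by sweeping the frame left to right and collecting the constantly many gadget vertices in each column.
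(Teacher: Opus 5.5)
Membership in \NP{} is fine, but the hardness part is a research plan rather than a proof. It also misses the one idea that makes the theorem possible at all. You propose two ways to keep the width bounded, and neither can work.

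\textsc{SAT} is fixed-parameter tractable by the treewidth (hence the pathwidth) of its incidence graph. So hard bounded-occurrence formulas must have incidence graphs of unbounded pathwidth, and no ordering leaves only constantly many active wires across every cut unless $\P=\NP$.

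Variable-occurrence chains fail for the same reason. Suppose consistency is carried by a connected chain of local gadgets for each variable, and clause gadgets attach to these chains. Contracting each chain and each clause gadget then exhibits the incidence graph as a minor of your target graph, and pathwidth is minor-monotone.

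In short, any reduction in which information travels along physical wires inherits the unbounded width of the source. The fallback of starting from a problem hard at constant pathwidth is not worked out. As it stands you have no construction, no correctness argument, and no width bound, let alone $287$. Note also that a square-grid frame has unbounded pathwidth; only ladder-like frames avoid this.

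The missing idea is that paths in a cover may overlap, which is exactly what fails for \SoPP, where the problem is \XP{} in treewidth. The paper routes all $12n$ solution paths through a constant-size hub: $2\cdot 16$ interchanges with $6$ gates each, giving $192$ vertices kept in every bag. This is where unboundedly many distinct paths cross a bounded separator. Element identities are encoded metrically, in the arm lengths $L\pm i(u)$.

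Consistency is then enforced globally rather than by local gadgets. First, isometry gives $\varepsilon_\beta\bigl(i(v_P)-i(u_P)\bigr)\le 0$ for each path $P$, by comparing $P$ with the route through the opposite-sign interchange. Second, every arm is used exactly once, so these quantities sum to zero, and hence each one vanishes.

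The remaining ingredients are local:
\begin{itemize}
\item $2k$ leaves force the path endpoints, which you did anticipate.
\item A spine path absorbs the element side.
\item Blocking vertices force exactly one selected set per group of \textsc{Constrained Perfect Matching}.
\item A $16$-coloring makes a path's interchange color identify its element within a group.
\end{itemize}
The bag size $192+94+2=288$ gives the constant. A frame with $\ell_1$-type distances would make isometry easy to check, but it gives you no mechanism for passing unboundedly many identities through a constant separator.
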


The reduction is from an \NP-complete variant of perfect matching in cubic graphs which we call \textsc{Constrained Perfect Matching}:
the edges {of the input graph} are grouped into groups {of size~3}, and the matching must choose exactly one edge
from each group. Each vertex is incident with one selected edge and two
unselected edges.  Informally, our construction represents these three
choices by one primary slot and two dummy slots on the element side,
and one slot for each incidence $(S,u)$ on the incidence side. {Potential solution paths}
pair the incidence slots with the corresponding primary or dummy slots
of the same element.  Blocking vertices force the unselected sets to
use dummy paths.  The two sides communicate through a constant number
of gates; retaining these gates in the decomposition gives bounded
pathwidth.  

Theorem~\ref{thm:cover-tw-paraNPh} adds \SoPC to the growing list of distance-based graph optimization problems which are \NP-complete on graphs of constant pathwidth. 
Typically, in a distance-based graph optimization problem, the objective is to choose a solution sub-structure of the input graph satisfying some metric constraints while optimising the solution size. 
Some of these problems e.g., \textsc{(Strong) Metric Dimension}~\cite{LM21,tale2026hardnessstrongmetricdimension} and \textsc{Geodetic Set}~\cite{DBLP:conf/iwpec/Tale25}, are known to be \NP-complete on graphs of constant pathwidth, while others are W[1]-hard for pathwidth but \XP\ for treewidth, like \textsc{Isometric Path Partition}~\cite{ChakrabortyDFMT2026}, \textsc{Distance-$d$-Dominating Set}~\cite{KLP19}, \textsc{Distance-$d$-Independent Set}~\cite{DBLP:journals/dam/KatsikarelisLP22}.
To complement such hardness results, researchers have provided \FPT\ algorithms for distance-based optimization problems w.r.t.~cyclomatic number, vertex cover number, treedepth, the combination of treewidth and diameter etc.~\cite{ChakrabortyDFMT2026,DBLP:journals/jgaa/Eppstein15,DBLP:conf/icalp/FoucaudGK0IST24,FoucaudMDGS_VC,KK22,KLP19,DBLP:journals/dam/KatsikarelisLP22,DBLP:journals/algorithmica/KuceraS23}. 
The treewidth and diameter parameterization is particularly suitable for such problems, as it constrains both the structure and the distances in the graph. We prove the following.

\begin{restatable}{theorem}{restatethmIPCalgo}
	\label{thm:IPC:algo}
	\SoPC admits an algorithm running in time $n^{\diam^{O(\tw)}}$ where $\diam$, $\tw$, and $n$ denote the diameter, the treewidth, and the number of vertices of the input graph, respectively.
\end{restatable}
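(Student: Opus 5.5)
We plan a dynamic program over a nice tree decomposition of width $\tw$. We first compute all pairwise distances of $G$. The main difficulty is that a solution may contain up to $n$ isometric paths, and many of them may meet a single bag. We handle this by recording, for each bag, only \emph{types} of partial paths together with their multiplicities.

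\textbf{Path types.} Every isometric path has length at most $\diam$, so it is a sequence of at most $\diam+1$ vertices. Fix a bag $B_t$, let $V_t$ be the set of vertices introduced in the subtree of $t$, and let $F_t = V_t\setminus B_t$ be the forgotten vertices. For a solution path $P$ that meets $V_t$, its type at $t$ records:
\begin{itemize}
\item the length $\ell\le \diam$ of $P$;
\item the set of positions $i\in\{0,\dots,\ell\}$ at which $P$ visits a vertex of $B_t$, and which bag vertex sits at each such position;
\item which of the remaining positions are realised in $F_t$ and which are still \emph{pending}, i.e.\ lie in $V(G)\setminus V_t$;
\item for each endpoint of $P$ lying in $F_t$, its distance vector to $B_t$.
\end{itemize}
Since $B_t$ separates $F_t$ from $V(G)\setminus V_t$, the positions realised in $F_t$ form maximal runs whose neighbours are bag positions or endpoints, and the same holds for pending runs. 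Distance vectors take at most $(\diam+1)^{\tw+1}$ values. Hence the number of types is $\diam^{O(\tw)}$, and a DP state is a vector of counts in $\{0,\dots,n\}$ indexed by types. This gives $n^{\diam^{O(\tw)}}$ states per bag, and transitions are enumerable within the same bound. Paths lying entirely inside $F_t$ are finished, and we record only their number (at most $n$).

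\textbf{Isometry test.} A walk is an isometric path exactly when its consecutive vertices are adjacent and $d(a,b)$ equals its length, where $a,b$ are its endpoints. We verify adjacency locally:
\begin{itemize}
\item at introduce nodes, for consecutive bag positions;
\item at forget nodes, when a run in $F_t$ is extended by the forgotten vertex $v$, using the adjacency of $v$ to bag vertices.
\end{itemize}
The endpoint condition is checked according to where the endpoints lie:
\begin{itemize}
\item If both endpoints are present in the bag, we use the precomputed distance.
\item If one endpoint $a$ is forgotten and the other, $b$, is later seen in some bag, then $d(a,b)=\min_{x\in B}(d(a,x)+d(x,b))$ over any separating bag $B$. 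We therefore propagate the distance vector of $a$ when bags change: $d(a,y)$ for a newly introduced $y$ is computed as the minimum over the current bag. This is correct because $a$ lies strictly on the forgotten side.
\item If both endpoints end up forgotten, we check $d(a,b)=\ell$ at the moment the second endpoint is forgotten. At that moment the first endpoint is represented by its distance vector to the current bag, and the formula above applies with $b$ in the bag just before it is forgotten.
\end{itemize}

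\textbf{Coverage, join nodes, and the main obstacle.} When a vertex $v$ is forgotten, we require that some type with positive count passes through $v$. At join nodes we combine counts type-wise. A type at the parent corresponds to a pair of types at the children that agree on the bag positions and partition the non-bag positions between the two subtrees, with at most one child supplying each forgotten endpoint. Matching multisets of types across the join costs at most $n^{\diam^{O(\tw)}}$ by enumerating count vectors. Because paths may share vertices, the cover variant imposes no disjointness, which only simplifies the transitions. We expect the main obstacle to be the soundness argument: any assignment of counts consistent with the transitions must lift to an actual family of at most $k$ isometric paths. In particular, we must show that two partial paths with the same type are interchangeable, i.e.\ that the distance vector of a forgotten endpoint together with the positions determines every future isometry check. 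We would prove this by induction over the decomposition, using the separator property of bags for every distance query that crosses a bag.
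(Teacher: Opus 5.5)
Your route is correct in substance but differs from the paper's. The paper never records lengths or positions along a path. A bag-trace stores the ordered, oriented pieces of the path inside the bag, a top/bottom label for each gap, and, for each endpoint, either a bag vertex or a \emph{signed} distance profile. The profile is negative if the endpoint is already interior, and positive if the endpoint is still outside $G_t$; in the latter case it is guessed among the profiles realised by vertices outside $G_t$.

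Isometry is then enforced locally at every node. Along each edge, the distance from the first endpoint increases by one and the distance to the last endpoint decreases by one, with strict inequalities across gaps. This uses $d(a,b)=\min_{x\in X_t}(\mathbf{a}[x]+d(x,b))$ for positive profiles. At the root, a complete trace satisfying these conditions is automatically a shortest path. The join rule must then combine endpoint signs and link labels in the patterns $(+,+,+),(-,+,-),(+,-,-)$.

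You instead use that an isometric path has at most $\diam+1$ vertices. You store its length and the absolute positions of its bag vertices, and keep distance vectors only for endpoints that are already forgotten. Isometry then reduces to the single test $d(a,b)=\ell$ on a walk whose adjacencies are verified locally. This avoids guessing profiles of unseen endpoints and the sign bookkeeping. It also makes interchangeability of equal types immediate: the future uses a forgotten endpoint $a$ only through $d(a,y)$ for $y\notin V_t$, which its vector determines via the separator. The price is the position bookkeeping. Both encodings give $\diam^{O(\tw)}$ types, hence $n^{\diam^{O(\tw)}}$ states and the same running time.

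One step needs fixing. Your rule for two forgotten endpoints (``check when the second endpoint is forgotten, the first being represented by its vector'') never fires when $a$ is forgotten in the subtree of one child of a join node and $b$ in the subtree of the other. In each branch the other endpoint is still pending, so nothing is checked there. After the join, both endpoints are already forgotten, so no later introduce or forget transition involves them. A non-isometric walk could therefore reach the root unchecked. Add the test $\min_{x\in B_t}(\mathbf{a}[x]+\mathbf{b}[x])=\ell$ to the join transition; it is valid because the join bag separates the forgotten sets of the two children. With this addition your argument goes through.
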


{We prove Theorem~\ref{thm:IPC:algo} by providing a dynamic programming algorithm. 
In this algorithm, the states describe the ``traces'' of paths on the bags of the decomposition, which consist of how paths meet a bag, which connections should be realized outside the bag, and some additional distance information of their endpoints. 
Together with these traces, we store multiplicities indicating how many paths from the solution yield a given trace.
Let us point that, although this approach appears to follow standard techniques, it requires special effort to achieve the claimed time bound.
Most notably, instead of storing the exact endpoints of the traces, we must limit ourselves to the \emph{distance profiles} of these endpoints, which are vectors indicating their distances to the vertices of a given bag of the decomposition.
Handling these require particular attention for the compatibility of states, introducing some complexity in proving the correctness of our algorithm.}

{In the light of the above result, a natural question is whether the dependence on treewidth and diameter can be confined to a multiplicative factor of the input size, that is, whether \SoPC admits an \FPT{} algorithm for this combination of parameters.} Under standard complexity theoretic assumptions, we rule out such a possibility by showing that \SoPC is \W[1]-hard when parameterized by the parameter combination of pathwidth and diameter. 

\begin{restatable}{theorem}{restatethmIPCWhtwdiam}
	\label{thm:cover-W1h-tw-diam}
	\SoPC is \W[1]-hard when parameterized by pathwidth
	even when the input graph has diameter $7$. 
\end{restatable}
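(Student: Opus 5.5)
We reduce from \textsc{Multicolored Clique}, parameterized by the number $k$ of color classes. The input is a graph $H$ whose vertex set is partitioned into $V_1,\dots,V_k$, and we may assume every $|V_i|=N$. We build an instance $(G,k')$ of \SoPC in which $G$ has pathwidth $O(k^2)$ and diameter~$7$. The overall template is the standard one for \XP-but-W[1]-hard distance problems, as for \textsc{Isometric Path Partition}~\cite{ChakrabortyDFMT2026}. For each color $i$ we add a selection gadget, and for each pair $i<j$ a verification gadget. Consistency between them is enforced by distances alone, since every solution path must be a shortest path.

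\textbf{Construction.} Two issues drive the design: diameter~7 forces every gadget to sit within bounded distance of a small set of hubs, and bounded pathwidth forces long structures to be spread over many ``choice'' vertices. We therefore encode a choice $v\in V_i$, viewed as an index $a\in[N]$, in the manner of geodetic-set reductions. For each color $i$ we introduce two terminal sets $L_i=\{\ell_1,\dots,\ell_N\}$ and $R_i=\{r_1,\dots,r_N\}$ joined through a constant number of ``gate'' vertices $g_i^1,\dots,g_i^c$. Edges are weighted by subdivisions of length depending on $a$, or alternatively given in a binary-like complementary form, so that $d(\ell_a,r_b)$ equals a fixed value exactly when $a+b=N+1$ and is larger otherwise. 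Selecting the isometric path $\ell_a\to r_{N+1-a}$ then encodes choosing $a$. Forced pendant vertices attached to each $\ell_a$ and $r_b$ make every terminal an endpoint of some path, so the budget $k'$ forces the pairing to be a perfect matching of $L_i$ with $R_i$ along shortest paths.

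\textbf{Coupling and verification.} For each pair $(i,j)$ and each edge $uv$ of $H$ between $V_i$ and $V_j$ we add an edge-vertex $x_{uv}$. All of these must be covered by one ``verification'' path, or a constant number of them, which passes through gates of color $i$ and color $j$. Blocker vertices adjacent to those gates carry distance offsets. As a result, a path through $x_{uv}$ is isometric only if the offsets match the indices chosen in gadgets $i$ and $j$. The remaining edge-vertices are absorbed by cheap dummy paths, whose number is fixed by counting. Every gadget vertex is adjacent to or within distance~3 of a global hub set $\{h\}$, which yields diameter $\le 7$; we will adjust the subdivision lengths so that the hub is never on any relevant shortest path between terminals. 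A path decomposition keeps all $O(k^2)$ gates and hubs in every bag and sweeps the gadgets one after another, each of which has constant pathwidth. This gives pathwidth $O(k^2)$.

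\textbf{Correctness.} The forward direction is routine: given a multicolored clique, we write down the paths explicitly. For the converse, we first show that the pendant vertices fix the endpoints of all paths. We then show that tight counting forces exactly one selection path per color and exactly one verification path per pair. Finally, the distance constraints give that the selected indices induce an edge for every pair.

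The main obstacle is keeping diameter~7 while ensuring that the hub shortcuts do not create unintended isometric paths, meaning paths that cover many verification vertices at once or that combine selections from different colors. We will first try making all hub attachments of length exactly $3$ and all intra-gadget distances at most $6$, and then check the isometry conditions case by case.
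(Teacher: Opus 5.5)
Your proposal has a genuine gap in its coupling mechanism. You say consistency is ``enforced by distances alone'', and that a path through $x_{uv}$ is isometric only if the offsets match the indices chosen in gadgets $i$ and $j$. But whether a path is isometric is a property of that path in the fixed graph $G$. It cannot depend on which other paths the solution uses, so the index selected in gadget $i$ has no effect on which verification paths are available. In a covering problem, solution paths interact only through coverage and the budget; this differs from \textsc{Isometric Path Partition}, where disjointness couples them. You therefore need a tight count that forces exact coverage, plus gadgets in which exact coverage propagates a choice. Neither is supplied. The claim that counting forces exactly one selection path per color and one verification path per pair is only asserted. It also conflicts with your own selection gadget: pendants and the budget force a perfect matching of $L_i$ with $R_i$ along the pairs $a+b=N+1$, so all $N$ pairs are used and no index is selected.

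The diameter bound also rules out the encodings you sketch. With diameter $7$, every isometric path has at most $8$ vertices. So one, or constantly many, verification paths cannot cover the unboundedly many $x_{uv}$, and subdivisions whose lengths depend on $a\in[N]$ are impossible. Suppose $L_i$ and $R_i$ are separated by constantly many gates and hubs. Then the vertices $\ell_a$ realize only constantly many distance profiles to that separator, since all entries are at most $7$. For large $N$, two distinct $\ell_a,\ell_{a'}$ then have identical distances to every $r_b$, so the anti-diagonal pattern you need cannot exist in any ``binary-like'' form; the same holds for blocker offsets.

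The paper's proof is built around exactly these constraints. It reduces from a variant of \textsc{Unary Bin Packing} parameterized by the number $k$ of bins. The $k$ bin vertices plus the hub path $W=w_0\cdots w_7$ form a small separator, giving pathwidth $k+32$. A choice is encoded by \emph{which} anti-split gadget the $t_i$ leaf-to-leaf paths of item $i$ traverse. The budget $K=1+kb/2+k^2b$ is chosen so that double counting forces every path to have exactly $8$ vertices and every vertex outside $A\cup B$ to be covered exactly once. That exact coverage is what makes paths interact. It forces $W$ into the cover and routes leaf paths through the vertical segments $x_r,p_r,q_r,y_r$. Via the relations $g_{r-1}+h_r+h_{r+1}=1=g_r+h_r+h_{r+1}$ on the green/blue internal paths, it forces each gadget to carry either $0$ or all $t_i$ encoding paths. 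A \textsc{Multicolored Clique} route would need an analogous exact-counting framework, which your proposal does not provide.
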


The reduction is from a variant of \textsc{Unary Bin Packing}, where, given a positive integer $k$, item sizes and a bin capacity $b$ where the total item size is $kb$, all encoded in unary, the objective is to decide whether the
items fit into at most $k$ bins of capacity $b$.  

In our reduction, the constructed graph has (among others) $n$ vertices $a_1,\dots, a_n$ that correspond to the items, and $k$ vertices $b_1,\dots, b_k$ that correspond to the bins. Each vertex $a_i$ has leaves attached to it that encode the size of the corresponding item, and each vertex $b_j$ has $b$ leaves that encode the capacity of the corresponding bin. 
Then, each pair $(a_i,b_j)$ is connected by a structure called an \emph{anti-split} gadget that essentially consists of $t_i$ parallel paths plus some other edges and vertices. If the $i$-th item is placed into the $j$-th bin, then we create $t_i$ paths from the leaves of $a_i$ to $t_i$ distinct leaves of $b_j$ that traverse that gadget. A potential problem here is that, in order to cover the graph, one could be tempted to ``split'' the items by having some paths starting from the same item but finishing in different bins. Our anti-split gadget prevents such a split by inducing a ``penalty'' for each gadget that is only partially traversed.

\subsection{Related works.}\label{sec:related}

\SoPC is known to be NP-complete on chordal graphs~\cite{foucaud2022} and bipartite graphs with bounded degeneracy and diameter~\cite{PPP}. 
On the other hand, \SoPC admits polynomial time algorithms on restricted graph classes like chain graphs, cographs~\cite{chakraborty2026covering}, trees~\cite{FOUCAUD2026116226}, block graphs~\cite{DBLP:journals/ipl/PanC05}.  
\SoPC has also been studied from a structural point of view: the optimal solution sizes have been determined for square grids~\cite{FF01}, hypercubes~\cite{FNHC01}, complete $r$-partite graphs~\cite{PC06} and Cartesian products of complete graphs~\cite{PC06}. 
It was proved that if the vertex set of a graph can be covered with $k$ isometric paths, then its pathwidth is in $O(3^k)$~\cite{dumas2024graphs}; if also the edge set of the graph can be covered, then its pathwidth is  in $O(k^4)$~\cite{baste2025polynomial}. The former result was used in~\cite{dumas2024graphs} to obtain an \XP\ algorithm for the problem when parameterized by solution size.
Approximation algorithms for \SoPC were proposed, such as an $O(\log\diam(G))$-factor algorithm for general graphs~\cite{TG21} and constant factor approximation  algorithms for interval graphs, chordal graphs, and more generally on graphs with bounded hyperbolicity~\cite{chakraborty2026isometric,chakraborty2025strong,foucaud2022}. 

\paragraph*{Contrast with \SoPP.}

Chakraborty et al.~\cite{ChakrabortyDFMT2026} recently studied the \SoPP
problem through the lens of parameterized
complexity. {This problem has a similar definition as \SoPC, but the paths are required to be pairiwise vertex-disjoint.}
It was proved in~\cite{ChakrabortyDFMT2026} that \SoPP is \W[1]-hard w.r.t.~pathwidth but admits an \XP\ algorithm w.r.t.~treewidth. Moreover, \SoPP is fixed-parameter tractable w.r.t.~the combined parameter of treewidth and diameter. 
The results in this paper along with the ones in \cite{ChakrabortyDFMT2026} establish that \SoPC is significantly harder than \SoPP from the perspective of parameterized complexity. {These differences are summarized in Table~\ref{tab:results}. These are reminiscent with the (supposed) difference between \textsc{Path Cover} and \textsc{Path Partition} when parameterized by treewidth: for the former, only an \XP\ algorithm is known, while the latter is \FPT~\cite{FOUCAUD2026116226}.}

\begin{table}[ht]
\centering
\begin{tabularx}{\textwidth}{@{}
 >{\raggedright\arraybackslash}X
 >{\centering\arraybackslash}X
 >{\centering\arraybackslash}X
 @{}}
\toprule
parameter & \SoPP \cite{ChakrabortyDFMT2026} & \SoPC (this paper) \\
\midrule
\midrule
treewidth &  \XP\ for treewidth  & \NP-hard for pathwidth 287 \\
& \W[1]-hard for pathwidth & \\
\midrule
treewidth and diameter & \FPT\ & \XP\  \\
& & \W[1]-hard for pathwidth on diameter~7 graphs \\
\bottomrule
\end{tabularx}
\caption{Summary of our results on \SoPC\ and comparison with known results for \SoPP.}
\label{tab:results}
\end{table}

\subsection{Organisation of this paper}

In Section~\ref{sec:prelim}, we introduce some definitions used throughout the paper. In Sections~\ref{sec:SCP-tree-width-para-NP-hard},  \ref{sec:algo-treewidth-plus-diam}, and \ref{sec:cover-W1h-tw-diam} we prove Theorems~\ref{thm:cover-tw-paraNPh}, \ref{thm:IPC:algo}, and \ref{thm:cover-W1h-tw-diam}, respectively. We conclude in Section~\ref{sec:conclusion}.

\section{Preliminaries}
\label{sec:prelim}

A graph has vertex set $V(G)$ and edge set $E(G)$. We denote edge with endpoints $u, v$ as $(u, v)$. 
The \emph{length} of a path $P$ is the number of edges in $P$.  An \emph{isometric path} is a shortest path between its endpoints.
An \emph{IP-cover} of a graph $G$ is a collection
of isometric paths that covers
all the vertices in \(G\).
The \emph{size} of an IP-cover $\mathcal{P}$ of a graph $G$ is the cardinality of $\calP$. 
In the rest of the paper, we may denote an isometric path $P$ by the natural ordering $(x_1, x_2, \dots, x_{\ell})$ of its vertices obtained by traversing $P$ from one endpoint to the other.

For a graph $G$ and a set $X\subseteq E(G)$, the graph $G+X$ (resp.\ $G-X$) is the
graph obtained by adding (resp.\ removing) the edges in $X$ to (resp.\ from) $G$. 
For an induced subgraph $H$ of a graph $G$, and a set $X\subseteq V(G)$, $H+X$ (resp.\ $G-X$) is the 
subgraph of $G$ obtained by adding (resp.\ removing) the vertices in $X$ to (resp.\ from) $H$. 
When performing these operation, we ignore multiplicity, i.e., we keep a simple graph.
For a graph $G$ and a set $X\subseteq V(G)$, the graph $G[X]$ is the subgraph of $G$ induced by the vertices in $X$.
For convenience, we extend these notations to paths that we see as graphs, i.e., $P[X]$ for some path $P$ and $X\subseteq V(G)$ denotes the (possibly disconnected) graph consisting of subpaths of $P$ that are induced by $X$.

The notation of \emph{identifying} two vertices $u$ and $v$ in a graph $G$ is defined as follows:
The operation construct a simple graph $H$ by deleting vertices $u$ and $v$
from $G$, adding a new vertex $w$ and making it adjacent with
every vertex in $G$ that were adjacent with $u$ or $v$.

For a positive integer $q$, we denote {the} set $\{1, 2, \dots, q\}$ by $[q]$.

For sake of completeness we recall the well-known definitions of \emph{tree-decompositions}, 
\emph{treewidth} and \emph{nice tree-decompositions} below. 

\begin{definition}[\cite{cygan2015parameterized}]
A \emph{tree-decomposition} of a graph $G$ is a rooted tree $T$ 
where each node $v$ is associated to a subset $X_v$ of $V(G)$ called \emph{bag}, such that 
\begin{itemize}
\item The set of nodes of $T$ containing a given vertex of $G$ forms a nonempty connected subtree of $T$; and

\item Any two adjacent vertices of $G$ appear in a common node of $T$.

\end{itemize}
The \emph{width} of $T$ is the maximum cardinality of a bag minus one. The \emph{treewidth} of $G$ is the 
minimum integer $k$ such that $G$ has a tree-decomposition of width $k$.
\end{definition}

\begin{definition}[\cite{cygan2015parameterized}]
A \emph{nice tree-decomposition} of a graph $G$ is a rooted tree-decomposition such that each internal 
node has one or two children, with the following properties.
\begin{itemize}
\item Each node of $T$ belongs to one of the following types: \emph{introduce}, \emph{forget}, \emph{join} or \emph{leaf}.

\item A join node $v$ has two children $v_1$ and $v_2$ such that $X_v = X_{v_1} = X_{v_2}$.

\item An introduce node $v$ has one child $v_1$ such that $X_v \setminus \{x\} = X_{v_1}$, where $x \in X_v$.

\item A forget node $v$ has one child $v_{1}$ such that $X_v  = X_{v_1} \setminus \{x\}$, where $x \in X_{v_1}$.

\item A leaf node $v$ is a leaf of $T$ with $X_v=\emptyset$.

\item The tree $T$ is rooted at a node $r$ called root node with $X_r = \emptyset$.
\end{itemize}
\end{definition}

For a node $t$ in a nice tree-decomposition $T$ of a graph $G$, we let $G_t$ denote the subgraph of $G$ induced by the vertices in the union of bags of the nodes that belong to the subtree of $T$ rooted at $t$.

\providecommand{\CPM}{\textnormal{\textsc{Constrained Perfect Matching}}}
\providecommand{\IPC}{\textnormal{\textsc{Isometric Path Cover}}}
\providecommand{\cpmDistance}{\operatorname{dist}}
\providecommand{\Top}{\mathsf t}
\providecommand{\Bot}{\mathsf b}
\providecommand{\eps}{\varepsilon}
\providecommand{\figureplaceholder}[1]{%
  \fbox{\parbox{.90\linewidth}{\centering\small\itshape #1}}}
\providecolor{primarygreen}{HTML}{168A45}
\providecolor{dummyorange}{HTML}{D97706}

\section{\textsc{\SoPC} Hardness with Respect to Pathwidth}
\label{sec:SCP-tree-width-para-NP-hard}

In this section, we prove that \textsc{\SoPC} remains \NP-hard on graphs of pathwidth at most 287 (Theorem \ref{thm:cover-tw-paraNPh}). We first define the source problem and show its hardness, then we construct the graph of the reduction, and finally we analyze the reduction.

\subsection{The source problem: constrained perfect matching}
\label{subsec:cpm-source}

\begin{definition}[\CPM]
An instance consists of a finite universe $U$, a family $\mathcal S$ of
distinct two-element subsets of $U$, and a partition $\mathcal G$ of
$\mathcal S$ into \emph{groups}.  We impose the following restrictions:
\begin{enumerate}[label=(\roman*),nosep]
  \item every element belongs to exactly three sets of $\mathcal S$;
  \item every group contains exactly three sets;
  \item the three sets in a group are pairwise disjoint.
\end{enumerate}
The question is whether there is a subfamily $\mathcal F\subseteq\mathcal S$
such that $|\mathcal F\cap g|=1$ for every $g\in\mathcal G$ and
$\mathcal F$ is a partition of $U$.
\end{definition}

{\CPM\ can be reformulated as a constrained form of perfect matching in a cubic graph, since the sets in $\mathcal S$ can be  seen as edges, and where each group consists of three disjoint edges. Nevertheless, we prefer to use the above set formulation in order to avoid confusion with the constructed graph instance for \SoPC.}

Throughout the graph construction, members of $U$ are called
\emph{elements}, members of $\mathcal S$ are called \emph{sets}, and
members of $\mathcal G$ are called \emph{groups}.  Vertices and edges
refer to objects in the constructed graph, except for the auxiliary
source graph used in the following proof.

\begin{lemma}
\label{lem:cpm-source-hard}
{\CPM\ is \NP-complete.} 
\end{lemma}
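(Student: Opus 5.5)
Membership in \NP\ is immediate: a certificate is the subfamily $\mathcal F$, and checking that it meets every group exactly once and partitions $U$ takes polynomial time. For hardness I will view an instance as a cubic multigraph-free graph $H$ on vertex set $U$ with edge set $\mathcal S$, whose edges are partitioned into groups of three pairwise disjoint edges. The question is whether $H$ has a perfect matching using exactly one edge per group. I plan to reduce from \textsc{3-Dimensional Matching} restricted to instances in which every element occurs in exactly three triples, which is \NP-complete; equivalently I could use \textsc{Exact Cover by 3-Sets} with each element in exactly three sets.

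\textbf{Construction.} Each triple $t=(x,y,z)$ becomes a group of three disjoint edges. Choosing the "middle" edge of the group will mean that $t$ is selected, and choosing one of the other edges will mean that it is not. Each element $u$ becomes a local gadget whose vertices are covered in every perfect matching. This gadget forces exactly one of its incident "triple" edges to be matched towards a selected triple. Concretely, for a triple $t$ I create, for each coordinate $c\in\{x,y,z\}$, a pair of vertices $p_{t,c},q_{t,c}$. The group of $t$ is $\{p_{t,x}p_{t,y}',\ \dots\}$, arranged so that its three edges are disjoint. Within a group the three edges are of two kinds. One kind has every $p_{t,c}$ matched "internally", which models "$t$ not chosen". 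The other kind leaves the $p_{t,c}$ to be matched to the element gadgets, which models "$t$ chosen". Since exactly one edge per group is selected, an auxiliary path of length three inside the group can encode this binary choice. Each element gadget for $u$, with its three incident triples, is built from a small cubic piece, in the spirit of the standard construction of perfect-matching gadgets in cubic graphs. In every perfect matching it admits exactly one external edge matched. Any remaining degree deficits are filled with extra dummy groups made of three disjoint edges on fresh vertices, arranged in disjoint $K_{3,3}$-like blocks, so that every vertex has degree exactly $3$ and the sets remain distinct.

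\textbf{Correctness.} For correctness I will argue in both directions. A perfect 3D-matching yields a choice of edge per group and a compatible matching inside each gadget. Conversely, the group constraint forces a consistent binary choice per triple, and the element gadgets force each element to be covered exactly once.

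\textbf{Main obstacle.} The hardest step will be meeting restrictions (i)--(iii) simultaneously. Every vertex must have degree exactly three, and every group must consist of exactly three pairwise disjoint edges, so every edge must lie in some group. Padding edges therefore also need groups, and these groups must not introduce unintended freedom. My first attempt will be to design the gadgets so that all padding edges come in parallel "ladder" structures. There, any perfect matching and any group-consistent choice are forced. I will verify this by a small case analysis per gadget.

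If this becomes unwieldy, I would instead reduce from \textsc{Positive 1-in-3 SAT} with each variable occurring exactly three times, which is known to be \NP-complete. Each clause then becomes a group of three disjoint edges, one per literal occurrence, and each variable becomes an even cycle of length $6$ whose two perfect matchings encode true and false. Degree three is enforced by the single group edge at each occurrence vertex.
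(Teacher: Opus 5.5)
Your membership argument is fine, but the hardness part has a genuine gap: no reduction is actually specified. The pieces you do describe are either undefined or clash with the restrictions. The vertices $p'_{t,y}$ are never introduced. The element gadget is only described as ``in the spirit of'' standard cubic gadgets, and the padding as ``$K_{3,3}$-like''. A group consists of three pairwise disjoint edges, so it cannot contain ``an auxiliary path of length three''. And since only one edge of a group is ever selected, ``every $p_{t,c}$ matched internally'' cannot be a state of the group itself.

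More importantly, you name the real obstacle but do not resolve it. Because every element lies in exactly three sets, $|\mathcal S|=3|U|/2$ and $|\mathcal G|=|U|/2$, which is exactly the number of edges in any perfect matching. Hence in every solution \emph{every} group, padding groups included, is hit exactly once. Padding is never neutral: each padding group forces one of its edges into the matching and so interacts with the gadget vertices it touches. You give no case analysis showing this creates no unintended freedom and blocks no intended configuration.

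Your fallback from Positive 1-in-3 SAT breaks on exactly this point. Read literally, each variable gadget is a $6$-cycle whose three occurrence edges are chords, giving $K_{3,3}$ or the prism. The six cycle edges must then themselves be partitioned into groups. With the natural grouping (the two perfect matchings of the cycle), both intended states violate the group condition: ``true'' selects all three chords and no cycle edge, while ``false'' selects a full perfect matching of the cycle, i.e.\ three edges of one group. The configurations that are valid use one edge from each cycle group plus exactly one chord. So every variable selects exactly one of its occurrences, which encodes a bipartite perfect matching between variables and clauses rather than a satisfying assignment. Any other grouping of the cycle edges across gadgets would have to be hit exactly once under every satisfying assignment, and you do not arrange this.

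The missing idea is to let the group structure come for free from colours. Reduce from proper $3$-edge-colourability of simple cubic graphs (Holyer): put $U=V(H)\times\{1,2,3\}$, $S_{e,j}=\{(v,j),(w,j)\}$ for $e=vw$, and $g_e=\{S_{e,1},S_{e,2},S_{e,3}\}$. Restrictions (i)--(iii) then hold automatically with no padding at all. Selecting $S_{e,j}$ corresponds to giving $e$ colour $j$, and exact coverage of each $(v,j)$ says precisely that the colouring is proper at $v$.
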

\begin{proof}
Membership in \NP\ is immediate.  For hardness, reduce from proper
three-edge-colorability of simple cubic graphs, which is \NP-complete
by Holyer~\cite{Holyer1981}.

Given a simple cubic graph $H=(V,E)$, put $U=V\times\{1,2,3\}$.
For every $e=vw\in E$ and $j\in\{1,2,3\}$, introduce
\[
 S_{e,j}=\{(v,j),(w,j)\},
 \qquad
 g_e=\{S_{e,1},S_{e,2},S_{e,3}\}.
\]
Let $\mathcal S=\{S_{e,j}:e\in E,\ j\in\{1,2,3\}\}$ and
$\mathcal G=\{g_e:e\in E\}$.  Simplicity of $H$ ensures that all sets
are distinct.  Each $(v,j)$ belongs to precisely the three sets
corresponding to edges incident with $v$.  Each group consists of three
pairwise disjoint two-element sets.

If $\chi:E\to\{1,2,3\}$ is a proper edge-coloring, select
$S_{e,\chi(e)}$ for each $e$.  Exactly one edge of each color is incident
with each vertex of $H$, so every element $(v,j)$ is covered exactly once.
Conversely, a feasible subfamily selects a unique $S_{e,j}$ for every
$e$; give $e$ color $j$.  Exact coverage of $(v,j)$ says that precisely
one edge incident with $v$ has color $j$, so the coloring is proper.
The construction has polynomial size and is computable in polynomial time.
\end{proof}

Fix a nonempty instance $(U,\mathcal S,\mathcal G)$ and let $n=|U|$.
Counting incidences gives $|\mathcal S|=3n/2$ and $|\mathcal G|=n/2$;
in particular, $n$ is even.  The empty instance is handled separately.

\subsection{Construction of the graph}
\label{subsec:cpm-construction}

The graph has an element side, consisting of primary and dummy vertices
with leaves and a spine, and an incidence side, consisting of one set
gadget for each set, combined into group gadgets.  For each color of
elements, two interchanges connect these sides by element arms and
incidence arms.  Figure~\ref{fig:cpm-construction} illustrates the whole
construction. 

\begin{lemma}[Coloring the elements]
\label{lem:cpm-coloring}
In polynomial time, the elements can be colored with at most $16$ colors
so that the six elements occurring in any one group have distinct colors.
\end{lemma}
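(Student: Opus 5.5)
We plan to reduce the statement to greedy coloring of an auxiliary \emph{conflict graph} $C$ on vertex set $U$. In $C$, two distinct elements $x,y$ are adjacent whenever some group $g\in\mathcal G$ contains both, that is, $x,y\in\bigcup g$. Each group has three pairwise disjoint two-element sets, so $\bigcup g$ consists of exactly six elements. A coloring of $U$ in which every such six-element set is rainbow is precisely a proper coloring of $C$. It therefore suffices to bound the maximum degree $\Delta(C)$ and then invoke the standard greedy algorithm, which uses at most $\Delta(C)+1$ colors in polynomial time.

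To bound $\Delta(C)$, fix an element $x$. By restriction (i), $x$ lies in exactly three sets of $\mathcal S$. Each set belongs to exactly one group, so $x$ occurs in at most three groups. In each such group $g$, the neighbours of $x$ contributed by $g$ are the other five elements of $\bigcup g$. Hence $\deg_C(x)\le 3\cdot 5=15$, and greedy coloring along any vertex order uses at most $16$ colors.

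Constructing $C$ takes time polynomial in $|U|+|\mathcal S|$, since we simply enumerate groups and add a clique on each six-element union. Greedy coloring is linear in the size of $C$.

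There is essentially no obstacle here. The only point to verify is that the three sets of a group are pairwise disjoint, which gives exactly six distinct elements; otherwise "six elements" would be ill-defined. If the three groups containing $x$ share other elements, the degree only decreases, so the bound of $15$ is safe.
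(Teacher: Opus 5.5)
Your proposal is correct and matches the paper's proof: you build the same conflict graph on $U$, bound its maximum degree by $3\cdot 5=15$ because each element lies in three sets and hence in at most three groups, and apply greedy coloring to get at most $16$ colors in polynomial time. The paper notes that disjointness within a group places the three sets in three different groups, whereas you only need ``at most three'', which suffices for the upper bound.
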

\begin{proof}
Form the conflict graph on $U$ in which two distinct elements are adjacent
if they occur in the union of some group.  An element belongs to three
sets and, because sets in a group are disjoint, to three different groups.
In each group it has at most five other elements as neighbors.  Therefore
the conflict graph has maximum degree at most $15$.  A greedy coloring
uses at most $16$ colors.  Both the graph and its coloring can be computed
in polynomial time.
\end{proof}

For each element $u$, let $c(u)$ be its color.  We use only nonempty
color classes and write $n_c$ for the number of elements of color $c$.

Independently, choose an arbitrary bijection $i:U\to\{1,\ldots,n\}$.
Set
\begin{equation}
 B=n+2,\qquad L=10((4n-1)B+n+1),\qquad k=12n+1.
 \label{eq:cpm-parameters}
\end{equation}

We use a row index $\alpha\in\{\Top,\Bot\}$ and an interchange sign
$\beta\in\{+,-\}$.  A bar reverses a row or a sign.  Write
$\eps_+=1$ and $\eps_-=-1$.  

\paragraph{Interchanges.}
For each color $c$, construct two disjoint, identically structured
interchanges $X_c^+$ and $X_c^-$.  Interchange $X_c^\beta$ has six gates:
\[
 g_{c,\Top}^{\beta},g_{c,\Bot}^{\beta},\qquad
 p_{c,\Top}^{\beta},p_{c,\Bot}^{\beta},\qquad
 d_{c,\Top}^{\beta},d_{c,\Bot}^{\beta}.
\]
These are, respectively, its group gates, primary element gates, and
dummy element gates.  Join the two group gates by an edge, and make the
four element gates a clique.  There are no edges directly joining a
group gate to an element gate.

The interchange also contains three mutually disjoint sets of internal
vertices.  Their union is independent, and their neighborhoods are:
\begin{center}
\begin{tabular}{lll}
\toprule
Internal set & Size & Neighborhood of each vertex\\
\midrule
$R_c^\beta$ & $2n_c$ &
$\{g_{c,\Top}^\beta,g_{c,\Bot}^\beta,
    p_{c,\Top}^\beta,p_{c,\Bot}^\beta\}$\\
$D_{c,\Top}^\beta$ & $2n_c$ &
$\{g_{c,\Top}^\beta,d_{c,\Top}^\beta\}$\\
$D_{c,\Bot}^\beta$ & $2n_c$ &
$\{g_{c,\Bot}^\beta,d_{c,\Bot}^\beta\}$\\
\bottomrule
\end{tabular}
\end{center}
Thus there are $6n_c$ internal vertices per interchange.

\paragraph{Element vertices and element leaves.}
For every $u\in U$, introduce four vertices
\[
 p_u^\Top,p_u^\Bot,\qquad d_u^\Top,d_u^\Bot.
\]
Attach two private leaves to each primary vertex $p_u^\alpha$, and
four private leaves to each dummy vertex $d_u^\alpha$.
These twelve leaves are the \emph{element leaves} of $u$.
The element vertices carry a row index but \emph{no sign index}:
each is connected to both signed interchanges of its color.

\paragraph{Element arms.}
For each $u\in U$ and each $(\alpha,\beta)$, add one primary element arm
from $p_u^\alpha$ to $p_{c(u),\alpha}^\beta$ and two dummy element arms
from $d_u^\alpha$ to $d_{c(u),\alpha}^\beta$.  The two dummy arms have
the same endpoints but disjoint internal vertices; they are distinct
paths, not parallel edges.  Every one of these arms has
\begin{equation}
 L+\eps_\beta i(u)\ \text{internal vertices, and edge length}\quad
 L+\eps_\beta i(u)+1.
 \label{eq:cpm-element-arm-length}
\end{equation}
An arm incident with $X_c^\beta$ on the element side has sign $\beta$.
There are four primary and eight dummy element arms for $u$, with six
top arms and six bottom arms.

\paragraph{The spine.}
Write $u_1,\ldots,u_n$ for the elements in increasing index order.
List the $4n$ element vertices as
\[
 p_{u_1}^{\Top},p_{u_1}^{\Bot},d_{u_1}^{\Top},d_{u_1}^{\Bot},
 \ldots,
 p_{u_n}^{\Top},p_{u_n}^{\Bot},d_{u_n}^{\Top},d_{u_n}^{\Bot}.
\]
Join each consecutive pair by a path of edge length $B$, with $B-1$
new internal vertices.  All these interiors are mutually disjoint
and disjoint from the arms and gadgets.  Their union is the
\emph{spine}, of length $(4n-1)B$.  Attach a new leaf $s_\star$ to its first
vertex and a new leaf $t_\star$ to its last vertex.  These are the two
\emph{auxiliary leaves}.  Let $Q_{\mathrm{sp}}$ be the path from
$s_\star$ to $t_\star$ along the spine, of length $(4n-1)B+2$.
The element leaves and the auxiliary leaves are collectively
called \emph{spine-side leaves}; the spine together with all these leaves
is the \emph{spine component} when arm interiors are deleted.
Spine segments are not called arms.

\paragraph{Set gadgets.}
Choose an ordering $S=(u,v)$ of the two elements of every set $S$.
The set gadget $H_S$ contains eight vertex-disjoint paths
\[
 P_{S,j}=f_{S,j}\,m_{S,j}\,\ell_{S,j},\qquad j\in\{1,\ldots,8\},
\]
each of edge length two.  The vertices $f_{S,j}$, $m_{S,j}$, and
$\ell_{S,j}$ are its first, middle, and last vertices.  The row labels are
fixed by the following table:
\begin{center}
\begin{tabular}{c*{8}{c}}
\toprule
$j$ & 1&2&3&4&5&6&7&8\\
\midrule
element $u(S,j)$ & $u$&$u$&$u$&$u$&$v$&$v$&$v$&$v$\\
row $\alpha(j)$ & $\Top$&$\Bot$&$\Top$&$\Bot$&$\Top$&$\Bot$&$\Top$&$\Bot$\\
interchange $\beta(j)$ & $+$&$+$&$-$&$-$&$+$&$+$&$-$&$-$\\
\bottomrule
\end{tabular}
\end{center}
Thus odd-numbered paths are top paths and even-numbered paths are bottom
paths.  Introduce gathering vertices $h_S^\Top,h_S^\Bot$ and attach four
private leaves to each.  Join $h_S^\alpha$ to $f_{S,j}$ whenever
$\alpha(j)=\alpha$.  The eight new leaves are the \emph{group leaves}
belonging to $S$.

\paragraph{Group gadgets.}
For $g=\{S_1,S_2,S_3\}$, form the group gadget $H_g$ from
$H_{S_1},H_{S_2},H_{S_3}$ and the following vertices and edges.
Introduce sixteen blocking vertices
\[
 y_{g,1},\ldots,y_{g,8},\qquad z_{g,1},\ldots,z_{g,8}.
\]
Let $j^+=j+1$ for $j<8$ and $8^+=1$; likewise let $j^-$ denote the
cyclic predecessor.  For every $j$ and every $S\in g$, add
\[
 y_{g,j}f_{S,j},\quad z_{g,j}f_{S,j},\qquad
 y_{g,j}m_{S,j^+},\quad z_{g,j}m_{S,j^+}.
\]
Finally, for every ordered pair of distinct sets $S,T\in g$, add all
edges
\begin{equation}
 f_{S,j}\ell_{T,r},\qquad j,r\in\{1,\ldots,8\}.
 \label{eq:cpm-cross-bicliques}
\end{equation}
Thus $H_g$ has $94$ nonleaf vertices and $24$ group leaves.

\paragraph{Incidence arms.}
For each $S\in\mathcal S$ and row $j$, put
$u=u(S,j)$, $c=c(u)$, $\alpha=\alpha(j)$, and $\beta=\beta(j)$.
Connect $\ell_{S,j}$ to $g_{c,\alpha}^{\beta}$ by an incidence arm
$I_{S,j}$ with
\begin{equation}
 L-\eps_\beta i(u)\ \text{internal vertices, and edge length}\quad
 L-\eps_\beta i(u)+1.
 \label{eq:cpm-incidence-arm-length}
\end{equation}
Its sign is $\bar\beta$, the \emph{opposite} of its interchange sign.
Each occurrence $(S,u)$ has four incidence arms, one for each
$(\alpha,\beta)$.  Since $u$ belongs to three sets, it has twelve
incidence arms in total.

All arms have mutually disjoint internal vertices, and these vertices
have no other neighbors.  No edges are added beyond those specified
above.

For reference, the sign convention is summarized by
\begin{center}
\begin{tabular}{lcc}
\toprule
Arm & Interchange & Number of internal vertices\\
\midrule
positive element arm & $X_c^+$ & $L+i(u)$\\
negative element arm & $X_c^-$ & $L-i(u)$\\
positive incidence arm & $X_c^-$ & $L+i(u)$\\
negative incidence arm & $X_c^+$ & $L-i(u)$\\
\bottomrule
\end{tabular}
\end{center}
In particular, the two arms of a route through $X_c^\beta$, from an
occurrence of $u$ to an element vertex of $v$, contribute
\begin{equation}
 (L-\eps_\beta i(u)+1)+(L+\eps_\beta i(v)+1)
   =2L+2+\eps_\beta\bigl(i(v)-i(u)\bigr).
 \label{eq:cpm-arm-cancellation}
\end{equation}
This is $2L+2$ when $v=u$.

This completes the simple unweighted graph $G$.  The output instance is
$(G,k)$ with $k=12n+1$.  

\begin{figure}[htbp]
\centering
\includegraphics[width=\linewidth]{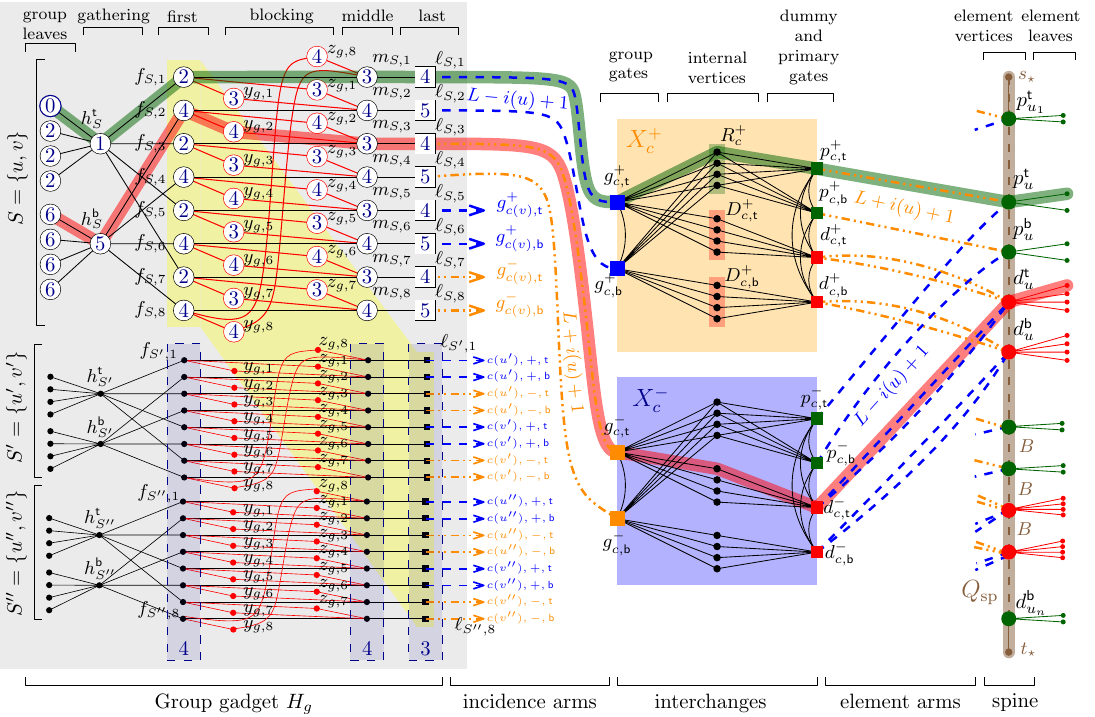}
\caption{The group gadget $H_g$ for $g=\{S,S',S''\}$, its connections to
 the interchanges, and the element side with the spine $Q_{\mathrm{sp}}$.
 Only the interchanges for $c=c(u)$ are detailed; the other incidence
 arms are labeled by their destinations.  Blue dashed arms are negative
 and orange dotted arms are positive.  The yellow part represents the biclique between the first vertices of the first set, and the last vertices of the two other sets. The two other bicliques are not shown. The highlighted green and red
 paths respectively illustrate a primary and a dummy path emerging from $H_S$. Blue numbers give distances
 from the group leaf marked $0$.  The brown spine joins
 $s_\star$ to $t_\star$ through the element vertices. }
\label{fig:cpm-construction}
\end{figure}

\subsection{Pathwidth and size of the construction}
\label{subsec:cpm-structure}

\begin{lemma}[Pathwidth and construction time]
\label{lem:cpm-pathwidth}
\label{lem:cpm-size}
The graph $G$ has pathwidth at most $287$ and can be constructed in
polynomial time.  It has $O(n^3)$ vertices and edges.
\end{lemma}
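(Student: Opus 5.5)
The plan is to take a small set of ``gate'' vertices, put it into every bag, and show that every component left after deleting it has a path-decomposition with tiny bags. Let $Z$ be the set of all interchange gates. Every interchange has six gates, and there are at most $16$ colors by Lemma~\ref{lem:cpm-coloring}, each with two interchanges $X_c^+$ and $X_c^-$. So $|Z|\le 16\cdot 2\cdot 6=192$. If every component $C$ of $G-Z$ has a path-decomposition with bags of size at most $96$, we are done. We concatenate these decompositions in an arbitrary order and add $Z$ to every bag. The result is a path-decomposition of $G$: every edge of $G$ lies inside some component, joins a component to $Z$, or lies inside $Z$, and the vertices of each component appear in consecutive bags. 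Its bags have size at most $192+96=288$, so its width is at most $287$.

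Next I classify the components of $G-Z$ using the adjacency description of the construction. Every internal vertex of an interchange has all its neighbors among the gates, so it becomes an isolated vertex. Every element arm and every incidence arm has exactly one endpoint that is a gate: the element gates $p^\beta_{c,\alpha}$ and $d^\beta_{c,\alpha}$ and the group gates $g^\beta_{c,\alpha}$, respectively. After deleting $Z$, each arm therefore becomes a pendant path hanging from an element vertex or from some $\ell_{S,j}$. No edges join different group gadgets, nor a group gadget and the spine, except through gates. Hence the remaining components are of two kinds. The first is the spine component together with all element arms and the spine-side leaves, which is a single tree. The others are the group gadgets $H_g$, each with its $24$ group leaves and its incidence arms attached as pendant paths.

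For the spine tree, the spine $Q_{\mathrm{sp}}$ is a path, and everything else hangs from spine vertices as pendant paths (leaves are paths of length one). I sweep along the spine. For the current spine vertex $x$, I process each pendant path $x,a_1,\dots,a_m$ with the bags $\{x,a_1\},\{x,a_1,a_2\},\{x,a_2,a_3\},\dots$, that is, $x$ together with two consecutive path vertices. Afterwards I use the bag $\{x,x'\}$ to move on to the next spine vertex $x'$. The bags have size at most $3$. For a group component, I put all $94$ nonleaf vertices of $H_g$ into every bag. Each remaining piece is a leaf or an incidence-arm interior, and is a pendant path attached at some vertex of $H_g$. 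I handle each such piece in turn by adding at most two consecutive path vertices at a time, which gives bags of size at most $96$. The only point needing care is the count of $94$ nonleaf vertices, which the construction states: $3\cdot(24+2)=78$ set-gadget vertices plus $16$ blocking vertices. This exact count is what pins the bound at $287$.

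Finally, I bound the size. The parameters satisfy $B=O(n)$ and $L=O(nB)=O(n^2)$. There are $O(n)$ arms, namely $12$ element arms per element and $8$ incidence arms per set. Each arm has $L\pm O(n)=O(n^2)$ internal vertices, so together they contribute $O(n^3)$ vertices and edges. The spine has $O(nB)=O(n^2)$ vertices. The interchanges and the group gadgets contribute $O(n)$ vertices, with a constant number of edges per internal vertex and per group; this includes the cross bicliques of~\eqref{eq:cpm-cross-bicliques}. Hence $G$ has $O(n^3)$ vertices and edges. Polynomial construction time follows from Lemma~\ref{lem:cpm-coloring} and the explicit description. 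I expect the main obstacle to be only the bookkeeping: checking that no edge escapes the three cases above, in particular that the group gates are the only attachments of the incidence arms and that the element gates are the only attachments of the element arms.
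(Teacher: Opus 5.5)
Your proof is correct and follows essentially the same route as the paper: keep the at most $192$ gates in every bag, handle the internal interchange vertices individually, sweep the spine while scanning its leaves and element arms two vertices at a time, and process each group gadget by retaining its $94$ nonleaf vertices, which gives bags of size at most $192+94+2=288$. Organizing the argument around the components of $G-Z$ and concatenating their decompositions is only a repackaging of the paper's construction, and your $O(n^3)$ size count matches the paper's.
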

\begin{proof}
Keep all gates in every bag; there are at most $16\cdot2\cdot6=192$.
Process internal interchange vertices individually.  Then scan the
spine in order, retaining each element vertex while processing its
leaves and scanning its arms with two consecutive arm vertices.
These steps need at most three vertices in addition to the gates.

Finally, process each group gadget $H_g$ separately.  Retain its $94$
nonleaf vertices, introduce its leaves one at a time, and scan each
incidence arm using two consecutive arm vertices.  The largest bag
has at most $192+94+2=288$ vertices.  Every edge is covered and each
vertex occurs in an interval of bags, proving the pathwidth bound.

There are $24n$ arms, each of length $O(L)=O(n^2)$.  The group gadgets,
interchange vertices, and element vertices contribute $O(n)$ vertices
and edges, and the spine contributes $O(nB)=O(n^2)$.  The construction
therefore has $O(n^3)$ vertices and edges.  All steps, including the
coloring, take polynomial time.
\end{proof}

\subsection{Constructing a cover from a constrained perfect matching}
\label{subsec:cpm-forward}

\paragraph{Primary and dummy paths.}
For an occurrence $(S,u)$, choose a row $j$ for $u$ and put
$\alpha=\alpha(j)$, $\beta=\beta(j)$, and $c=c(u)$.
An \emph{$(S,u)$-primary path} starts at a leaf of $h_S^\alpha$, follows
\[
 h_S^\alpha,f_{S,j},m_{S,j},\ell_{S,j},
\]
then follows $I_{S,j}$ to $g_{c,\alpha}^\beta$, a vertex of $R_c^\beta$
to $p_{c,\delta}^\beta$, and the primary element arm to $p_u^\delta$,
where $\delta\in\{\Top,\Bot\}$.  It ends at an element leaf
of $p_u^\delta$.  The forward construction uses $\delta=\alpha$. An example is shown in green in Figure \ref{fig:cpm-construction}. 

An \emph{$(S,u)$-dummy path} starts at a leaf of $h_S^{\bar\alpha}$,
follows
\[
 h_S^{\bar\alpha},f_{S,j^-},w,m_{S,j},\ell_{S,j},
 \qquad w\in\{y_{g,j^-},z_{g,j^-}\},
\]
where $g$ contains $S$, then follows $I_{S,j}$ to $g_{c,\alpha}^\beta$,
a vertex of $D_{c,\alpha}^\beta$ to $d_{c,\alpha}^\beta$, and a dummy
element arm to $d_u^\alpha$.  It ends at an element leaf there.
The label $u$ refers to the exit row $j$: at the transitions $4\to5$
and $8\to1$, the preceding first vertex belongs to the other element
of $S$.  The detour nevertheless stays in the same set gadget.  An example is shown in red in Figure \ref{fig:cpm-construction}. 

\begin{remark}
Primary paths avoid the blocking vertices; each dummy path covers
exactly one.  This distinction enforces the choice of one set per group.
\end{remark}

\begin{lemma}[Distances from a group leaf]
\label{lem:cpm-local-distances}
Let $S,T\in g$ and let $\lambda$ be a leaf of $h_S^\alpha$.  Then
\begin{equation}
 \cpmDistance_G(\lambda,\ell_{T,j})=
 \begin{cases}
  3,&T\ne S,\\
  4,&T=S\text{ and }\alpha(j)=\alpha,\\
  5,&T=S\text{ and }\alpha(j)\ne\alpha.
 \end{cases}
 \label{eq:cpm-group-distances}
\end{equation}
\end{lemma}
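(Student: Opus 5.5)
Each case needs an upper bound from an explicit walk and a matching lower bound from the structure around $\lambda$. Everything happens inside the group gadget $H_g$, since any route leaving it uses an arm of length about $L$. So the plan is to determine the neighbourhood layers of $\lambda$ up to radius $4$ and check where each $\ell_{T,j}$ first appears.

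\textbf{Upper bounds.} For $T\ne S$, pick any $j'$ with $\alpha(j')=\alpha$. Then $\lambda,h_S^\alpha,f_{S,j'},\ell_{T,j}$ is a walk of length $3$, using the cross biclique~\eqref{eq:cpm-cross-bicliques}. For $T=S$ and $\alpha(j)=\alpha$, the walk $\lambda,h_S^\alpha,f_{S,j},m_{S,j},\ell_{S,j}$ has length $4$. For $T=S$ and $\alpha(j)\ne\alpha$, note that $j^-$ has the opposite parity to $j$, so $\alpha(j^-)=\alpha$. Then $\lambda,h_S^\alpha,f_{S,j^-},y_{g,j^-},m_{S,j},\ell_{S,j}$ has length $5$.

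\textbf{Lower bounds via BFS layers.} The only neighbour of $\lambda$ is $h_S^\alpha$. The neighbours of $h_S^\alpha$ are its leaves and the vertices $f_{S,j'}$ with $\alpha(j')=\alpha$. So layer $2$ consists of these four first vertices, and in particular no $\ell$-vertex is at distance at most $2$. This settles $T\ne S$.

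By construction, the neighbours of $f_{S,j'}$ are exactly the following: $h_S^\alpha$, $m_{S,j'}$, the blocking vertices $y_{g,j'},z_{g,j'}$, and the vertices $\ell_{T,r}$ for $T\ne S$. In particular, $f_{S,j'}$ is adjacent to no vertex of $S$'s own $\ell$-row and to no interchange vertex. Hence layer $3$ consists of the $m_{S,j'}$ and blocking vertices with $\alpha(j')=\alpha$, together with all $\ell_{T,r}$ for $T\ne S$. No $\ell_{S,\cdot}$ appears in layer $3$, so $\mathrm{dist}\ge4$ in the last two cases.

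For the third case, examine layer $4$. The vertices $\ell_{S,j}$ have neighbours $m_{S,j}$, the vertices $f_{T,r}$ for $T\ne S$, and the first vertex of the arm $I_{S,j}$. The neighbours in layer $3$ are only $m_{S,j}$ with $\alpha(j)=\alpha$; the $f_{T,r}$ lie at distance $4$ (via $\ell_{T,\cdot}$), and the arm vertices are even farther. So $\ell_{S,j}$ with $\alpha(j)\ne\alpha$ is not at distance $4$, and its distance is $5$.

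\textbf{Main obstacle.} The step needing most care is excluding shortcuts through the blocking vertices and through arms. A blocking vertex $y_{g,j'}$ in layer $3$ is adjacent only to the vertices $f_{\cdot,j'}$ and $m_{\cdot,j'^+}$ across the three sets, never to an $\ell$-vertex, so it cannot reach $\ell_{S,j}$ in one step. Any path leaving $H_g$ through an arm has length at least $L\gg5$. I would carry out this bookkeeping by listing the full neighbourhood of each vertex type in $H_g$ directly from the construction.
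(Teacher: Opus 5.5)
Your proof is correct and takes the same route as the paper. You give explicit walks for the three upper bounds, then use the set of vertices within distance three of $\lambda$ (no $\ell$-vertex before layer $3$, no $\ell_{S,\cdot}$-vertex in layer $3$, and no layer-$3$ neighbour of an opposite-parity $\ell_{S,j}$) for the lower bounds, with arms too long to give shortcuts; your layer-by-layer enumeration just spells out the paper's terse lower-bound argument. One harmless slip: for $T\ne S$ the vertex $f_{T,r}$ is not adjacent to any $\ell_{T,\cdot}$, since cross-biclique edges join $f$-vertices only to $\ell$-vertices of a \emph{different} set; it reaches distance $4$ via the $\ell$-vertices of the third set of $g$, but your argument only needs $f_{T,r}$ to lie outside layers $0$ to $3$, which your description of layer $3$ already gives.
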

\begin{proof}
Cross edges reach a last vertex of another set in three steps; a direct
row reaches a last vertex of $S$ with matching parity in four; and the
blocking detour reaches the opposite parity in five.  These routes are
shortest in $H_g$: the vertices reached in three steps are middle
vertices of the starting parity, blockers, and last vertices of other
sets, none adjacent to an opposite-parity last vertex of $S$.
A path leaving $H_g$ and returning uses at least two arms, of total
length at least $2(L-n+1)>5$, so it cannot improve these distances.
\end{proof}

\begin{lemma}[The prescribed paths are isometric]
\label{lem:cpm-isometric}
Every primary path has length $2L+9$, and every dummy path has length
$2L+10$.  These paths and $Q_{\mathrm{sp}}$ are isometric in $G$.
\end{lemma}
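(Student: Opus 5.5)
The length claims come from counting edges along the prescribed routes, and the isometry claims from lower bounds on the distance between endpoints. For an $(S,u)$-primary path on row $j$ with $\alpha=\alpha(j)$ and $\beta=\beta(j)$, the local part from the group leaf $\lambda$ to $\ell_{S,j}$ has length $4$. The crossing $g_{c,\alpha}^\beta\to R_c^\beta\to p_{c,\alpha}^\beta$ has length $2$. The final leaf edge contributes $1$. The two arms together contribute $2L+2$ by \eqref{eq:cpm-arm-cancellation} with $v=u$. This totals $2L+9$. A dummy path instead has local part of length $5$ (via the blocker, matching Lemma~\ref{lem:cpm-local-distances}) and crosses through $D_{c,\alpha}^\beta$ in $2$ steps, giving $2L+10$. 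For $Q_{\mathrm{sp}}$ the length is $(4n-1)B+2$ by construction.

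For isometry of $Q_{\mathrm{sp}}$, I will note that any $s_\star$--$t_\star$ path not contained in the spine component must traverse at least two arms. Each arm has length at least $L-n+1$. By the choice $L=10((4n-1)B+n+1)$, two arms already exceed $(4n-1)B+2$.

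For primary and dummy paths, I will lower-bound every path $W$ from a group leaf $\lambda$ of $h_S^{\alpha'}$ (in $H_g$) to an element leaf of $u$. Since arm interiors have no other neighbours, $W$ must use an incidence arm $I_{T,r}$ and then an element arm into the spine component. I will decompose $|W|$ into five terms:
\begin{itemize}[nosep]
\item the distance from $\lambda$ to $\ell_{T,r}$, given by Lemma~\ref{lem:cpm-local-distances};
\item the length of the incidence arm;
\item at least $2$ steps inside the interchange, since group gates and element gates are non-adjacent and every internal vertex lies in exactly one of the sets $R,D$;
\item the length of the element arm, ending at some element vertex $x$ of some element $v$ with $c(v)=c(u(T,r))$;
\item the spine distance from $x$ to the target element vertex, plus $1$ for the leaf.
\end{itemize}
Any route using more than two arms costs at least an extra $2(L-n+1)$, which is far too much. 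I would dispose of that case first.

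The main step is a case analysis showing that the possible savings never beat the spine penalty $B=n+2$. The potential savings are of three kinds: up to $1$ or $2$ from a shorter local distance (for instance $3$ via the cross bicliques when $T\neq S$), up to $n-1$ from arm-length mismatch in \eqref{eq:cpm-arm-cancellation} when $v\neq u$, and $0$ otherwise.
\begin{enumerate}[label=(\alph*),nosep]
\item If $T\neq S$, the element $u(T,r)$ differs from $u$ and, by Lemma~\ref{lem:cpm-coloring}, has a different colour. Hence $v\neq u$, so $x$ is not an element vertex of $u$ and the spine contributes at least $B$, while the total saving is at most $1+(n-1)+1<B$.
\item If $T=S$ but $W$ ends at the wrong element vertex of $u$ (wrong row, or primary versus dummy), the spine again costs at least $B$.
\item If $T=S$, $v=u$ and $W$ reaches the correct vertex, then $W$ either copies the prescribed route or trades one step of local distance for one extra interchange step. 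The latter happens for a dummy path entering through a row of parity $\bar\alpha$ (local distance $4$) and then needing $g\to D\to d_{c,\bar\alpha}\to d_{c,\alpha}$ or $g\to R\to p\to d$, i.e.\ $3$ steps instead of $2$. Such a route is no shorter.
\end{enumerate}
This bookkeeping, particularly the clique among element gates allowing row or primary/dummy switches inside the interchange, is the part I expect to need the most care. I would tabulate all (row, gate type) combinations explicitly to make sure each switch costs at least the step it saves.
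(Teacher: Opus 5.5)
Your proposal is correct and follows essentially the same argument as the paper: edge counting for the lengths, ruling out four or more arms, a spine penalty of $B=n+2$ outweighing at most $n+1$ total savings ($2$ from the group prefix plus $n-1$ from arm mismatch), the colouring to force the incidence arm to belong to $(S,u)$, and the $5+2$ versus $4+3$ trade-off for dummy endpoints. The paper splits first on whether the element arm lands at the target element vertex rather than on $T\neq S$. That ordering also covers a case your enumeration skips, namely $T=S$ but using the incidence arm of the other element of $S$, which the same spine-penalty bound handles.
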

\begin{proof}
The two arms have total length $2L+2$.  A primary path additionally
uses four edges from its group leaf to the last vertex, two inside
the interchange, and one final leaf edge.  A dummy path uses five,
two, and one such edges, respectively.

A competing shortest path uses an even number of arms, and at least
two.  Four arms already cost at least $4(L-n+1)>2L+10$, so it uses
exactly two.  If it reaches the spine at an element vertex different from
the target leaf's neighbor, it traverses at least one spine segment of
length $B$.  Its arms have total length at least
$2L+2-(n-1)=2L+3-n$ by~\eqref{eq:cpm-arm-cancellation}.  Its group
prefix has length at least three, its interchange segment at least
two, and its final leaf edge has length one.  Consequently its length
is at least
\[
 2L+9-n+B=2L+11,
\]
which is too long.  Thus a shortest competitor uses no spine edge and
its element arm is incident with the prescribed target element vertex.

The interchange has color $c(u)$.  Since $u$ is the unique element of
that color in the source group, its incidence arm belongs to $(S,u)$.
For a primary endpoint, the minimum group-prefix and interchange
lengths are four and two.  For a dummy endpoint of parity $\alpha$
starting at $h_S^{\bar\alpha}$, exiting through parity $\alpha$ costs
$5+2$ edges, while exiting through parity $\bar\alpha$ costs $4+3$.
The group distances follow from Lemma~\ref{lem:cpm-local-distances};
the gate distances follow directly from the interchange.  Either sign
gives arm sum $2L+2$.
The stated path lengths therefore attain the endpoint distances.

Finally, $Q_{\mathrm{sp}}$ has length $(4n-1)B+2$.  Any competing path
that leaves the spine uses at least two arms, of total length at least
$2(L-n+1)>(4n-1)B+2$.  Thus $Q_{\mathrm{sp}}$ is isometric as well.
\end{proof}

\paragraph{The cover.}
Let $\mathcal F$ be a constrained perfect matching.  For every occurrence
$(S,u)$ and each of its four rows, take a primary path if $S\in\mathcal F$
and a dummy path otherwise.  Primary paths use the element gate of
their exit row.  In each group, use the $y$-blockers for one unselected
set and the $z$-blockers for the other.  Assign leaves, parallel dummy
arms, and internal interchange vertices bijectively to the paths using
them, and add $Q_{\mathrm{sp}}$.

\begin{lemma}[Forward implication]
\label{lem:cpm-forward}
If the source instance has a constrained perfect matching, then $G$
has an isometric path cover of size $k=12n+1$.
\end{lemma}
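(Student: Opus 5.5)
We verify directly that the collection described in the paragraph ``The cover'' has the right size, consists of isometric paths, and covers every vertex. For the size: every element lies in exactly three sets, so there are $3n$ occurrences $(S,u)$. Each occurrence has four rows $j$ with $u(S,j)=u$, one for each pair $(\alpha,\beta)$, and contributes one path per row. This gives $12n$ primary or dummy paths, and with $Q_{\mathrm{sp}}$ we get $k=12n+1$. Isometricity is immediate from Lemma~\ref{lem:cpm-isometric}, once we check that the prescribed sequences really are paths of $G$. For a dummy path of row $j$, the vertex $f_{S,j^-}$ has row $\alpha(j^-)=\bar\alpha$, so it is adjacent to $h_S^{\bar\alpha}$. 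Moreover $y_{g,j^-}$ and $z_{g,j^-}$ are adjacent to both $f_{S,j^-}$ and $m_{S,(j^-)^+}=m_{S,j}$. The remaining steps use arms and internal interchange vertices in exactly the way the construction prescribes.

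The main work is to check that the bijective assignments of shared resources exist. This is a counting argument, done resource by resource; I expect it to be the only real obstacle.

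\emph{Gathering leaves.} The four leaves of $h_S^\Top$ are used by the paths that start at $h_S^\Top$. If $S\in\mathcal F$, these are the primary paths of the four top rows $1,3,5,7$. If $S\notin\mathcal F$, they are the dummy paths of the four bottom rows. In both cases there are exactly four such paths, and symmetrically for $h_S^\Bot$.

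\emph{Element leaves.} The leaves of $p_u^\alpha$ receive the primary paths of the unique selected set containing $u$ in rows of type $\alpha$. There are two of them, one for each $\beta$, matching the two leaves. The leaves of $d_u^\alpha$ receive dummy paths from the two unselected sets containing $u$, two each, so four paths for four leaves.

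\emph{Arms.} Each incidence arm is used by exactly one path. Each primary element arm is used by one primary path. The two dummy arms between $d_u^\alpha$ and $d^\beta_{c(u),\alpha}$ are used by the two dummy paths from the two unselected sets of $u$.

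\emph{Interchange vertices.} In $X_c^\beta$, the primary paths are those of the $n_c$ elements of color $c$ in their two rows with sign $\beta$. That makes $2n_c$ paths, matching $|R_c^\beta|$. Likewise, the dummy paths with row $\alpha$ and sign $\beta$ number $2\cdot n_c$, matching $|D^\beta_{c,\alpha}|$.

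Coverage then follows by inspection. Leaves, arm interiors, and internal interchange vertices are covered by the bijections above. Gates are covered because we use only nonempty color classes, so some path passes through each gate. Element vertices and spine interiors, together with $s_\star$ and $t_\star$, are covered by $Q_{\mathrm{sp}}$.

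In a group gadget $g$, exactly one set is selected, since $|\mathcal F\cap g|=1$. Its primary paths cover all of $f_{S,j},m_{S,j},\ell_{S,j}$. For an unselected $S$, the dummy path of row $j$ covers $f_{S,j^-}$, $m_{S,j}$ and $\ell_{S,j}$, and $j\mapsto j^-$ is a bijection, so all of $H_S$ is covered. There are exactly two unselected sets in $g$. Routing one of them through $y$-blockers covers all of $y_{g,1},\dots,y_{g,8}$, and routing the other through $z$-blockers covers all $z_{g,j}$. This completes the cover.
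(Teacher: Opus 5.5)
Your proof is correct and follows the paper's argument. You count the $12n$ primary and dummy paths plus $Q_{\mathrm{sp}}$, take isometry from Lemma~\ref{lem:cpm-isometric}, and check that each shared resource (leaves, dummy arms, internal interchange vertices) has exactly as many slots as paths using it; you then cover the group gadgets via the shift $j\mapsto j^-$ and the $y$/$z$ split of the blockers, adding only minor details the paper leaves implicit, such as the adjacencies of the dummy prefix and why every gate is covered.
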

\begin{proof}
Each element has one selected and two unselected occurrences, giving
four primary and eight dummy paths.  Thus the construction uses $12n$
paths plus $Q_{\mathrm{sp}}$, all isometric by
Lemma~\ref{lem:cpm-isometric}.

The bijections are possible because each interchange has $2n_c$
primary paths and $2n_c$ dummy paths of each row, matching the sizes of
its internal vertex classes; each element vertex likewise has one
path per arm and leaf.  In a selected set every row is covered directly.
In an unselected set the path exiting row $j$ covers its middle and
last vertices, while the path exiting row $j^+$ covers its first
vertex.  The two unselected sets cover the sixteen blockers using
$y$ and $z$, respectively.  Every incidence arm is used once, and
$Q_{\mathrm{sp}}$ covers the spine and its auxiliary leaves.  This
accounts for all vertices.
\end{proof}

\subsection{Recovering a constrained perfect matching from a cover}
\label{subsec:cpm-reverse}

\begin{definition}[Nice paths and covers]
A \emph{nice path} is an isometric path from a group leaf to a spine-side
leaf (an element leaf or one of $s_\star,t_\star$), using exactly one
incidence arm, one internal interchange vertex, and one element arm.
A \emph{nice cover} consists of $Q_{\mathrm{sp}}$ and $12n$ nice paths
covering $V(G)$.  This definition imposes no primary or dummy form
inside the group gadget and no agreement of element indices.
\end{definition}

\begin{lemma}[Existence of a nice cover]
\label{lem:cpm-nice-cover}
\label{lem:cpm-spine-cut}
If $G$ has an isometric path cover of size at most $k$, it has a nice
cover.  The cover can be chosen so that its nice paths use no spine
edges and cover every arm and internal interchange vertex exactly once.
\end{lemma}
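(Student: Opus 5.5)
We would prove the statement by a counting argument on leaves combined with the length constraints coming from $L\gg B\gg n$. First, count the leaves of $G$: there are $12n$ element leaves, $2$ auxiliary leaves, and $8\cdot|\mathcal S| = 12n$ group leaves, so $24n+2$ leaves in total. Every leaf must be an endpoint of some path in the cover, and each isometric path has at most two endpoints. Hence a cover of size at most $k=12n+1$ has exactly $12n+1$ paths, and every path has both endpoints at distinct leaves, with each leaf used exactly once as an endpoint. In particular, no path is trivial or ends at a non-leaf.

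Next, we classify the paths by which leaves they join. Consider the graph with the arm interiors deleted. Its components are the spine component (spine plus spine-side leaves), the group gadgets, and the interchanges; these are connected to each other only through the arms. A path joining two spine-side leaves uses either only spine edges or at least two arms. Two arms have total length at least $2(L-n+1)$, which exceeds the spine length $(4n-1)B+2$, so such a path would not be isometric. Similarly, a path joining two group leaves, whether in the same gadget or in different gadgets, can be compared against the in-gadget distance of at most about $5$ via Lemma~\ref{lem:cpm-local-distances}, or against the length of two arms. From this we aim to conclude that every path joins a group leaf to a spine-side leaf, except possibly one path that uses $s_\star$ and $t_\star$.

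The pairing works as follows. There are $12n$ group leaves and $12n+2$ spine-side leaves. If some path joined two spine-side leaves, a parity and counting argument would force another path to join two group leaves. We then need to show that a group-to-group path cannot be isometric while still leaving the arms coverable. This is the step we expect to be the main obstacle. The plan is to count coverage of arm interiors: there are $24n$ arms and only $12n$ group-to-spine crossings. A group-to-spine path uses exactly one incidence arm and one element arm, since four arms cost more than $2L+10$, which is the bound from the proof of Lemma~\ref{lem:cpm-isometric}. So all $24n$ arms must be covered by exactly these $12n$ paths, each using two arms. Any path between two group leaves, or between two element leaves, would waste crossings, leaving some arm uncovered. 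Arm interiors have degree $2$, so an arm is covered only if a path traverses it entirely or ends inside it, and the latter is impossible since all endpoints are leaves. This forces $12n$ group-to-spine paths, each traversing exactly two arms and hence exactly one interchange internal vertex. The remaining path must then join $s_\star$ and $t_\star$, and by isometry it is $Q_{\mathrm{sp}}$ or another path of the same length. Since leaving the spine costs more than the spine length, it is exactly $Q_{\mathrm{sp}}$.

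Finally, the nice paths use no spine edge, by the length estimate $2L+9-n+B>2L+10$ from the proof of Lemma~\ref{lem:cpm-isometric} applied to an arbitrary group-to-element-leaf shortest path. Each arm is traversed by exactly one path because the count is exact ($24n$ arms, $12n$ paths with two arms each). For the internal interchange vertices we count $\sum_c 2\cdot 6n_c=12n$ of them, an independent set, each used by at most one crossing. If some path passed through the interchange via gate edges instead, for example via $g$–$g$ or the clique, some internal vertex would remain uncovered. Since every one of the $12n$ paths uses exactly one internal vertex, we conclude that each internal vertex is covered exactly once.
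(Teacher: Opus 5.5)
There is a genuine gap, at the step you flagged yourself. You try to show that the \emph{given} cover is already nice, but your counting cannot exclude group-to-group paths. A path between leaves of two different group gadgets can use four arms: an incidence arm, one internal interchange vertex, an element arm, a spine portion, and then an element arm, an internal vertex and an incidence arm into the other gadget.

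The correct bounds are $4(L-n+1)>2L+(4n-1)B+9$ and $6(L-n+1)>4L+(4n-1)B+16$. Note that $2L+10$ is not a valid bound for an arbitrary group-leaf/spine-side-leaf pair, since the target may only be reachable along the spine. With these bounds, every path uses at most two arms and at most one internal interchange vertex per group endpoint, so the totals are at most $24n$ and $12n$. A four-arm group-to-group path contributes exactly two arms and two internal vertices for its two group endpoints. It therefore meets both bounds with equality and wastes nothing; your leaf count merely forces one extra spine-to-spine path for each such path.

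Comparing against in-gadget distances or a two-arm route excludes such a path only when both leaves lie in the same gadget or the two groups share a color. Nothing guarantees the latter.

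For the same reason, your claim that the group-to-spine paths of the given cover avoid spine edges misapplies Lemma~\ref{lem:cpm-isometric}. The competitor of length at most $2L+10$ exists only when the target is the matching element vertex. If the target's color does not occur in the group, every shortest route uses spine edges.

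Finally, the leftover spine-to-spine path need not be $Q_{\mathrm{sp}}$. A path entering the spine at its first vertex may end at $s_\star$ instead of an element leaf there, leaving that element leaf to the spine path.

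The missing idea is to stop trying to prove that the original cover is nice and to rebuild it instead. The counting with forced equality gives what you need:
\begin{itemize}
\item every arm and every internal interchange vertex is used exactly once;
\item an interchange visit through an internal vertex must cross from a group gate to an element gate, since two gates on the same side are adjacent;
\item hence each of the $12n$ group endpoints starts a branch consisting of a gadget prefix, one incidence arm, one internal vertex and one element arm, ending at an element vertex.
\end{itemize}

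Cut each branch at that element vertex. Each element arm carries exactly one branch, so every primary vertex receives two branches and every dummy vertex four, matching its number of element leaves. Append these leaves bijectively.

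A branch is a subpath of an isometric path, and appending a leaf raises the distance by exactly one, so the extended branches are isometric. They cover everything outside the spine component, and adding $Q_{\mathrm{sp}}$ covers the spine and the remaining leaves. This yields a new cover of size $12n+1$ with all the properties in the statement, whatever the original paths did on the spine.
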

\begin{proof}
There are $24n+2=2k$ leaves, so every path in such a cover has two leaf
endpoints, each used once.  Arm interiors have degree two, so an arm
is always traversed completely.

We first bound arm usage.  Two spine-side leaves have distance at most
$(4n-1)B+2$.  From a group leaf, a direct row and two matching primary
arms reach the spine in $2L+8$ steps.  Continuing along the spine gives
distance bounds $2L+(4n-1)B+9$ to a spine-side leaf and
$4L+(4n-1)B+16$ to a group leaf.  Arm usage is even, and each arm has
length at least $L-n+1$.  Our choice of $L$ therefore gives:
\begin{center}
\begin{tabular}{lcc}
\toprule
Group endpoints & Arms used & Internal interchange vertices\\
\midrule
0 & 0 & 0\\
1 & 2 & at most 1\\
2 & at most 4 & at most 2\\
\bottomrule
\end{tabular}
\end{center}
For the last column, observe that a shortest path cannot leave and
re-enter an interchange: the gates have distance at most three, whereas
an excursion uses at least two arms.  A visit consequently has length
at most three and contains at most one internal vertex; it uses two arms.

There are $12n$ internal interchange vertices to cover and $12n$ group
endpoints.  Equality is forced in the last column for every path.
All internal vertices are therefore covered once, and the total arm
usage is $24n$.  Since all $24n$ arms must be covered, each is used once.
An interchange visit containing an internal vertex crosses between a
group gate and an element gate; using two gates on the same side would
have a clique shortcut.  Hence each group endpoint leads through two
arms and one internal vertex to the spine.  The bounds leave only a
spine portion between these branches, with no intervening group visit.

Cut the $12n$ branches at their first element vertices.  Two arrive
at each primary vertex and four at each dummy vertex, exactly matching
the numbers of element leaves.  Append these leaves bijectively and
add $Q_{\mathrm{sp}}$.  Each branch remains shortest, and extending it
to a leaf preserves shortestness.  The branches retain coverage away
from the spine, while the reassigned leaves and $Q_{\mathrm{sp}}$ cover
the rest.  This gives the asserted nice cover.
\end{proof}

For the rest of the proof, fix a nice cover with the additional properties
of Lemma~\ref{lem:cpm-nice-cover}.  Its nice paths end at element leaves
and visit only the group gadget containing their group endpoint.

\begin{lemma}[The group prefixes]
\label{lem:cpm-prefixes}
Each nice path has one of the following prefixes, for its endpoint
leaf $\lambda$ at $h_S^\alpha$:
\begin{align*}
 &\lambda,h_S^\alpha,f_{S,j},m_{S,j},\ell_{S,j}
       &&\text{if }\alpha(j)=\alpha;\\
 &\lambda,h_S^\alpha,f_{S,j^-},w,m_{S,j},\ell_{S,j}
       &&\text{if }\alpha(j)=\bar\alpha,
\end{align*}
where $w\in\{y_{g,j^-},z_{g,j^-}\}$ and $g$ contains $S$.
Call these primary and dummy prefixes, respectively.
Every first, middle, and last row vertex is covered exactly once.
\end{lemma}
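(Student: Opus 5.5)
The plan is to combine the counting information from Lemma~\ref{lem:cpm-nice-cover} with the local distances of Lemma~\ref{lem:cpm-local-distances}. Fix a group $g$. By Lemma~\ref{lem:cpm-nice-cover}, every nice path visits only the group gadget containing its group endpoint. It leaves $H_g$ through exactly one incidence arm, and every incidence arm is used exactly once. The gadget $H_g$ has $24$ group leaves, hence exactly $24$ nice paths start in it. It also has $24$ last vertices $\ell_{T,r}$, each carrying exactly one incidence arm. So the exit vertices $\ell_{T,r}$ of these $24$ paths are pairwise distinct and exhaust all last vertices of $H_g$.

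\textbf{Step 1: restrict the prefix.} Consider a nice path starting at a leaf $\lambda$ of $h_S^\alpha$. Its \emph{prefix} is the subpath from $\lambda$ to its exit vertex $\ell_{T,r}$, and it lies inside $H_g$. Since the nice path is isometric, the prefix is a shortest $\lambda$--$\ell_{T,r}$ path. By Lemma~\ref{lem:cpm-local-distances} its length is therefore $3$, $4$ or $5$. The second vertex of the prefix is forced to be $h_S^\alpha$.

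\textbf{Step 2: count middle vertices.} No path other than these $24$ prefixes can cover a middle vertex of $H_g$. Indeed, $Q_{\mathrm{sp}}$ stays on the spine component, and nice paths from other groups never enter $H_g$. Next I will check that a prefix contains at most one middle vertex. A middle vertex $m_{T',r'}$ is at distance at least $3$ from $\lambda$, its neighbours are $f_{T',r'}$, $\ell_{T',r'}$ and blockers, and the prefix has length at most $5$. After visiting a middle vertex, a shortest path heading to a last vertex cannot pass through another middle vertex; this is a short case check. There are $24$ middle vertices and only $24$ prefixes, so each prefix contains \emph{exactly one} middle vertex.

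Consequently no prefix has length $3$. A length-$3$ prefix $\lambda,h_S^\alpha,f,\ell$ contains no middle vertex, since $h_S^\alpha$ is adjacent only to first vertices of $S$. So, by Lemma~\ref{lem:cpm-local-distances}, the exit lies in $S$, i.e.\ $T=S$.

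\textbf{Step 3: read off the shape.} Let $m$ be the unique middle vertex on the prefix. If $m=m_{T',r'}$ with $T'\neq S$, it can only be reached from $h_S^\alpha$ through some $f_{S,x}$ and a blocker. That places it at distance $4$ from $\lambda$, which is too long once the exit is in $S$; more carefully, the remaining step to $\ell_{T',r'}$ would contradict the distances $4,5$ of Lemma~\ref{lem:cpm-local-distances}. Hence $m=m_{S,r'}$. The vertex after $m$ must be $\ell_{S,r'}$: stepping back to a first vertex or a blocker cannot be shortest toward a last vertex. Therefore $r'=r$.

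Now the two cases follow from Lemma~\ref{lem:cpm-local-distances}. If $\alpha(r)=\alpha$, the length is $4$, and the only route is $\lambda,h_S^\alpha,f_{S,r},m_{S,r},\ell_{S,r}$. If $\alpha(r)\ne\alpha$, the length is $5$. Then the vertex before $m_{S,r}$ must be a neighbour of $m_{S,r}$ at distance $3$ from $\lambda$. This excludes $f_{S,r}$, which is not adjacent to $h_S^\alpha$ because its parity is $\bar\alpha$. So that vertex is a blocker $w\in\{y_{g,r^-},z_{g,r^-}\}$, whose $S$-neighbour adjacent to $h_S^\alpha$ is $f_{S,r^-}$. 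This gives exactly the dummy prefix.

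\textbf{Step 4: exact coverage.} From Step 2, each middle vertex is covered exactly once, and each last vertex is covered exactly once as the unique exit. Each prefix contains exactly one first vertex, namely $f_{S,r}$ or $f_{S,r^-}$. No other path covers first vertices of $H_g$, and there are $24$ of them. Since all are covered, each is covered exactly once.

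I expect the main obstacle to be the case analysis in Steps 2--3. These are the claims that a shortest prefix contains at most one middle vertex, and that a middle vertex of another set cannot lie on a shortest path to an exit. Both need careful use of the biclique and blocker adjacencies, and of the exact distances in Lemma~\ref{lem:cpm-local-distances}.
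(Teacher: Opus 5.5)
Your proposal is correct and follows essentially the paper's argument. You bound the prefix length by five via Lemma~\ref{lem:cpm-local-distances}, observe that a prefix contains at most one middle vertex, and force exactly one by counting middle vertices against nice paths. You count per group ($24$ against $24$) where the paper counts globally ($12n$ against $12n$). You then read off the two shapes and obtain unique coverage of first and last vertices by the same counting.

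Two justifications are slightly loose, though their conclusions stand. First, a middle vertex of another set can also be reached through a biclique edge $f_{S,x}\ell_{T',r'}$. It is still at distance at least $4$, and its only last-vertex neighbour is $\ell_{T',r'}$, so the argument goes through. Second, $f_{S,r}$ is excluded as the predecessor of $m_{S,r}$ because no first vertex is at distance three from $\lambda$, not merely because it is non-adjacent to $h_S^\alpha$.
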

\begin{proof}
By Lemma~\ref{lem:cpm-local-distances}, the prefix to the exit last
vertex has length at most five.  If it contains a middle vertex, that
vertex is reached in three or four steps.  In three steps it is reached
through its own first vertex and must exit through its own last vertex.
In four steps it must be reached through a blocker: entering from its
last vertex would require repeating that vertex, and no first vertex
has distance three from the starting leaf.  Exiting through another
set would have distance three and thus cannot use a middle vertex.
These are precisely the displayed prefixes, each with one middle vertex.

There are $12n$ middle vertices and $12n$ nice paths, each covering at
most one.  Coverage forces each path to have one of these prefixes and
each middle vertex to be covered once.  In each group, its $24$ paths
then cover one first and one last vertex each.  The $24$ vertices of
each kind must all be covered, so their coverage is also unique.
\end{proof}

\begin{lemma}[Path types and group selection]
\label{lem:cpm-selected-set}
Each group has eight paths with primary prefixes and sixteen with
dummy prefixes.  Primary prefixes lead to primary element vertices,
and dummy prefixes lead to dummy element vertices.  Exactly one set of each group has eight primary prefixes,
and each other set has eight dummy prefixes.  Each blocking vertex
is covered exactly once.
\end{lemma}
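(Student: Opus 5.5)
We combine the exact-coverage facts of Lemmas~\ref{lem:cpm-nice-cover} and~\ref{lem:cpm-prefixes} with a counting argument on blockers and gates. The steps are as follows.

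\textbf{Step 1: prefix type determines element-vertex type.} Fix a nice path exiting through $\ell_{S,j}$. Its incidence arm $I_{S,j}$ leads to the group gate $g^\beta_{c,\alpha(j)}$, and the path then crosses one internal vertex to an element gate. Each internal vertex of $R^\beta_c$ is adjacent to both group gates, but a vertex of $D^\beta_{c,\alpha}$ is adjacent only to $g^\beta_{c,\alpha}$. I would compare lengths using Lemma~\ref{lem:cpm-local-distances} and~\eqref{eq:cpm-arm-cancellation}, as in the proof of Lemma~\ref{lem:cpm-isometric}. A primary prefix (length $4$) followed by a dummy exit would give a path of length $2L+10$, while its endpoints lie at distance $2L+9$ (through a $D$-vertex one cannot do better). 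This comparison is only suggestive, so I would rather use a counting argument. Each interchange $X^\beta_c$ has $2n_c$ vertices in $R^\beta_c$ and $2n_c$ in each $D^\beta_{c,\alpha}$, and each is covered exactly once by Lemma~\ref{lem:cpm-nice-cover}. Moreover, each primary element vertex receives exactly $2$ arms and each dummy element vertex receives $4$. Hence exactly $4n$ paths end at primary element vertices. Paths through $R$ end at primary gates, and paths through $D$ end at dummy gates, because a path enters through a group gate and uses no clique shortcut.

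\textbf{Step 2: a dummy prefix cannot lead to a primary element vertex.} This is the key isometry step. Take a dummy prefix from a leaf of $h_S^{\bar\alpha}$ exiting at row $j$ with $\alpha(j)=\alpha$. Suppose it continues through an $R$-vertex to a primary element vertex $p^\delta_{u'}$. The path from the leaf to $p^\delta_{u'}$ then has length $5+(L-\eps_\beta i(u)+1)+2+(L+\eps_\beta i(u')+1)$. There is a competitor: the primary prefix of length $4$ from the same leaf to the row $j'$ of $u$ with $\alpha(j')=\bar\alpha$ and the same $\beta$. This row exists by the table. It reaches $g^\beta_{c,\bar\alpha}$, and the same $R$-vertex is adjacent to both group gates, so the competitor is shorter by one. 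This is a contradiction. The route must have $u'$ of the same color as $u$; since indices could differ, the competitor uses the same element arm, so the arithmetic matches exactly. Therefore dummy prefixes go to dummy gates. Counting then forces the converse: there are exactly $4n$ primary-gate arrivals, so the number of primary prefixes is at most $4n$. It remains to show there are at least $4n$ primary prefixes, which follows from the blocker count in Step~3.

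\textbf{Step 3: blockers and group selection.} Each dummy prefix covers exactly one blocker, while primary prefixes cover none. Each group has $16$ blockers, so it has at least $16$ dummy prefixes, hence at most $8$ primary prefixes out of its $24$ paths. Globally there are at most $4n$ primary prefixes, which together with Step~2 gives exactly $4n$. Therefore every group has exactly $8$ primary and $16$ dummy prefixes, and each blocker is covered exactly once. For the selection, I would use Lemma~\ref{lem:cpm-prefixes}: the row vertices $f,m,\ell$ are each covered exactly once. A dummy prefix exiting row $j$ of $S$ covers $f_{S,j^-}$ and the blocker $y_{g,j^-}$ or $z_{g,j^-}$. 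The two blockers of index $j^-$ are shared across the three sets of $g$, and each is covered once. So for each index $j$, exactly two of the three sets exit row $j$ through dummy prefixes. The crucial point is that in a single set, a primary prefix at row $j$ and a dummy prefix at row $j^+$ would both cover $f_{S,j}$, which is forbidden. Hence within each set, once some row is dummy, its cyclic successor must be dummy as well, so all eight rows of that set have the same type. With two dummy sets per index, exactly one set is all-primary and the other two are all-dummy. This cyclic-propagation argument, together with the isometry comparison in Step~2, is where I expect the main difficulty.
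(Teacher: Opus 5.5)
Your argument is correct and is essentially the paper's. The three ingredients are the same. The blocker count allows at most $8$ primary prefixes per group, so at most $4n$ overall. The parity-switch competitor shows a dummy prefix cannot reach a primary element vertex, so the $4n$ primary element arms force at least $4n$ primary prefixes. Unique coverage of $f_{S,j}$ then propagates the prefix type cyclically within each set.

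There are two fixes to make. In Step~2 you have the inequality directions swapped: the arm count gives ``at least $4n$'' and the blockers give ``at most $4n$'', as your Step~3 correctly uses. Also drop the Step~1 length comparison, which is actually false: a primary prefix followed by a $D$-vertex and a dummy arm also has length $2L+9$ and can be isometric, which is exactly why the counting argument is needed.
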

\begin{proof}
Only dummy prefixes cover blocking vertices, each covering one.
The sixteen blockers in a group require at least sixteen dummy prefixes,
leaving at most eight primary prefixes.  Globally there are thus at
most $8|\mathcal G|=4n$ primary prefixes.

A dummy prefix cannot lead to a primary element endpoint.  Switch
its exit to the other parity for the same element and interchange
sign.  The prefix length decreases from five to four, while either
group gate has distance two to the target primary gate.  The arm
lengths are unchanged, giving a strictly shorter route.  Since there
are $4n$ primary element arms to cover, and each nice path uses exactly
one element arm, at least $4n$ prefixes are primary.  Equality holds throughout: every
group has eight primary and sixteen dummy prefixes, primary prefixes
lead to primary endpoints, and dummy prefixes lead to dummy endpoints.
Each blocker is consequently covered once.

For a fixed set $S$, let $p_j,d_j$ indicate whether its unique exit
through $\ell_{S,j}$ has a primary or dummy prefix.  Unique coverage
of last vertices gives $p_j+d_j=1$.  Unique coverage of $f_{S,j}$ gives
$p_j+d_{j^+}=1$, since a dummy prefix exiting row $j^+$ visits $f_{S,j}$.
Thus $d_j=d_{j^+}$ for every cyclic index.  All eight prefixes of $S$
have the same type.  The count of eight primary prefixes per group
therefore selects exactly one of its three sets.
\end{proof}

\begin{lemma}[Index agreement by summation]
\label{lem:cpm-agreement}
Every nice path whose incidence arm belongs to $(S,u)$ has its element
endpoint at $u$.
\end{lemma}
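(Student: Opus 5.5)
The plan is to compare, for each nice path, its length with that of a competing route through the interchange of the opposite sign. Isometry will force a one-sided inequality on the index difference. Summing these inequalities over all nice paths will force every one of them to be an equality.

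Fix the nice cover from Lemma~\ref{lem:cpm-nice-cover}. Let $P$ be a nice path whose incidence arm is $I_{S,j}$, where $(S,u)$ is the occurrence, $\alpha=\alpha(j)$, $\beta=\beta(j)$ and $c=c(u)$. The path passes through $X_c^\beta$ and its element arm ends at an element vertex $x$ of some element $v$. Since $x$ is attached to $X_c^\beta$, we have $c(v)=c$. Define the deviation $\Delta(P)=\eps_\beta\bigl(i(v)-i(u)\bigr)$. By~\eqref{eq:cpm-arm-cancellation}, the two arms of $P$ have total length $2L+2+\Delta(P)$.

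The first key step is to show $\Delta(P)\le 0$. Let $P$ start at a leaf $\lambda$ of $h_S^{\alpha'}$. Consider the row $j'$ of $u$ in $S$ with $\alpha(j')=\alpha$ and $\beta(j')=\bar\beta$; it exists by the table, since rows $1,3$ are top and rows $2,4$ are bottom, with opposite signs. By Lemma~\ref{lem:cpm-local-distances}, $\lambda$ reaches $\ell_{S,j'}$ as quickly as it reaches $\ell_{S,j}$, because both rows have the same parity. The interchanges $X_c^+$ and $X_c^-$ are identically structured, and $x$ is joined by an element arm of the same type and row to the corresponding element gate of $X_c^{\bar\beta}$. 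Therefore the distance from the group gate to the element gate is the same in both interchanges. So we obtain a walk from $\lambda$ to $x$, then to the final leaf of $P$, that differs from $P$ only in its arm sum, which is $2L+2-\Delta(P)$. If $\Delta(P)>0$ this walk is strictly shorter, contradicting isometry of $P$. Hence $\Delta(P)\le 0$.

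The second step is the summation:
\[
\sum_P \Delta(P)=\sum_P \eps_\beta\, i(v)-\sum_P \eps_\beta\, i(u).
\]
By Lemma~\ref{lem:cpm-nice-cover}, every arm is used by exactly one nice path. So the first sum equals the sum over all element arms of their sign times the index of their element. Each element vertex $p_u^\alpha$ has one positive and one negative primary arm, and $d_u^\alpha$ has two of each sign, so this sum is $0$. The second sum runs over incidence arms, where $\eps_\beta$ is the interchange sign. Each occurrence $(S,u)$ has four incidence arms with interchange signs $+,+,-,-$, so this sum is also $0$. Hence $\sum_P\Delta(P)=0$ with every term nonpositive, so every $\Delta(P)=0$, i.e.\ $i(v)=i(u)$. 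Since $i$ is a bijection, $v=u$.

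The main obstacle I expect is the first step. There I must carefully check that the competing route really has the same non-arm length, in particular for dummy prefixes, where the prefix length depends on the relation between $\alpha'$ and $\alpha$. Keeping the same exit parity $\alpha$ handles this, since only the sign $\beta$ is changed. I must also check that the target element vertex $x$ is reachable from $X_c^{\bar\beta}$ through the same kind of gate (primary or dummy) and row, which holds because element vertices carry no sign index.
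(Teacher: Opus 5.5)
Your proof is correct and follows the paper's argument: rerouting through the opposite-sign interchange with the same exit parity gives $\eps_\beta\bigl(i(v)-i(u)\bigr)\le 0$, and summing over all nice paths, using that every arm is used exactly once, forces every term to be zero. The differences are only cosmetic: you sum over both signs at once, where each of the two sums vanishes separately, instead of per sign as the paper does, and you get the equal prefix length directly from Lemma~\ref{lem:cpm-local-distances} rather than by exhibiting the predecessor-blocker prefix.
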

\begin{proof}
Let $P$ be a nice path whose incidence arm belongs to $(S,u)$, and
write $v$ for its endpoint element and $\beta$ for its interchange sign.
Its two arms meet the same interchange, so $c(u)=c(v)$.

A route through the opposite sign exists with the same endpoint leaves
and the same local costs: choose the row for $u$ with the same exit
parity and opposite sign, and copy the interchange segment in the
identical opposite-sign interchange.  The group prefix has the same
length by Lemma~\ref{lem:cpm-prefixes}; for a dummy prefix, use the
corresponding predecessor blocker.  The total
arm length changes from $2L+2+\eps_\beta(i(v)-i(u))$ to
$2L+2-\eps_\beta(i(v)-i(u))$.  Shortestness therefore gives
\begin{equation}
 \eps_\beta\bigl(i(v)-i(u)\bigr)\le0.
 \label{eq:cpm-admissible-primary}
\end{equation}

Fix $\beta$.  Every element $u$ labels exactly six incidence arms
entering interchanges of sign $\beta$, and exactly six element arms
incident with those interchanges.  Every arm is used once by
Lemma~\ref{lem:cpm-spine-cut}.  Hence, summing over the nice paths
through sign $\beta$, with $u_P$ and $v_P$ denoting the incidence and
endpoint elements of $P$, respectively, we obtain
\begin{equation}
 \sum_P \eps_\beta\bigl(i(v_P)-i(u_P)\bigr)
 =\eps_\beta\left(6\sum_{v\in U}i(v)-6\sum_{u\in U}i(u)\right)=0.
 \label{eq:cpm-index-balance}
\end{equation}
Every summand is nonpositive by~\eqref{eq:cpm-admissible-primary},
so every summand is zero.  Since $i$ is injective, $u_P=v_P$ for every
path through either sign.
\end{proof}

\begin{lemma}[Reverse implication]
\label{lem:cpm-reverse}
If $G$ has an isometric path cover of size at most $k$, the source
instance has a constrained perfect matching.
\end{lemma}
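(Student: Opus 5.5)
The plan is to assemble the lemmas of this subsection. Starting from an isometric path cover of size at most $k$, Lemma~\ref{lem:cpm-nice-cover} lets us fix a nice cover. This cover consists of $Q_{\mathrm{sp}}$ and $12n$ nice paths, it uses every arm and every internal interchange vertex exactly once, and no nice path uses a spine edge. For each group $g$, Lemma~\ref{lem:cpm-selected-set} provides exactly one set $S_g\in g$ all of whose eight prefixes are primary, while the other two sets of $g$ have only dummy prefixes. We define $\mathcal F=\{S_g: g\in\mathcal G\}$. Then $|\mathcal F\cap g|=1$ holds by construction, and it remains to prove that $\mathcal F$ partitions $U$.

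Since $|\mathcal F|=|\mathcal G|=n/2$ and every set has two elements, it suffices to show that every element $u$ lies in at most one set of $\mathcal F$; a counting argument then gives exact coverage. We count primary element arms at $u$. The element $u$ has exactly four primary element arms, one for each $(\alpha,\beta)$. Each arm is used exactly once, by a nice path with a primary prefix, because of Lemma~\ref{lem:cpm-selected-set}: primary prefixes lead to primary element vertices, and dummy prefixes lead to dummy element vertices. Now suppose $u\in S$ with $S\in\mathcal F$. The four rows of the occurrence $(S,u)$ give four nice paths. Each has a primary prefix, because all prefixes of $S$ are primary, and each exits through an incidence arm belonging to $(S,u)$. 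Lemma~\ref{lem:cpm-agreement} says each of these paths ends at $u$, and since the prefix is primary it ends at a primary element vertex of $u$. So each set of $\mathcal F$ containing $u$ uses four distinct primary element arms of $u$. There are only four such arms, so $u$ lies in at most one selected set.

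Counting then finishes the proof. The sets in $\mathcal F$ have $2\cdot n/2=n$ incidences in total, and each element receives at most one, so each element receives exactly one. Hence $\mathcal F$ is a partition of $U$.

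The delicate step is the claim that the four $(S,u)$-paths of a selected set occupy four distinct primary arms of $u$. Distinctness follows from Lemma~\ref{lem:cpm-nice-cover}, which gives unique usage of every arm, together with the fact that the four paths are distinct nice paths, since they use distinct incidence arms. Alternatively, one can avoid distinctness and argue purely by totals: by Lemma~\ref{lem:cpm-agreement}, the number of primary paths ending at $u$ equals $4$ times the number of selected sets containing $u$, and this number must equal $4$ because all $4n$ primary element arms are used exactly once.
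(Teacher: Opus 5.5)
Your proof is correct and follows essentially the paper's argument: you select the unique all-primary set of each group, then use Lemma~\ref{lem:cpm-agreement} together with the unique usage of $u$'s four primary element arms to show that $u$ lies in at most one selected set. The only difference is how you get ``at least one'': you use the global count $2|\mathcal F|=n=|U|$, whereas the paper argues locally that $u$'s four primary arms must be used and can only be used by primary paths of a selected occurrence of $u$ — which is exactly your closing alternative argument.
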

\begin{proof}
Use the cover obtained above and select the unique primary set of each
group, supplied by Lemma~\ref{lem:cpm-selected-set}.  A selected set
containing $u$ supplies four primary paths for $u$, one per occurrence
row, and all end at primary element vertices of $u$ by
Lemma~\ref{lem:cpm-agreement}.  Conversely, all four primary element
arms of $u$ are used once and can only be used by these primary paths.
Thus the number of selected sets containing $u$ is exactly one.
The selected sets partition $U$ and contain exactly one set per group.
\end{proof}

Combining Lemmas~\ref{lem:cpm-forward} and~\ref{lem:cpm-reverse} gives
the reduction.  The source problem is \NP-complete and the construction
has polynomial size and pathwidth at most $287$
(Lemmas~\ref{lem:cpm-source-hard} and~\ref{lem:cpm-pathwidth}).
This proves Theorem \ref{thm:cover-tw-paraNPh}.

\section{Algorithm Parameterized by Treewidth plus Diameter}
\label{sec:algo-treewidth-plus-diam}

\newcommand{\sfP}{\mathsf{P}}
\newcommand{\sfQ}{\mathsf{Q}}
\newcommand{\sfR}{\mathsf{R}}
\newcommand{\sfS}{\mathsf{S}}
\newcommand{\calR}{\mathcal{R}}
\newcommand{\sfC}{\mathsf{C}}

\newcommand{\dd}{\mathsf{d}}

\renewcommand{\top}{+}
\renewcommand{\bot}{-}

\newcommand{\bfa}{\mathbf{a}}
\newcommand{\bfb}{\mathbf{b}}
\newcommand{\bfv}{\mathbf{v}}
\newcommand{\bfx}{\mathbf{x}}
\newcommand{\bfy}{\mathbf{y}}

\newcommand{\DC}{\textsf{DC}}
\renewcommand{\DP}{\textsf{DP}}

\newcommand{\opp}{{\mathsf{opp}}}
\newcommand{\prof}{{\mathsf{prof}}}
\newcommand{\ord}{{\mathsf{ord}}}

\newcommand{\proj}{\textsf{proj}}

In this section, we give a dynamic programming algorithm solving
\SoPC{} in \XP{} time parameterized by treewidth plus diameter.

As stated in Section~\ref{sec:intro}, the states of our algorithm describe the ``traces'' of paths on the bags of the decomposition, which consist of how paths meet a bag, which connections should be realized outside the bag, and some distance information of their endpoints. 
Together with these traces, we store multiplicities indicating how many paths from the solution yield a given trace.
Locally consistent traces that need not admit a global
completion are allowed; however, it is guaranteed that at the root of the decomposition, every trace is an isometric path. 

Let us also recall that, although this approach appears to follow standard techniques, it requires special effort to achieve the \XP{} time bound.
Most notably, instead of storing the exact endpoints of the traces, we must limit ourselves to the \emph{distance profiles} of these endpoints, which are vectors indicating their distances to the vertices of a given bag of the decomposition.
Handling these require particular attention for the compatibility of states, and is responsible of most of the complexity in proving the correctness of our algorithm.

\subsection{Objects of the dynamic programming}\label{sec:DP:objects}

We work with a finite, simple, connected, unweighted graph $G$ on $n$ vertices and diameter $\diam$.
Note that we may assume that $n\geq2$ and $\diam\geq2$, as otherwise the instance is trivial.
All distances $d(u,v)$ are distances in $G$, including when $u,v$ are considered within $G_t$ for some node~$t$.

Let $(T,(X_t)_{t\in V(T)})$ be a rooted nice tree decomposition of width $w$,
with empty root and leaf bags and introduce vertex, forget vertex, and
join nodes. The vertices of $G$ are labeled from $1$ to $n$.
For a node $t$, let $G_t$ be the graph induced by the vertices appearing
in the subtree rooted at $t$, and put $I_t=V(G_t)\setminus X_t$ and
$O_t=V(G)\setminus V(G_t)$ to be the \emph{interior} and \emph{outside} vertices w.r.t~$t$. The bag $X_t$ separates $I_t$ from $O_t$.

\paragraph{Signed distance profiles.}
Suppose $X_t\neq\emptyset$, and list its vertices as $x_1,\ldots,x_q$
in increasing label order. For $a\notin X_t$, its \emph{signed
distance profile} is
\[
 \prof_t(a)=(s,d_1,\ldots,d_q)
\]
where $d_i=d(a,x_i)$, and
where $s=-$ if $a\in I_t$ and $s=+$ if $a\in O_t$.
Every distance coordinate lies in $\{1,\ldots,\diam\}$, because $a\notin X_t$.
Consequently, there are at most $2\diam^q$ signed profiles at~$t$.

Write $\DP(t)$ for the set of profiles actually represented by vertices
of $G\setminus X_t$. Write $\DP^+(t)$ and $\DP^-(t)$ for its
positive and negative subsets. For a profile $\mathbf a$, write $\mathbf a[0]$ for its sign and
$\mathbf a[x_i]$ for its $i$th distance coordinate. Its \emph{distance class} is
\[
 V_t(\mathbf a)=\{a\notin X_t:\prof_t(a)=\mathbf a\}.
\]
In the following, we will only consider nonempty classes. 
For an empty bag, we set $\DP(t)=\emptyset$.

\begin{lemma}\label{lem:exact-distance}
Let $X_t\neq\emptyset$ and $\mathbf a\in\DP^+(t)$.
For every $a\in V_t(\mathbf a)$ and $b\in V(G_t)$,
\[
 d(a,b)=\min_{x\in X_t}\bigl(\mathbf a[x]+d(x,b)\bigr).
\]
In particular, the left-hand side is independent of the representative $a$.
\end{lemma}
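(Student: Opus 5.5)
The identity follows directly from the fact that $X_t$ separates $O_t$ from $I_t$. Fix $a\in V_t(\mathbf a)$, which means $a\in O_t$ because the sign of $\mathbf a$ is $+$, and fix $b\in V(G_t)=I_t\cup X_t$. The plan is to prove the two inequalities separately and then read off independence from the representative.

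For the inequality $d(a,b)\le\min_{x\in X_t}(\mathbf a[x]+d(x,b))$, I use the triangle inequality. For every $x\in X_t$ we have $d(a,b)\le d(a,x)+d(x,b)=\mathbf a[x]+d(x,b)$, because $\mathbf a[x]=d(a,x)$ by the definition of the profile. Taking the minimum over $x$ gives the bound.

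For the reverse inequality, take a shortest $a$--$b$ path $P$ in $G$, which exists since $G$ is connected. If $b\in X_t$, then $x=b$ gives $\mathbf a[b]+d(b,b)=d(a,b)$. Otherwise $b\in I_t$. Since $a\in O_t$ and $X_t$ separates $I_t$ from $O_t$, the path $P$ contains some vertex $x\in X_t$. This separation holds in a tree decomposition because every edge lies in a common bag. Concretely, an edge between $O_t$ and $I_t$ would need a bag containing both endpoints. But vertices of $I_t$ appear only in bags of the subtree rooted at $t$ and not in $X_t$, while vertices of $O_t$ appear only outside that subtree, so no such bag exists. Because subpaths of shortest paths are shortest, $d(a,b)=d(a,x)+d(x,b)=\mathbf a[x]+d(x,b)$, which is at least the minimum.

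Combining the two inequalities gives the equality. The right-hand side depends only on $\mathbf a$ and $b$, so $d(a,b)$ is the same for every $a\in V_t(\mathbf a)$. The only step needing care is the separation claim. It is stated in the setup, so I would simply cite it, with the short justification above as a backup. Note also that distances are taken in $G$, not in $G_t$, so paths leaving $G_t$ cause no difficulty.
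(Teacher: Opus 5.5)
Your proof is correct and follows the same route as the paper: the triangle inequality gives one direction, and the fact that every $a$--$b$ path meets the separator $X_t$, applied to a shortest path, gives the other. Your justification of the separation property and your separate treatment of the case $b\in X_t$ fill in details the paper leaves implicit.
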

\begin{proof}
Every $a$--$b$ path meets $X_t$. Taking a bag vertex on a shortest such
path, and applying the triangle inequality in the opposite direction,
gives $d(a,b)=\min_{x\in X_t}(d(a,x)+d(x,b))$.
Using $\mathbf a[x]=d(a,x)$ proves the equality.
\end{proof}

We introduce common notation for distances from endpoints represented either explicitly or by their profiles.

For $\mathbf a\in\DP^+(t)$ and $v\in V(G_t)$, define
\[
 \delta_t(\mathbf a,v)=\min_{x\in X_t}\bigl(\mathbf a[x]+d(x,v)\bigr).
\]
By Lemma~\ref{lem:exact-distance}, $\delta_t(\mathbf a,v)=d(a,v)$ for every representative $a\in V_t(\mathbf a)$. In particular, $\delta_t(\mathbf a,v)=\mathbf a[v]$ for $v\in X_t$.
For $\eta\in V(G_t)\cup\DP^+(t)$, set
\[
 H^t_\eta(v)=
 \begin{cases}
 d(\eta,v),&\eta\in V(G_t),\\
 \delta_t(\eta,v),&\eta\in\DP^+(t).
 \end{cases}
\]
For $\eta\in X_t\cup\DP(t)$, its distance function on the bag is
\[
 h^t_\eta(v)=
 \begin{cases}
 d(\eta,v),&\eta\in X_t,\\
 \eta[v],&\eta\in\DP(t).
 \end{cases}
\]
The second definition also applies to negative profiles, but only at bag vertices.

\paragraph{Traces of paths on bags and subtrees.}
Traces describe portions of a path and its endpoints, together with the
locations of the missing connections. See Figure~\ref{fig:DP:traces}.
Throughout, intersecting a path with a vertex set means retaining its
maximal contiguous subpaths on that set, with inherited order and
orientation.

\begin{figure}
 \centering
 \includegraphics[scale=1.2]{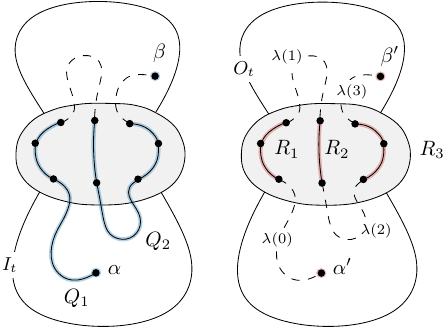}
 \caption{A subtree-trace (left) and its bag projection (right) which yields a bag-trace.
 Dashed connections through the exterior are not part of the subtree-trace.
 In the bag-trace, links record whether the connecting portion is above (its value is $\top$)
 or below (its value is $\bot$) the bag.}
 \label{fig:DP:traces}
\end{figure}

A \emph{subtree-trace at $t$} is a tuple $\sfQ=(\alpha,\calQ,\beta)$,
where $\alpha,\beta\in V(G_t)\cup\DP^+(t)$ and
$\calQ=(Q_1,\ldots,Q_k)$ is a nonempty ordered family of oriented,
nonempty, pairwise vertex-disjoint paths of $G_t$.
(We stress the fact that these vertex-disjoint paths are meant to represent the trace of a single isometric path on $G_t$.)
In addition, we require a number of conditions that we detail now; in the remaining, write $u_i$ and $v_i$ for the first and last vertices of $Q_i$.
The first set of conditions, that we call \emph{distance conditions}, consists of the following:
\begin{itemize}
 \item Along every oriented edge of every path $Q_i$, $H^t_\alpha$ increases
 by one and $H^t_\beta$ decreases by one.
 \item For $1\leq i<k$,
 \[
 H^t_\alpha(v_i)<H^t_\alpha(u_{i+1}),\qquad
 H^t_\beta(v_i)>H^t_\beta(u_{i+1}).
 \]
\end{itemize}
Then we require the following \emph{boundary conditions}:
\begin{itemize}
 \item For $1\leq i<k$, both $v_i$ and $u_{i+1}$ lie in $X_t$.
 \item If $\alpha\in V(G_t)$, then $\alpha=u_1$; if $\beta\in V(G_t)$, then $\beta=v_k$.
 \item If $\alpha\in\DP^+(t)$, then $u_1\in X_t$; if $\beta\in\DP^+(t)$, then $v_k\in X_t$.
\end{itemize}
In particular, if $k\geq2$, every $Q_i$ meets $X_t$.
The conditions on $u_1$ and $v_k$ for positive endpoints are essential:
any portion continuing outside $G_t$ must end at a bag vertex.

A subtree-trace is \emph{partial} if it meets $X_t$ and \emph{complete}
otherwise. 
By the conditions above, a complete trace has $k=1$ and
explicit endpoints $\alpha=u_1$, $\beta=v_1$. The distance conditions
then imply that $Q_1$ represents a possible isometric $\alpha$--$\beta$ path.
These two cases exhaust all subtree-traces, including at an empty bag,
where all traces are complete.

A trace needs not be the restriction of an actual isometric path:
the missing connections are not required to exist. The dynamic program
uses exactly the local conditions above. Conversely, the nonempty
restriction of any oriented isometric path to $G_t$ is a subtree-trace,
with any endpoint outside $G_t$ replaced by its positive profile.
Indeed, distances from its endpoints are monotone along the path,
and every crossing of the boundary uses $X_t$.

A \emph{bag-trace at $t$} is a tuple
$\sfR=(\alpha,\calR,\beta,\lambda)$, where
$\alpha,\beta\in X_t\cup\DP(t)$,
$\calR=(R_1,\ldots,R_k)$ is a nonempty ordered family of oriented,
nonempty, pairwise vertex-disjoint paths of $G[X_t]$, and $\lambda$ is a function.
In addition, we require a number of conditions that we detail now; in the remaining, write $u_i$ and $v_i$ for the first and last vertices of $R_i$.
The first set of conditions, we call \emph{distance conditions}, consist of the following:
\begin{itemize}
 \item Along each oriented edge, $h^t_\alpha$ increases by one and
 $h^t_\beta$ decreases by one.
 \item For $1\leq i<k$,
 \[
 h^t_\alpha(v_i)<h^t_\alpha(u_{i+1}),\qquad
 h^t_\beta(v_i)>h^t_\beta(u_{i+1}).
 \]
\end{itemize}
Then we require the following \emph{boundary condition}: 
\begin{itemize}
 \item if $\alpha\in X_t$, then $\alpha=u_1$; if $\beta\in X_t$, then $\beta=v_k$.
\end{itemize}
Finally, we require some \emph{linking conditions} that we detail now.
The \emph{genuine link indices} are the indices $1,\ldots,k-1$ to which we add $0$ if $\alpha\in\DP(t)$, and $k$ if $\beta\in\DP(t)$.
We note $L(\sfR)$ the set of these indices.
We require that $\lambda$ totally maps $L(\sfR)$ to $\{\top,\bot\}$, indicating whether the consecutive subpaths are aimed to be connected through $G-G_t$ (top) or through $G_t-X_t$ (bottom).
Here, the index $i\in\{1,\ldots,k-1\}$ is considered for the connection from $v_i$ to $u_{i+1}$; index $0$ connects the first endpoint to $u_1$; and index $k$ connects
$v_k$ to the second endpoint. 
There is no endpoint link when that
endpoint is itself a bag vertex.
The function $\lambda$ is called \emph{linking function}, and by \emph{links} we designate the pairs $v_i,u_{i+1}$.
We say that such a link is a \emph{top link} if $\lambda(i)=\top$, and \emph{bottom link} if $\lambda(i)=\bot$. 
We say that a link is \emph{incident with a bag vertex $x$} if one
of its attachment vertices $u_1,v_k,v_i,u_{i+1}$ is $x$.
Finally, we require that the link incident to a positive profile is a top link, and that the one incident to a negative profile is a bottom link.

In the remaining of the section, when considering a bag- or subtree-trace, by \emph{bag vertex} (resp.~\emph{bag edge}) we mean a vertex (resp.~edge) of its its ordered family of subpaths, and by  \emph{gap} we informally refer to the separations between consecutive subpaths of the family. 

\paragraph{Partial solutions and states.}
A \emph{partial solution at $t$} is a pair $\sfP=(\calP,\mu)$,
where $\calP$ is a finite set of subtree-traces and
$\mu\colon \calP\to\mathbb Z_{\geq1}$ is a multiplicity function.
The subpaths of these traces must cover $V(G_t)$, and $\mu$ must map complete subtree-traces to $1$. 
Its \emph{order} is $\ord(\sfP)=\sum_{\sfQ\in\calP}\mu(\sfQ)$.
Equivalently, $\sfP$ can be regarded as a finite multiset of trace occurrences.

A \emph{state at $t$} is a pair $\sfS=(\calS,\nu)$, where $\calS$ is
a set of bag-traces and $\nu\colon\calS\to\mathbb Z_{\geq1}$ is a multiplicity function.
Their subpaths must cover $X_t$. Its \emph{order} is 
\[
 \ord(\sfS)=\sum_{\sfR\in\calS}\nu(\sfR).
\]
The empty state is $\sfS_0=(\emptyset,\emptyset)$, where the second
empty set denotes the empty function. It is the unique state of an empty bag.
Multiplicities are necessary because distinct path occurrences can have
the same bag-trace and can later develop different traces; see Figure~\ref{fig:DP:restriction} (right).
A set alone
would lose the number of occurrences available for these extensions.

\paragraph{Projection, realizability, and values.}
The projection $\proj_t(\sfQ)$ of a subtree-trace
$\sfQ=(\alpha,\calQ,\beta)$ on (the bag of) $t$ is empty if it is complete.
Otherwise, form $\calR$ by intersecting the subpaths of $\calQ$ with $X_t$,
in their order and orientation. Replace each explicit endpoint in $I_t$
by its negative profile, retain bag endpoints and positive profiles,
and define the links as follows. Consecutive bag portions have a bottom
link if they are joined within the same $Q_i$, and a top link if they
belong to different $Q_i$. Endpoint links are bottom for explicit
endpoints in $I_t$ and top for positive endpoints.
This gives $\proj_t(\sfQ)=(\alpha',\calR,\beta',\lambda)$.

This tuple is a bag-trace. 
For an explicit interior endpoint, conversion
to its profile preserves its exact distance to every bag
vertex. Thus all endpoint distance functions are unchanged on the bag.
Along a subpath the required unit increments remain valid; across a
removed interior portion or a gap, strict inequalities follow from the
subtree-trace conditions. The endpoint and link conditions follow from
the boundary conditions and the construction.

The projection of a partial solution is obtained by projecting its subtree-traces occurrence by occurrence, discarding empty
projections, and aggregating multiplicities of equal bag-traces. Thus, the multiplicity function of the resulting state is
\[
\nu(\sfR)=\sum_{\substack{\sfQ\in\calP\\\proj_t(\sfQ)=\sfR}}\mu(\sfQ).
\]
Coverage of $G_t$ implies coverage of $X_t$, so this is a state.
We say that $\sfP$ \emph{realizes} $\sfS$ when its projection is $\sfS$,
and define
\[
 \dd_t[\sfS]=\min\{\ord(\sfP):\sfP\text{ realizes }\sfS\},
\]
with value $+\infty$ if no realization exists, or its order exceed $n$.

\subsection{Computation of the values}

We show how the values of states can be computed in bottom-up fashion.

\subsubsection{Leaf Node}

Let $t$ be a leaf node.
Then $X_t=\emptyset$.
Recall that bag-traces are required to contain at least one vertex, by definition.
Thus the only possible state for $t$ consists of an empty set of bag-traces, and an empty multiplicity function.
Its only realizing partial solution is the empty multiset of subtree-traces.
The optimal value of this state is 0 since $G_t$ is the empty graph.

We derive the next lemma.

\begin{lemma}\label{lem:dp-leaf-node}
    If $t$ is a leaf node then $\dd_t[\sfS]=0$ for $\sfS=(\emptyset, \emptyset)$, and $\dd_t[\sfS]=+\infty$ otherwise.
\end{lemma}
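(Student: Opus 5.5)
The plan is to unfold the definitions at a leaf node $t$, where $X_t=\emptyset$ and $G_t$ is the empty graph. The argument has two parts: first identify every state at $t$, then compute the minimum order of a realizing partial solution for each.

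\emph{States at $t$.} A bag-trace contains a nonempty ordered family $\calR$ of nonempty paths of $G[X_t]$. Since $X_t=\emptyset$, no such path exists, so there are no bag-traces at $t$. A state $(\calS,\nu)$ therefore has $\calS=\emptyset$, which forces $\nu$ to be the empty function. This yields exactly $\sfS_0=(\emptyset,\emptyset)$, consistent with the earlier remark that the empty state is the unique state of an empty bag. Any other $\sfS$ is not a state at $t$, so no partial solution projects onto it. The infimum over an empty set is then taken to be $+\infty$, which gives the second part of the claim.

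\emph{Value of $\sfS_0$.} I then show that the empty partial solution $\sfP_0=(\emptyset,\emptyset)$ is the only partial solution at $t$. A subtree-trace requires a nonempty family of nonempty paths of $G_t$, and $V(G_t)=\emptyset$, so no subtree-trace exists. Consequently every partial solution has an empty trace set. $\sfP_0$ is a valid partial solution: its coverage requirement for $V(G_t)=\emptyset$ holds vacuously, and so does the multiplicity condition on complete traces. Its projection aggregates no traces, so it equals $\sfS_0$. Its order is the empty sum $0\le n$, so $\dd_t[\sfS_0]=0$.

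There is no real obstacle here. The only point needing care is to make explicit that nonemptiness of traces, both bag-traces and subtree-traces, is what excludes every other object. One should also note that the convention ``$+\infty$ if no realization exists'' covers the non-states.
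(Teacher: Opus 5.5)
Your proposal is correct and follows the paper's argument: since $X_t=\emptyset$ and $V(G_t)=\emptyset$, nonemptiness of bag-traces and subtree-traces leaves $(\emptyset,\emptyset)$ as the only state, realized only by the empty partial solution of order $0$. Your remark that every other $\mathsf{S}$ gets value $+\infty$ by the empty-minimum convention makes explicit what the paper leaves implicit.
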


\subsubsection{Introduce Vertex Node}

Let $t$ have child $t'$ with $X_t=X_{t'}\cup\{x\}$ and
$V(G_t)=V(G_{t'})\cup\{x\}$. In particular,
$I_t=I_{t'}$ and $x$ has no neighbor in $I_{t'}$.

Before explaining how the value of an introduce vertex node is computed, we introduce a series of operations related to the deletion of vertex of the bag.
While conducted in the natural way, these definition are lengthy due to the complex nature of these objects.
These operations are better understood with Figures~\ref{fig:DP:restriction} and \ref{fig:DP:introduce}.

\paragraph{Restriction of profiles.}
If $X_{t'}\neq\emptyset$, define
\[
 \rho_{t\to t'}(\mathbf a)[0]=\mathbf a[0],\qquad
 \rho_{t\to t'}(\mathbf a)[v]
 =\mathbf a[v]
 \quad \text{for every }v\in X_{t'}.
\]
For $a\notin X_t$, this gives
$\rho_{t\to t'}(\prof_t(a))=\prof_{t'}(a)$:
the sign and all retained distance coordinates are unchanged.
Restriction only deletes the coordinate corresponding to $x$.
The vertex $x$, when converted to a child profile, has positive sign.

\paragraph{Introduce-admissibility}
A bag-trace at $t$ is \emph{introduce-admissible} if no bottom link is
incident with $x$. A state is introduce-admissible if all its traces are.
Every realizable state has this property, because a bottom connection
at $x$ would require an edge from $x$ to $I_t$.

\paragraph{Restriction of bag-traces.}
For an introduce-admissible bag-trace
$\sfR=(\alpha,\calR,\beta,\lambda)$, its restriction
$\operatorname{res}_x(\sfR)$ is defined by deleting $x$ from its ordered,
oriented subpaths and retaining the maximal nonempty contiguous pieces.
If none remain, the restriction is empty. Otherwise $X_{t'}\neq\emptyset$,
and the first endpoint becomes
\[
 \alpha'=
 \begin{cases}
 \alpha,&\alpha\in X_{t'},\\
 \prof_{t'}(x),&\alpha=x,\\
 \rho_{t\to t'}(\alpha),&\alpha\in\DP(t).
 \end{cases}
\]
Define $\beta'$ symmetrically. Connections unaffected by deleting $x$ are not changed. Those going through $x$ are top:
this includes the new link when a subpath splits, the merged link when
a singleton $x$ disappears, and the endpoint link when an endpoint $x$
is replaced by its profile. 

Note that every nonempty restriction is a bag-trace. Indeed, exact distances from retained
endpoint profiles are unchanged on the remaining
bag, and the same is true for distances from an endpoint $x$ replaced by its profile.
All edge and gap conditions therefore persist, and the endpoint and
link rules are consistent with the new signs.
If $X_{t'}=\emptyset$, every bag-trace restricts to the empty tuple and
no profile conversion is performed.

\begin{figure}
 \centering
 \includegraphics[scale=1.2]{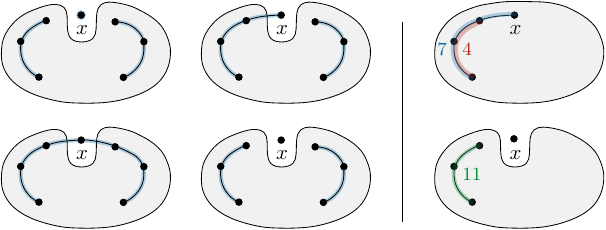}
 \caption{Deleting a bag vertex may remove a singleton subpath (top left),
 shorten a subpath (top middle), split a subpath (bottom left), or may have no incidence on subpaths (bottom middle).
 In any cases, the deletion of a vertex has implications on the distance profiles. 
 Moreover, the restriction of two distinct bag-traces can give a same trace, and the count of the multiplicities is conserved in the restriction of a state (top right and bottom right).
 At an introduce restriction,
 every connection newly passing through the deleted vertex is top.}
 \label{fig:DP:restriction}
\end{figure}

For an admissible state $\sfS=(\calS,\nu)$, define
$\operatorname{res}_x(\sfS)=(\calS',\nu')$ by discarding empty
restrictions and setting
\[
 \nu'(\sfR')=
 \sum_{\substack{\sfR\in\calS\\\operatorname{res}_x(\sfR)=\sfR'}}\nu(\sfR).
\]
The surviving traces cover $X_{t'}$, so this is a state.
Set
\[
 E_x(\sfS)=\{\sfR\in\calS:\operatorname{res}_x(\sfR)=\emptyset\},
 \qquad
 \varepsilon_x(\sfS)=\sum_{\sfR\in E_x(\sfS)}\nu(\sfR).
\]
Thus $E_x(\sfS)$ consists precisely of traces whose sole bag subpath
is the singleton $x$, and $\varepsilon_x(\sfS)$ record how much they participate in the order of the state.

\paragraph{Restriction of subtree-traces.}
The restriction of subtree-traces is defined similarly as for the restriction of bag-traces, as follows.
First delete $x$ from the subpaths, retaining the
ordered maximal nonempty pieces. If no piece remains, discard the trace.
Otherwise retain explicit endpoints in $G_{t'}$, replace an endpoint
$x$ by $\prof_{t'}(x)$, and restrict positive profiles with $\rho_{t\to t'}$.
When $X_{t'}=\emptyset$, a trace containing $x$ consists only of that
singleton: $x$ has no neighbor in $I_t$, and the boundary conditions
prevent a second component or an interior endpoint attached to $x$.
All other traces are complete and remain unchanged.

For a partial solution $\sfP=(\calP,\mu)$, define
$\operatorname{res}_x(\sfP)=(\calP',\mu')$ by restricting subtree-traces occurrence by occurrence and discarding empty
restrictions, and setting
\[
 \mu'(\sfQ')=
 \sum_{\substack{\sfQ\in\calP\\\operatorname{res}_x(\sfQ)=\sfQ'}}\nu(\sfQ).
\]

It can be readily checked that the following properties hold. 

\begin{lemma}\label{lem:restriction-of-bag-trace}
The restriction of an introduce-admissible state at $t$ is a state at $t'$.
\end{lemma}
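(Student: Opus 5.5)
We need to show that $\operatorname{res}_x(\sfS)=(\calS',\nu')$ is a state at $t'$. That means three things: every element of $\calS'$ is a bag-trace at $t'$, $\nu'$ takes values in $\mathbb Z_{\geq1}$, and the subpaths of the traces in $\calS'$ cover $X_{t'}$. The plan is to reduce everything to the single-trace statement already observed before the lemma, namely that every nonempty restriction of an introduce-admissible bag-trace is a bag-trace. We verify that statement carefully first and then aggregate.

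\emph{Step 1: a single restricted trace is a bag-trace.} Fix an introduce-admissible $\sfR=(\alpha,\calR,\beta,\lambda)$ with nonempty restriction $\sfR'=(\alpha',\calR',\beta',\lambda')$. Then $X_{t'}\neq\emptyset$. The subpaths of $\calR'$ are the maximal pieces of the $R_i$ minus $x$. They are nonempty, oriented, pairwise vertex-disjoint paths of $G[X_{t'}]$, and they keep the inherited order.

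For the distance conditions, the key claim is that $h^{t'}_{\alpha'}$ agrees with $h^t_\alpha$ on $X_{t'}$, and likewise for $\beta$. We check this by cases:
\begin{itemize}
\item if $\alpha\in X_{t'}$, both sides equal $d(\alpha,\cdot)$;
\item if $\alpha=x$, then $\alpha'=\prof_{t'}(x)$, whose coordinates are $d(x,\cdot)$;
\item if $\alpha\in\DP(t)$, then $\rho_{t\to t'}$ keeps all coordinates indexed by $X_{t'}$.
\end{itemize}
With this in hand, the unit increments along retained edges persist. A new gap arises only by deleting $x$, either inside a subpath or as a vanished singleton, or by merging across an old gap. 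In each case, a chain of strict inequalities and unit steps between the two new gap ends yields the required strict inequality.

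We also need to check the following:
\begin{itemize}
\item $\alpha'$ lies in $X_{t'}\cup\DP(t')$: the profile of $x$ at $t'$ is realized by $x\notin X_{t'}$, and $\rho_{t\to t'}(\alpha)=\prof_{t'}(a)$ for any representative $a$.
\item The boundary condition holds: if $\alpha'\in X_{t'}$ then $\alpha'=\alpha=u_1$, and $u_1\neq x$, so it remains the first vertex.
\item The link function is total on $L(\sfR')$.
\item The sign rules hold. The profile $\prof_{t'}(x)$ is positive because $x\in O_{t'}$, and its link is declared top. A restricted profile keeps its sign, and its link is unchanged unless it now passes through $x$, in which case it is top and the profile must be positive. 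This is where introduce-admissibility is used. A negative profile at $t$ has a bottom link. If that link were incident with $x$, admissibility would be violated. So the first retained vertex is the old $u_1\neq x$, and the link is inherited unchanged as bottom.
\end{itemize}

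\emph{Step 2: aggregation.} Restrictions of distinct traces may coincide. By definition $\nu'(\sfR')$ is a sum of positive integers over a nonempty preimage, hence $\nu'(\sfR')\geq1$, and $\calS'$ is finite. For coverage, take any $v\in X_{t'}\subseteq X_t$. It lies on a subpath of some $\sfR\in\calS$. Since $v\neq x$, it survives in a piece of $\operatorname{res}_x(\sfR)$, so that restriction is nonempty and $v$ is covered. If $X_{t'}=\emptyset$, everything restricts to empty and we obtain $\sfS_0$, which is the unique state of an empty bag.

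The main obstacle is the link and sign bookkeeping in Step 1, specifically checking that no negative-profile endpoint ends up attached through a top link. This is the only place introduce-admissibility enters, so I would write that case out explicitly and treat the others briefly.
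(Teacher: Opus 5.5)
Your proposal is correct and follows the same route as the paper, which only sketches this lemma in the remarks preceding it. Both arguments first show that each nonempty restriction is a bag-trace. The endpoint distance functions are unchanged on $X_{t'}$, so unit steps and gap inequalities persist; links passing through $x$ become top, and introduce-admissibility keeps negative profiles on untouched bottom links. Both then aggregate multiplicities and note that every $v\in X_{t'}\subseteq X_t\setminus\{x\}$ survives in some restricted trace. Your write-up simply makes explicit the case checks that the paper summarizes as ``readily checked''.
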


\begin{lemma}\label{lem:restriction-of-subtree-trace}
The restriction of a partial solution at $t$ is a partial solution at $t'$.
\end{lemma}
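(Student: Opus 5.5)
The plan is to fix a partial solution $\sfP=(\calP,\mu)$ at $t$ and verify the two requirements of a partial solution at $t'$. First, every nonempty restriction of a subtree-trace is a subtree-trace at $t'$. Second, the restricted family covers $V(G_{t'})$ and gives multiplicity $1$ to complete traces.

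\textbf{Nonempty restrictions are subtree-traces.} Take $\sfQ=(\alpha,\calQ,\beta)\in\calP$ with nonempty restriction $\sfQ'=(\alpha',\calQ',\beta')$.

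\emph{Distance conditions.} The key observation is that $H^{t'}_{\alpha'}$ agrees with $H^{t}_{\alpha}$ on $V(G_{t'})$. There are three cases. If $\alpha\in V(G_{t'})$, both sides are the distance $d(\alpha,\cdot)$ in $G$. If $\alpha=x$, then $\alpha'=\prof_{t'}(x)$ is positive, and Lemma~\ref{lem:exact-distance} at $t'$ gives $\delta_{t'}(\alpha',v)=d(x,v)$. If $\alpha\in\DP^+(t)$, pick a representative $a$; then $\rho_{t\to t'}(\alpha)=\prof_{t'}(a)$, and Lemma~\ref{lem:exact-distance} applied at both nodes gives $\delta_{t'}(\alpha',v)=d(a,v)=\delta_t(\alpha,v)$. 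The same argument applies to $\beta$. Consequently, unit increments along the surviving edges persist. The strict gap inequalities also persist: they held between consecutive old components, and inside a component split at $x$ they follow from the monotonicity along the deleted edges.

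\emph{Boundary conditions.} Every new gap endpoint is either an old gap endpoint or a neighbor of $x$ on an old subpath. An old gap endpoint lies in $X_t\setminus\{x\}=X_{t'}$. A neighbor of $x$ cannot lie in $I_{t'}=I_t$, since $x$ has no neighbor there, so it lies in $X_{t'}$. If $\alpha=x$ was $u_1$, the new first vertex is a neighbor of $x$ and therefore lies in $X_{t'}$, as required for the positive profile $\alpha'$. If $\alpha$ was a positive profile, $u_1\in X_t$; either $u_1\neq x$, or it is replaced by a neighbor of $x$, again in $X_{t'}$. An explicit endpoint in $G_{t'}$ is untouched unless it equals $x$, and that case is already covered. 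Disjointness and nonemptiness of the pieces are clear.

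\textbf{Coverage and multiplicities.} Coverage holds because every vertex of $V(G_{t'})=V(G_t)\setminus\{x\}$ is still on a surviving piece. Complete traces at $t'$ need some care. A subtree-trace that avoids $X_{t'}$ after restriction either was already complete at $t$, and is then unchanged with multiplicity $1$, or it met $X_t$ only at $x$. In the second case, since $x$ has no interior neighbors, the boundary conditions force the trace to be the singleton $x$ alone, which is discarded.

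\textbf{Main obstacle.} One subtlety is that two complete traces at $t$ should not merge under restriction into one complete trace with multiplicity $2$. This cannot happen: complete traces are unchanged by restriction, and $\calP$ is a set, so distinct traces stay distinct. I also read the $\nu(\sfQ)$ in the displayed formula as $\mu(\sfQ)$.

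The genuine difficulty is the profile-equivalence step for $\alpha=x$ versus a restricted positive profile. I would handle it uniformly through Lemma~\ref{lem:exact-distance}, using that $x\in O_{t'}$ and that $X_{t'}$ separates $O_{t'}$ from $G_{t'}$. The empty-bag case $X_{t'}=\emptyset$ is handled separately: as noted in the text, only the singleton $x$ is affected.
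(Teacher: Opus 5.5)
Your proposal is correct and follows the route the paper has in mind: the paper leaves this lemma as ``readily checked,'' and its surrounding remarks (on restricting bag-traces and in the proof of Lemma~\ref{lem:introduce-lifting}) rely on exactly your ingredients, namely invariance of the endpoint distance functions via Lemma~\ref{lem:exact-distance}, persistence of the edge and gap conditions, and the fact that $x$ has no neighbor in $I_t$. There are two harmless imprecisions. First, when $u_1=x$ and $Q_1=\{x\}$ with $k\ge 2$, the new first vertex is $u_2$ rather than a neighbor of $x$; it still lies in $X_{t'}$ because it is an old gap endpoint. Second, when a middle singleton $\{x\}$ vanishes, the new gap inequality is obtained by chaining the two old gap inequalities through $x$, not from a single old gap.
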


\begin{lemma}\label{lem:restriction-preserves-realizability}
If $\sfP$ realizes $\sfS$ at $t$, then its restriction $\operatorname{res}_x(\sfP)$ realizes
$\operatorname{res}_x(\sfS)$ at $t'$.
\end{lemma}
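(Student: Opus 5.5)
The plan is to show that restriction commutes with projection, trace by trace. Realization then follows by aggregating multiplicities.

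Fix a subtree-trace $\sfQ=(\alpha,\calQ,\beta)$ at $t$. The key claim is
\[
 \proj_{t'}(\operatorname{res}_x(\sfQ))=\operatorname{res}_x(\proj_t(\sfQ)),
\]
with both sides empty simultaneously. For this to make sense, we first note that $\proj_t(\sfQ)$ is introduce-admissible. This holds because $x$ has no neighbor in $I_t$, so no portion of a $Q_i$ connects $x$ to a bag vertex through the interior. In addition, an explicit endpoint in $I_t$ cannot be adjacent to $x$ along the path. Hence no bottom link is incident with $x$.

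\textbf{Equality of the underlying families.} Since $X_{t'}=X_t\setminus\{x\}$, intersecting with $X_{t'}$ the pieces of $\calQ$ obtained after deleting $x$ gives the same ordered, oriented family as deleting $x$ from the intersection of $\calQ$ with $X_t$. Both equal the maximal pieces of $\calQ$ on $X_{t'}$.

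\textbf{Emptiness.} The left side is empty exactly when $\operatorname{res}_x(\sfQ)$ is discarded or complete, that is, when it misses $X_{t'}$. The right side is empty exactly when $\proj_t(\sfQ)$ is empty or consists only of $x$. Since $I_{t'}=I_t$, these two conditions coincide.

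\textbf{Endpoints.}
\begin{itemize}
 \item An interior endpoint in $I_t=I_{t'}$ becomes its negative profile on both sides. The two profiles agree because $\rho_{t\to t'}(\prof_t(a))=\prof_{t'}(a)$.
 \item A positive profile $\mathbf a$ is restricted by $\rho_{t\to t'}$ in both orders.
 \item An endpoint equal to $x$ becomes $\prof_{t'}(x)$, which is positive, in both orders.
 \item Bag endpoints in $X_{t'}$ are retained.
\end{itemize}

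\textbf{Links.} A connection between consecutive $X_{t'}$-portions is bottom on the left side exactly when the portions are joined inside one piece of $\operatorname{res}_x(\sfQ)$ through $I_{t'}$ only, without passing through $x$. On the right side, the connection was bottom in $\proj_t(\sfQ)$ and is left unchanged, precisely when it did not pass through $x$. Every connection passing through $x$, or across a gap, is top on both sides. The cases to check are splitting a subpath at $x$, a singleton $x$ disappearing and merging two links, and an endpoint $x$ being converted to a profile. Endpoint links behave the same way: bottom for interior endpoints, top otherwise.

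I expect this link bookkeeping to be the main obstacle. I would handle it by describing each connection on both sides as the vertex sequence of $\calQ$ strictly between the two attachment vertices, with $\alpha$ or $\beta$ included for endpoint links. A connection is then bottom on either side exactly when that sequence is nonempty, contained in $I_t$, and contiguous within one $Q_i$.

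\textbf{Aggregation.} Let $\sfP=(\calP,\mu)$ realize $\sfS$. By Lemmas~\ref{lem:restriction-of-bag-trace} and~\ref{lem:restriction-of-subtree-trace}, both restrictions are well defined. For each nonempty bag-trace $\sfR'$ at $t'$, the multiplicity of $\sfR'$ in the projection of $\operatorname{res}_x(\sfP)$ is the sum of $\mu(\sfQ)$ over occurrences $\sfQ$ with $\proj_{t'}(\operatorname{res}_x(\sfQ))=\sfR'$. By the key claim, this equals the sum over $\sfQ$ with $\operatorname{res}_x(\proj_t(\sfQ))=\sfR'$. Grouping these $\sfQ$ by their projection $\sfR=\proj_t(\sfQ)$, the sum becomes $\sum_{\operatorname{res}_x(\sfR)=\sfR'}\nu(\sfR)=\nu'(\sfR')$.

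Occurrences that become empty on either side are discarded consistently, because the emptiness conditions coincide. Hence the projection of $\operatorname{res}_x(\sfP)$ is $\operatorname{res}_x(\sfS)$, as required.
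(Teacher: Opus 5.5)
Your proposal is correct and is the verification the paper leaves implicit: the paper gives no written proof here, only the remark that these restriction properties ``can be readily checked''. Your argument proves $\proj_{t'}(\operatorname{res}_x(\sfQ))=\operatorname{res}_x(\proj_t(\sfQ))$ trace by trace, including admissibility and the link bookkeeping, and then aggregates multiplicities, which is exactly what that check amounts to. One minor correction: the two emptiness conditions coincide because both say that the vertices of $\calQ$ lying in $X_t$ form a subset of $\{x\}$, which follows from $X_{t'}=X_t\setminus\{x\}$ rather than from $I_{t'}=I_t$.
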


\paragraph{Lifting a restricted trace.}

The following lemma supplies the converse to restriction which we will need in the proof of our recurrence. 

\begin{lemma}\label{lem:introduce-lifting}
Let $\sfR$ be an introduce-admissible bag-trace at $t$.
\begin{enumerate}
    \item If $\operatorname{res}_x(\sfR)\neq\emptyset$,
    then every subtree-trace $\sfQ'$ at $t'$ projecting
    to $\operatorname{res}_x(\sfR)$ can be lifted to
    a subtree-trace $\sfQ$ at $t$ projecting to $\sfR$.
    The lift retains every vertex of $\sfQ'$ and
    restricts back to $\sfQ'$.

    \item If $\operatorname{res}_x(\sfR)=\emptyset$,
    then $\sfR$ is the projection of a subtree-trace
    whose only subpath is the singleton $\{x\}$.
\end{enumerate}
\end{lemma}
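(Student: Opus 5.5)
The plan is to build the lift explicitly, by reinserting $x$ into $\sfQ'$ exactly where $\sfR$ places it. Then check the distance conditions using Lemma~\ref{lem:exact-distance}, which turns every $H$-function into a genuine distance function from a fixed representative vertex. Write $\sfR=(\alpha,\calR,\beta,\lambda)$, $\sfQ'=(\alpha',\calQ',\beta')$ and $\operatorname{res}_x(\sfR)=(\alpha',\calR',\beta',\lambda')$. The projection of $\sfQ'$ on $t'$ equals $\operatorname{res}_x(\sfR)$, so the bag portions of $\calQ'$ are, in order, exactly the pieces of $\calR$ with $x$ deleted.

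Every connection of $\sfR$ passing through $x$ is a top link of $\operatorname{res}_x(\sfR)$. Hence the two pieces it separates lie in distinct subpaths of $\calQ'$, or the piece next to $x$ is the first or last bag portion with a positive endpoint profile. We then do the following.
\begin{itemize}
\item If $x$ is interior to some $R_i$, we join the corresponding subpaths $Q'_j$, $Q'_{j+1}$ through $x$, using the two bag edges of $R_i$ at $x$.
\item If $x$ is an end of some $R_i$, we attach $x$ to the corresponding end of the subpath of $\calQ'$.
\item If $\{x\}$ is a singleton piece of $\calR$, we insert the new subpath $\{x\}$ in the appropriate gap.
\end{itemize}
The endpoints are lifted as follows.
\begin{itemize}
\item If $\alpha=x$, then $\alpha'=\prof_{t'}(x)$ is positive, $u'_1\in X_{t'}$, and we set the new first endpoint to be $x$.
\item If $\alpha\in\DP^+(t)$, then $\alpha'=\rho_{t\to t'}(\alpha)$ and we replace it by $\alpha$.
\item If $\alpha$ is a negative profile, then $\alpha'$ is a negative profile too, so $\sfQ'$ has an explicit interior endpoint, which we keep.
\item Otherwise $\alpha\in X_{t'}$ is kept.
\end{itemize}
The endpoint $\beta$ is treated symmetrically. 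By construction, deleting $x$ gives back $\sfQ'$, and the lift retains all vertices of $\sfQ'$.

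The verification rests on one observation. Suppose $\alpha\in\DP^+(t)$ and $a\in V_t(\alpha)$. Since $V(G_t)=V(G_{t'})\cup\{x\}$, we have $a\in O_t\subseteq O_{t'}$, so $a$ also represents $\rho_{t\to t'}(\alpha)$. By Lemma~\ref{lem:exact-distance} applied at both $t$ and $t'$, both $H^t_\alpha$ and $H^{t'}_{\alpha'}$ coincide with $d(a,\cdot)$ on $V(G_{t'})$, and $H^t_\alpha(x)=d(a,x)$. Explicit endpoints give exact distances trivially. So in every case the $H$-functions of the lift agree with those of $\sfQ'$ on $V(G_{t'})$, and with the $h$-functions of $\sfR$ on $X_t$.

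The unit increments and strict gap inequalities of $\sfQ'$ therefore persist. The new edges and gaps all involve $x$ and bag vertices, so their conditions are read off from the distance conditions of $\sfR$. A gap of $\sfQ'$ that is now cut by a singleton $\{x\}$ splits into two gaps, each governed by a gap condition of $\sfR$. For the boundary conditions, introduce-admissibility guarantees that no bottom link is incident with $x$. So $x$ is never attached to an interior portion, and every new gap is bounded by bag vertices, as required. Projecting the lift on $t$ then reproduces $\calR$, and its links are correct: links at $x$ are top, and the other links are unchanged.

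For part~2, $\sfR$ has the single subpath $\{x\}$. Each endpoint of $\sfR$ is either $x$ itself or a profile whose link is incident with $x$. By admissibility this link is a top link, so the profile is positive. Take $\sfQ=(\alpha,(\{x\}),\beta)$; it has one subpath, and for such a trace there are no edge or gap conditions to check. The boundary conditions hold because $x\in X_t$, and its projection is $\sfR$. I expect the main obstacle to be part~1, specifically the bookkeeping of the cases for where $x$ sits and which link is recreated. The key point I would state first is the equality of $H$-functions via a common representative $a\in O_t\subseteq O_{t'}$.
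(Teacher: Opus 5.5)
\textbf{Gap: negative endpoint profiles in part~1.} Your construction is the same as the paper's: reinsert $x$ where $\sfR$ prescribes, keep explicit endpoints, swap in $x$ or the parent positive profile, and compare distance functions via Lemma~\ref{lem:exact-distance}. Part~2 also matches the paper. The problem is in part~1. When $\alpha\in\DP^-(t)$, you keep the explicit interior endpoint $a\in I_{t'}$ of $\sfQ'$ and say that explicit endpoints ``give exact distances trivially''. It is indeed trivial that $H^t_a=d(a,\cdot)$. What you actually use, however, is that $H^t_a$ agrees with $h^t_\alpha$ on $X_t$. At the new vertex $x$ this means $d(a,x)=\alpha[x]$, and nothing in $\sfQ'$ provides it. The hypothesis that $\sfQ'$ projects to $\operatorname{res}_x(\sfR)$ only gives $\prof_{t'}(a)=\rho_{t\to t'}(\alpha)$, i.e., agreement on $X_{t'}$. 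Without $d(a,x)=\alpha[x]$ you lose two things:
\begin{itemize}
\item you cannot read the unit-increment and gap conditions at the edges and gaps involving $x$ off $\sfR$;
\item you cannot conclude that the projection of the lift has first endpoint $\prof_t(a)=\alpha$, so you have not shown that the lift projects to $\sfR$ at all.
\end{itemize}

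\textbf{The missing step.} The paper isolates this as ``consistency of negative endpoint profiles''. Since $a\in I_{t'}$ and $x\in O_{t'}$, every $a$--$x$ path meets $X_{t'}$. Hence $d(a,x)=\min_{z\in X_{t'}}\bigl(\prof_{t'}(a)[z]+d(z,x)\bigr)$ is determined by $\prof_{t'}(a)$. Because $\alpha\in\DP^-(t)$ is a represented profile, you can pick $a''\in V_t(\alpha)$. Then $a''\in I_t=I_{t'}$ and $\prof_{t'}(a'')=\rho_{t\to t'}(\alpha)=\prof_{t'}(a)$. The same formula gives $\alpha[x]=d(a'',x)=d(a,x)$, i.e., $a\in V_t(\alpha)$. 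Note that this uses in an essential way that profiles are restricted to represented ones. With this inserted, your claim that all $H$-functions of the lift agree with the $h$-functions of $\sfR$ on $X_t$ holds in every case, and the rest of your verification goes through. Separately, for $\alpha=x$ you should also invoke Lemma~\ref{lem:exact-distance} with representative $x$ to get $H^{t'}_{\prof_{t'}(x)}=d(x,\cdot)$ on $V(G_{t'})$; this is the same argument as your positive-profile observation.
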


\begin{proof}
For the first statement, retain all portions of $\sfQ'$
inside $G_{t'}$ and insert $x$, when prescribed by $\sfR$,
with the bag edges and position specified by $\sfR$.
Connections through the child interior are retained.

Explicit endpoints in $G_{t'}$ are retained.
An endpoint prescribed as $x$ by $\sfR$ becomes the
explicit vertex $x$.
An endpoint prescribed as a positive profile by $\sfR$
is replaced by that profile.

We first verify the consistency of negative endpoint
profiles.
Suppose an explicit endpoint $a$ of $\sfQ'$ lies in
$G_{t'}-X_{t'}$.
Since $X_{t'}$ separates $x$ from $I_{t'}$, we have
\[
 d(a,x)=\min_{z\in X_{t'}}\bigl(\prof_{t'}(a)[z]+d(z,x)\bigr).
\]
Thus the profile of $a$ w.r.t.~$t$ is determined by its profile w.r.t.~$t'$.
Since $I_t=I_{t'}$, it follows that
$V_t(\prof_t(a))=V_{t'}(\prof_{t'}(a))$.
Thus retaining $a$ gives the required parent profile.

We next verify the distance conditions.
On the retained vertices of $G_{t'}$, the endpoint
exact distance functions are unchanged.
This is immediate for retained explicit endpoints.
For a positive parent profile, choose a representative
outside $G_t$ and apply
Lemma~\ref{lem:exact-distance} at both nodes.
For an endpoint changed from $\prof_{t'}(x)$ to $x$,
the same lemma gives
\[
    d(x,v)
    =
    \delta_{t'}\bigl(\prof_{t'}(x),v\bigr)
\]
for every $v\in V(G_{t'})$.
Therefore all distance conditions within retained
portions are preserved.

Every newly inserted edge has both endpoints in $X_t$:
$x$ has no neighbor in the child interior.
The required increase by one along such an edge follows
from the distance conditions of $\sfR$, using the
endpoint-profile consistency established above.
Likewise, inequalities across newly created gaps follow
from the ordered bag vertices in $\sfR$.
For gaps inherited from $\sfQ'$, they follow from the
child distance conditions.
The same argument applies to both endpoints.

The construction also preserves the boundary conditions.
Component ends that still require an external continuation
belong to $X_t$, and explicit endpoints occupy the
prescribed ends.
The resulting components are vertex-disjoint paths:
the old components are vertex-disjoint, $x$ is new,
and it is inserted only once, in its prescribed position.

Thus the resulting tuple $\sfQ$ is a subtree-trace.
Its projection is $\sfR$, and restricting it to $t'$
recovers $\sfQ'$.

For the second statement, the only bag subpath of
$\sfR$ is $\{x\}$.
An explicit endpoint must therefore equal $x$.
A profile endpoint cannot be negative, since its
incident link would then be a bottom link at $x$.
Consequently, both endpoints are either $x$ or positive
profiles.
Using these endpoints and the singleton subpath $\{x\}$
defines the required subtree-trace.

If $X_{t'}=\emptyset$, every bag-trace at $t$ has
empty restriction, so only the second statement applies;
no profile at an empty bag is needed.
\end{proof}

\paragraph{Recurrence.}

We are now ready to prove the following recurrence.

\begin{figure}
 \centering
 \includegraphics[scale=1.2]{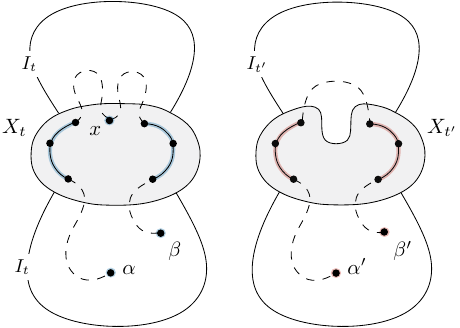}
 \caption{A bag-trace (left) and its restriction (right) in the context of an introduce-node.}
 \label{fig:DP:introduce}
\end{figure}

\begin{lemma}\label{lem:dp-introduce-node}
Let $t$ be an introduce vertex node with child $t'$,
and let $\sfS$ be a state at $t$.
If $\sfS$ is not introduce-admissible, then
\[
    \dd_t[\sfS]=+\infty.
\]
Otherwise, letting $\sfS'=\operatorname{res}_x(\sfS)$,
we have
\[
    \dd_t[\sfS]
    =
    \dd_{t'}[\sfS']
    +
    \varepsilon_x(\sfS).
\]
\end{lemma}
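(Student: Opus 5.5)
We prove the two inequalities separately, after first disposing of the non-admissible case. If $\sfS$ is not introduce-admissible, some trace has a bottom link at $x$. Any realizing partial solution would then contain a subtree-trace whose projection places a bottom connection at $x$. Such a connection is either an edge from $x$ into $I_t$, or an explicit interior endpoint adjacent along the path to $x$. Both are impossible because $x$ has no neighbour in $I_t=I_{t'}$. Hence no realization exists and $\dd_t[\sfS]=+\infty$.

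For $\dd_t[\sfS]\ge \dd_{t'}[\sfS']+\varepsilon_x(\sfS)$, take an optimal realization $\sfP$ of $\sfS$ (if $\dd_t[\sfS]=+\infty$ there is nothing to show). By Lemma~\ref{lem:restriction-preserves-realizability}, $\operatorname{res}_x(\sfP)$ realizes $\sfS'$. So it suffices to show
\[
\ord(\sfP)\ \ge\ \ord(\operatorname{res}_x(\sfP))+\varepsilon_x(\sfS).
\]
Restriction discards exactly the occurrences whose only vertex in $G_t$ is $x$. Every other occurrence keeps its multiplicity, possibly merging with other occurrences, which does not change the order. A discarded occurrence is partial, since it meets $X_t$, and its projection has sole bag subpath $\{x\}$, so it projects into $E_x(\sfS)$. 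Conversely, an occurrence projecting into $E_x(\sfS)$ meets $G_t$ only in $x$: every vertex of $I_t$ on it would have to be joined to $x$ through a bottom connection, which admissibility rules out. So the discarded mass is exactly $\varepsilon_x(\sfS)$, and the inequality is in fact an equality. We also check that the order bound $\le n$ is preserved.

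For $\dd_t[\sfS]\le \dd_{t'}[\sfS']+\varepsilon_x(\sfS)$, take an optimal realization $\sfP'$ of $\sfS'$ and build a realization of $\sfS$. Each bag-trace $\sfR'$ of $\sfS'$ has multiplicity $\nu'(\sfR')=\sum\nu(\sfR)$, the sum running over the $\sfR\in\calS$ that restrict to $\sfR'$. The occurrences of $\sfP'$ projecting to $\sfR'$ have total multiplicity $\nu'(\sfR')$. Split these occurrences arbitrarily into groups of sizes $\nu(\sfR)$, splitting multiplicities of a single subtree-trace if needed. Lift each occurrence in the group for $\sfR$ using Lemma~\ref{lem:introduce-lifting}(1), so that it projects to $\sfR$. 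Complete subtree-traces of $\sfP'$ have empty projection; they remain complete subtree-traces at $t$ and are kept unchanged. Finally, for each $\sfR\in E_x(\sfS)$, add $\nu(\sfR)$ copies of the singleton lift given by Lemma~\ref{lem:introduce-lifting}(2).

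We then verify three things about the result. Lifts retain all vertices, so coverage of $V(G_{t'})$ persists. Since $\sfS$ covers $X_t\ni x$, some trace contains $x$; its lift (or a singleton lift) contains $x$, so $V(G_t)$ is covered. Equal lifted subtree-traces are aggregated, which leaves the projection equal to $\sfS$ and the order equal to $\ord(\sfP')+\varepsilon_x(\sfS)$. Complete traces keep multiplicity $1$, because they are never lifted or duplicated.

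The main obstacle is the bookkeeping of multiplicities in the upper bound: a single subtree-trace with multiplicity $\mu$ may need to be split among several parent bag-traces, and distinct lifts may coincide. The plan is to work throughout with multisets of occurrences and aggregate only at the end. The value convention capping the order at $n$ needs a remark: if the constructed order exceeds $n$, the upper bound is still fine as long as $\dd_t[\sfS]$ is read as $+\infty$ consistently on both sides. We will state the recurrence with the convention that $+\infty$ absorbs sums above $n$.
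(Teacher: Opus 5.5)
Your proposal is correct and follows the paper's proof essentially step for step. The non-admissible case uses the absence of neighbours of $x$ in $I_t$. The lower bound restricts an optimal realization and observes that exactly the singleton-$\{x\}$ occurrences, those projecting into $E_x(\sfS)$, disappear. The upper bound allocates the occurrences of an optimal child realization among the parent traces, lifts them with Lemma~\ref{lem:introduce-lifting}, and adds $\varepsilon_x(\sfS)$ singleton traces.

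Two of your points go slightly beyond what the paper writes. You check explicitly that occurrences projecting into $E_x(\sfS)$ contain no interior vertex. You also note that, because values are capped at order $n$, any right-hand side exceeding $n$ must be read as $+\infty$, an edge case the paper leaves implicit.
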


\begin{proof}
If $\sfS$ is not introduce-admissible, one of its
bag-traces requires a bottom connection incident with
$x$.
Since $x$ has no neighbor in $G_t-X_t$, no
subtree-trace can realize that bag-trace.
Thus $\dd_t[\sfS]=+\infty$.

Suppose now that $\sfS$ is introduce-admissible.
This situation is depicted in Figure~\ref{fig:DP:introduce}.

We first prove
\[
    \dd_t[\sfS]
    \geq
    \dd_{t'}[\sfS']+\varepsilon_x(\sfS).
\]
If $\dd_t[\sfS]=+\infty$, the inequality is immediate.
Otherwise, let $\sfP$ be a minimum-order partial solution realizing $\sfS$, and $\sfP'$ be its restriction to $t'$.
By Lemma~\ref{lem:restriction-preserves-realizability}, $\sfP'$ realizes $\sfS'$.

Recall that in the restriction to $t'$, an occurrence disappears precisely when its only subpath is the singleton $\{x\}$.
These are exactly the occurrences whose bag projections belong to $E_x(\sfS)$.
Consequently,
\[
    \ord(\sfP')
    =
    \ord(\sfP)-\varepsilon_x(\sfS).
\]
and so $\dd_{t'}[\sfS'] \leq \dd_t[\sfS]-\varepsilon_x(\sfS)$ leading to the desired inequality.

For the converse, suppose that $\dd_{t'}[\sfS']$
is finite and choose a minimum-order partial solution
$\sfP'$ realizing $\sfS'$.

We construct a partial solution for $t$ as follows.
First, we retain every complete occurrence of $\sfP'$ unchanged.
Then, for each bag-trace $\sfR'$ of $\sfS'$, we collect all
occurrences of $\sfP'$ projecting to $\sfR'$.
Their number is $\nu'(\sfR')$, and the definition of
state restriction gives
\[
    \nu'(\sfR')
    =
    \sum_{\substack{\sfR\in\calS\\
        \operatorname{res}_x(\sfR)=\sfR'}}
        \nu(\sfR).
\]
We may therefore partition these occurrences among the
parent bag-traces $\sfR$ restricting to $\sfR'$, allocating
exactly $\nu(\sfR)$ occurrences to each such $\sfR$.
We lift each allocated occurrence to its assigned parent
bag-trace using Lemma~\ref{lem:introduce-lifting}.
This replaces each allocated occurrence by exactly one
occurrence and preserves all its covered vertices.
For every $\sfR\in E_x(\sfS)$, additionally insert
$\nu(\sfR)$ occurrences of the singleton subtree-trace
provided by the second statement of Lemma~\ref{lem:introduce-lifting}.
This completes the construction of $\sfP$.

Note that every vertex of $G_{t'}$ remains covered.
Moreover, $\sfS$ covers $X_t$, so some prescribed parent bag-trace contains $x$.
Its corresponding lifted or newly inserted occurrence covers $x$.
Thus $\sfP$ covers $G_t$.

By the allocation of occurrences, exactly $\nu(\sfR)$
occurrences project to each $\sfR\in\calS$.
Hence $\sfP$ realizes $\sfS$.
Its order is
\[
    \ord(\sfP)
    =
    \ord(\sfP')
    +
    \sum_{\sfR\in E_x(\sfS)}\nu(\sfR)
    =
    \dd_{t'}[\sfS']+\varepsilon_x(\sfS).
\]
Therefore,
\[
    \dd_t[\sfS]
    \leq
    \dd_{t'}[\sfS']+\varepsilon_x(\sfS).
\]
This concludes the proof.
\end{proof}

\subsubsection{Forget Vertex Node}

Let $t$ be a forget vertex node with child $t'$.
Then $X_t=X_{t'}\setminus\{x\}$ and $G_t=G_{t'}$ for some vertex $x\in X_{t'}$.
Moreover, $x$ has no neighbor in $G-G_t$.

Unlike an introduce node, a forget node does not remove
a vertex from the subtree graph.
It only moves $x$ from the bag to the interior of the
subtree.
Accordingly, subtree-traces retain their subpaths, whereas
their bag projections must be updated.

\paragraph{Restriction of profiles.}

Assume that $X_t\neq\emptyset$.
For a {signed distance profile} $\mathbf{a}$ at $t'$,
define $\rho_{t'\to t}(\mathbf{a})$ by
\[
    \rho_{t'\to t}(\mathbf{a})[0]=\mathbf{a}[0],
    \qquad
    \rho_{t'\to t}(\mathbf{a})[v]
    =
\mathbf{a}[v]
    \quad\text{for every }v\in X_t.
\]
Thus, we delete the coordinate corresponding to $x$ and
leave all remaining distance coordinates unchanged.
Note that for every vertex $a\notin X_{t'}$, we have
\[
    \rho_{t'\to t}\bigl(\prof_{t'}(a)\bigr)
    =
    \prof_t(a).
\]

\paragraph{Forget-admissibility.}

Let $\sfR'$ be a bag-trace at $t'$.
We call $\sfR'$ \emph{forget-admissible} if no top link
is incident with $x$.
This condition expresses that, after forgetting $x$,
a subpath cannot leave the subtree graph through $x$.

\paragraph{Restriction of bag-traces.}

For a forget-admissible bag-trace
\[
    \sfR'=(\alpha',\calR',\beta',\lambda'),
\]
we define its \emph{forget-restriction},
denoted by $\operatorname{forg}_x(\sfR')$, as follows.

First, delete $x$ from the ordered and oriented subpaths
of $\calR'$, possibly shortening or splitting a subpath of $\calR'$, and keeping the inherited order and orientation.
Let $\calR$ be the obtained set.
If $\calR=\emptyset$, set
\[
    \operatorname{forg}_x(\sfR')=\emptyset.
\]
Otherwise, $X_t\neq\emptyset$, and the first endpoint
of the restricted trace is
\[
    \alpha=
    \begin{cases}
        \alpha' & \text{if}\ \alpha'\in X_t;\\
        \prof_t(x) & \text{if}\ \alpha'=x;\\
        \rho_{t'\to t}(\alpha') & \text{if}\ \alpha'\in\DP(t').
    \end{cases}
\]
In particular, $\prof_t(x)$ is a negative profile.
The second endpoint $\beta$ is defined symmetrically.

The linking function $\lambda$ is obtained by retaining the inherited links if they did not involve $x$, and putting a bottom link otherwise.
More explicitly:
\begin{itemize}
    \item If deleting $x$ splits one subpath into
    two nonempty subpaths, the link between them is bottom;

    \item If deleting the singleton subpath $\{x\}$
    joins two consecutive links into one, the resulting link is bottom.
    Both original links are bottom by forget-admissibility;

    \item If deleting $x$ shortens a subpath at an end incident with a link, that link remains bottom, with its attachment moved to the new end;

    \item If an endpoint equal to $x$ is replaced by
    $\prof_t(x)$, its connection to the first or last
    surviving subpath, respectively, is bottom.
\end{itemize}
These rules apply symmetrically at the two ends of the
ordered trace.
We then put
\[
    \operatorname{forg}_x(\sfR')
    =
    (\alpha,\calR,\beta,\lambda).
\]

All retained distance coordinates are unchanged, and an endpoint $x$ converted to its profile still has distance $\prof_t(x)[v]=d(x,v)$ to each $v\in X_t$. Thus the distance conditions are preserved, and every nonempty forget-restriction is a bag-trace at $t$.
If $X_t=\emptyset$, every bag-trace at $t'$ consists of the singleton path $\{x\}$, and its forget-restriction is therefore empty.

\paragraph{Compatibility of states.}

A state $\sfS'=(\calS',\nu')$ at $t'$ is
\emph{forget-admissible} if all its bag-traces are
forget-admissible.
Its \emph{forget-restriction} is the state
\[
    \operatorname{forg}_x(\sfS')=(\calS,\nu),
\]
where
\[
    \calS
    =
    \bigl\{
        \operatorname{forg}_x(\sfR')
        :
        \sfR'\in\calS',
        \ \operatorname{forg}_x(\sfR')\neq\emptyset
    \bigr\},
\]
and, for every $\sfR\in\calS$,
\[
    \nu(\sfR)
    =
    \sum_{\substack{\sfR'\in\calS'\\
        \operatorname{forg}_x(\sfR')=\sfR}}
        \nu'(\sfR').
\]
Every vertex of $X_t$ remains covered, so this indeed
defines a state at $t$.

For a state $\sfS$ at $t$, let
\[
    F(\sfS)
    =
    \{
        \sfS' :
        \sfS'\text{ is a forget-admissible state at }t'\ \text{and}\ \operatorname{forg}_x(\sfS')=\sfS
    \}.
\]

\paragraph{Transport of partial solutions.}

We establish the relationship between realizations
on the two sides of a forget node.
We regard partial solutions
as multisets of subtree-trace occurrences.

\begin{lemma}\label{lem:forget-transport}
Let $t$ be a forget vertex node with child $t'$.
The following statements hold.
\begin{enumerate}
    \item If $\sfS'$ is a forget-admissible state at $t'$
    and $\sfP'$ realizes $\sfS'$, then there exists
    a partial solution $\sfP$ at $t$ realizing
    $\operatorname{forg}_x(\sfS')$ and having the
    same order as $\sfP'$.

    \item If $\sfP$ realizes a state $\sfS$ at $t$,
    then there exist a state $\sfS'\in F(\sfS)$
    and a partial solution $\sfP'$ at $t'$ realizing
    $\sfS'$ and having the same order as $\sfP$.
\end{enumerate}
\end{lemma}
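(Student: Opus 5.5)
The plan is to transport each occurrence individually, using that $G_t=G_{t'}$: the subpaths of a subtree-trace stay the same, and only the bag used for projection and for the profile of positive endpoints changes. First compare subtree-traces at $t'$ and at $t$. Since $V(G_t)=V(G_{t'})$, the explicit endpoints are the same vertices. Since $O_t=O_{t'}$, positive profiles are related by $\rho_{t'\to t}$, and every $a\in O_t$ satisfies $\rho_{t'\to t}(\prof_{t'}(a))=\prof_t(a)$. By Lemma~\ref{lem:exact-distance}, $\delta_{t'}(\mathbf a,v)=d(a,v)=\delta_t(\rho_{t'\to t}(\mathbf a),v)$ for every $v\in V(G_t)$ and any representative $a$. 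Hence the functions $H^{t'}_\eta$ and $H^t_\eta$ agree, and the distance conditions are identical at both nodes.

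The only conditions that differ are the boundary conditions. At $t'$ a gap or a positive-endpoint attachment may sit at $x$; at $t$ it must sit in $X_t$. Define the map $\sigma$ from subtree-traces at $t'$ to subtree-traces at $t$: keep $\calQ$, and replace each positive profile $\mathbf a$ by $\rho_{t'\to t}(\mathbf a)$. This is valid exactly when no gap end and no positive-endpoint attachment lies at $x$, which is precisely the condition that the projection at $t'$ has no top link incident with $x$. Conversely, every subtree-trace $\sfQ$ at $t$ is also a subtree-trace at $t'$, after lifting positive profiles to $t'$-profiles, because $X_t\subseteq X_{t'}$. However, a positive $t$-profile can correspond to several $t'$-profiles, so we will choose a representative $a\in V_t(\mathbf a)\cap O_t$ and use $\prof_{t'}(a)$. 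By the computation above, any such choice satisfies the distance conditions. Completeness is preserved except for traces whose only bag vertex was $x$: these become complete at $t$. Their endpoints are then explicit, since a positive endpoint would require a top link at $x$. Multiplicity one for complete traces will need care, and I will handle it by splitting a multiplicity-$m$ occurrence into $m$ occurrences. I will treat partial solutions as multisets of occurrences, so this splitting is harmless.

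The key local claim is that for a subtree-trace $\sfQ'$ at $t'$ with forget-admissible projection, $\proj_t(\sigma(\sfQ'))=\operatorname{forg}_x(\proj_{t'}(\sfQ'))$, with empty on both sides exactly when $\sfQ'$ has $x$ as its only bag vertex. I will check this against the rules for $\operatorname{forg}_x$. Deleting $x$ from bag portions matches intersecting with the smaller bag. An inner split or merge at $x$ becomes a connection through $x\in I_t$ inside one $Q_i$, so it is bottom. An endpoint $x$ becomes an interior explicit endpoint with negative profile $\prof_t(x)$ and a bottom link. Positive endpoints are restricted by $\rho_{t'\to t}$. I expect this case-checking to be the main obstacle: making sure that the rule ``retain links not involving $x$, else bottom'' exactly reproduces the projection definition in every case, namely shortening, splitting, singleton removal and endpoint conversion.

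Part 1 then follows directly. Every occurrence in $\sfP'$ has a projection in the forget-admissible state $\sfS'$ or is complete, so applying $\sigma$ occurrence by occurrence gives $\sfP$. Coverage is unchanged because the subpaths are unchanged, and the order is preserved. Projections aggregate exactly as in the definition of $\nu$ for $\operatorname{forg}_x(\sfS')$, so $\sfP$ realizes $\operatorname{forg}_x(\sfS')$.

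For Part 2, given $\sfP$ realizing $\sfS$, I lift each occurrence to $t'$ as described, making positive-profile choices via representatives, and let $\sfP'$ be the resulting multiset. Then $\sfS'$ is defined as its projection at $t'$. This is a state, since $\sfP'$ covers $X_{t'}$. It is forget-admissible, because any top link at $x$ would force a gap or a positive attachment at $x\notin X_t$, contradicting the boundary conditions at $t$. Finally, $\sigma$ inverts the lift, so the local claim gives $\operatorname{forg}_x(\sfS')=\sfS$, hence $\sfS'\in F(\sfS)$, and the order is preserved.
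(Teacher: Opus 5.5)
Your proposal is essentially the paper's proof: you transport each occurrence with its subpaths unchanged, restrict positive profiles by $\rho_{t'\to t}$ (or lift them via a chosen representative), obtain the distance conditions from Lemma~\ref{lem:exact-distance}, and use forget-admissibility to exclude gap ends or positive attachments at $x$; your commutation claim $\proj_t(\sigma(\sfQ'))=\operatorname{forg}_x(\proj_{t'}(\sfQ'))$ is exactly what the paper asserts ``by construction''. One small correction: splitting a multiplicity-$m$ occurrence that becomes complete into $m$ identical occurrences does not restore $\mu=1$ on complete traces (the paper silently passes over the same point), so you should instead discard the duplicate copies, which only lowers the order and still suffices for Lemma~\ref{lem:dp-forget-node}, or explicitly relax that constraint.
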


\begin{proof}
We start with the first statement. 
Consider the tuples obtained by keeping the subpaths and explicit endpoints of every subtree-trace of $\sfP'$, and, if $X_t\neq\emptyset$, replacing each positive endpoint profile $\mathbf{a}'$ by $\rho_{t'\to t}(\mathbf{a}')$.

We verify that the resulting tuples are subtree-traces at $t$.
If $X_t\neq\emptyset$, choose a representative $a\in V_{t'}(\mathbf a')$ for each positive endpoint profile. By Lemma~\ref{lem:exact-distance}, for every $v\in V(G_t)$,
\[
 \delta_t\bigl(\rho_{t'\to t}(\mathbf a'),v\bigr)
 =d(a,v)=\delta_{t'}(\mathbf a',v).
\]
Distances from explicit endpoints are also unchanged. Hence all distance conditions are preserved.

As of the boundary conditions, two situations arise depending on whether a trace still intersects the bag, or no longer does so. 
In the first scenario, only the first and third boundary conditions could be violated, when $x$ is the endpoint of a subpath in a bag-trace, in which case it must be incident to a top link. This situation, however, is excluded by the forget-admissibility of bag-traces in $\sfP'$.
In the other scenario, the bag-trace may already be avoided by $X_{t'}$, and hence remains complete.
Otherwise, its intersection with $X_{t'}$ was exactly $\{x\}$.
The boundary conditions and forget-admissibility imply that it consists of a single subpath with both endpoints explicit in $G_t$.
It therefore becomes complete.
In particular, when $X_t=\emptyset$, all resulting
occurrences are complete and no positive endpoint
profile needs to be converted.

So far, we have retained every occurrence and every covered vertex.
After aggregating equal resulting traces, summing the multiplicities, we obtain a partial solution $\sfP$ of the same order as $\sfP'$.
By construction, its bag projection is
$\operatorname{forg}_x(\sfS')$, as desired.

\medskip

We now turn to the second statement.
First retain every subpath and
explicit endpoint of every bag-trace of $\sfP$.
For each positive endpoint profile $\mathbf{a}$ at $t$,
choose a representative $a\in V_t(\mathbf{a})$ and
replace the profile by $\prof_{t'}(a)$.
Such a representative remains outside $G_{t'}$ because
$G_t=G_{t'}$.
If $X_t=\emptyset$, all occurrences are complete,
so no such replacement is necessary.

For a replaced endpoint, its exact distance function on $G_t$ is unchanged, by Lemma~\ref{lem:exact-distance}.
Hence all distance conditions are preserved.
The boundary conditions remain valid because
$X_t\subseteq X_{t'}$.
We obtain a partial solution $\sfP'$ at $t'$ with
the same order and coverage as $\sfP$.

Let $\sfS'$ be its projection on $X_{t'}$.
No top link of this projection can be incident with $x$:
such a link would correspond to a component end that
must continue outside $G_t$, whereas $x\notin X_t$
and the original subtree-trace satisfies the boundary
conditions at $t$.
Thus $\sfS'$ is forget-admissible.

Finally, deleting $x$ from these bag projections and
restricting their endpoint profiles recovers the
original projections on $X_t$.
After aggregating multiplicities, this gives
\[
    \operatorname{forg}_x(\sfS')=\sfS.
\]
Therefore $\sfS'\in F(\sfS)$, as required.
\end{proof}

\paragraph{Recurrence.}

We are now ready to prove our recurrence.

\begin{lemma}\label{lem:dp-forget-node}
If $t$ is a forget vertex node with child $t'$
and $\sfS$ is a state at $t$, then
\[
    \dd_t[\sfS]
    =
    \min_{\sfS'\in F(\sfS)}
        \dd_{t'}[\sfS'].
\]
\end{lemma}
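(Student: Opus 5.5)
We prove the two inequalities separately, and both follow directly from the two statements of Lemma~\ref{lem:forget-transport}. The forget node does not change the subtree graph, since $G_t=G_{t'}$. So the only work is in passing between the two kinds of projection, and Lemma~\ref{lem:forget-transport} already does this while preserving order.

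\emph{Lower bound on the right-hand side.} We show $\min_{\sfS'\in F(\sfS)}\dd_{t'}[\sfS']\le \dd_t[\sfS]$. If $\dd_t[\sfS]=+\infty$, there is nothing to prove. Otherwise, take a minimum-order partial solution $\sfP$ realizing $\sfS$ at $t$, so that $\ord(\sfP)=\dd_t[\sfS]\le n$. By the second statement of Lemma~\ref{lem:forget-transport}, there exist $\sfS'\in F(\sfS)$ and a partial solution $\sfP'$ at $t'$ that realizes $\sfS'$ and has the same order. Hence $\dd_{t'}[\sfS']\le\ord(\sfP')=\dd_t[\sfS]$. The order of $\sfP'$ is still at most $n$, so the cap on values in the definition of $\dd$ causes no issue.

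\emph{Upper bound.} We show $\dd_t[\sfS]\le \dd_{t'}[\sfS']$ for every $\sfS'\in F(\sfS)$. Fix such an $\sfS'$ with finite value, and take a minimum-order $\sfP'$ realizing it. Since $\sfS'$ is forget-admissible, the first statement of Lemma~\ref{lem:forget-transport} gives a partial solution $\sfP$ at $t$ that realizes $\operatorname{forg}_x(\sfS')=\sfS$ and satisfies $\ord(\sfP)=\ord(\sfP')=\dd_{t'}[\sfS']\le n$. Therefore $\dd_t[\sfS]\le\dd_{t'}[\sfS']$. Taking the minimum over $F(\sfS)$ gives the inequality. The convention $\min\emptyset=+\infty$ covers the case where $F(\sfS)$ is empty: by the first part, $\dd_t[\sfS]$ must then be infinite as well.

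The main point needing care is already handled inside Lemma~\ref{lem:forget-transport}. This includes forget-admissibility of the child projection, so that the minimum ranges over the correct set $F(\sfS)$, and the fact that aggregated multiplicities match exactly. What remains is to check that the order, which is preserved exactly in both directions, interacts correctly with the "order exceeds $n$ gives $+\infty$" convention. It does, because both transports preserve order exactly.
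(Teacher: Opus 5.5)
Your proof is correct and follows the paper's argument exactly: each inequality comes directly from the corresponding statement of Lemma~\ref{lem:forget-transport}, which transports a minimum-order realization across the forget node while preserving its order. Your extra remarks on the cap at $n$ and on the case $F(\mathsf{S})=\emptyset$ are sound; note only that your first paragraph gives a lower bound on the left-hand side $\mathsf{d}_t[\mathsf{S}]$, not on the right-hand side as its heading says.
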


\begin{proof}
Let $\sfS'\in F(\sfS)$.
If $\dd_{t'}[\sfS']$ is finite, choose a minimum-order
partial solution realizing $\sfS'$.
By the first statement of
Lemma~\ref{lem:forget-transport}, it yields a partial
solution realizing $\sfS$ with the same order.
Hence
\[
    \dd_t[\sfS]\leq\dd_{t'}[\sfS'].
\]
The inequality is immediate if the child value is
infinite.
Taking the minimum over $F(\sfS)$ gives
\[
    \dd_t[\sfS]
    \leq
    \min_{\sfS'\in F(\sfS)}\dd_{t'}[\sfS'].
\]

Conversely, suppose that $\dd_t[\sfS]$ is finite,
and choose a minimum-order partial solution $\sfP$
realizing $\sfS$.
By the second statement of
Lemma~\ref{lem:forget-transport}, there exist
$\sfS'\in F(\sfS)$ and a partial solution $\sfP'$
realizing $\sfS'$ with
\[
    \ord(\sfP')
    =
    \ord(\sfP)
    =
    \dd_t[\sfS].
\]
Therefore,
\[
    \min_{\sfS'\in F(\sfS)}\dd_{t'}[\sfS']
    \leq
    \dd_t[\sfS].
\]
This inequality is immediate when $\dd_t[\sfS]$ is infinite.
Together, the two inequalities prove the desired equality.
\end{proof}

\subsubsection{Join Node}

Let $t$ be a join node with children $t_1$ and $t_2$;
then $X_t=X_{t_1}=X_{t_2}$.

\paragraph{Compatibility of traces.}

We first define compatibility at the level of individual traces.
Let
\[
    \sfR_1=(\alpha_1,\calR_1,\beta_1,\lambda_1),
    \qquad
    \sfR_2=(\alpha_2,\calR_2,\beta_2,\lambda_2),
    \qquad
    \sfR=(\alpha,\calR,\beta,\lambda)
\]
be bag-traces at $t_1$, $t_2$, and $t$, respectively.
We say that the triple $(\sfR_1,\sfR_2,\sfR)$ is
\emph{join-compatible} if the following conditions hold:
\begin{itemize}
    \item The three traces have the same subpaths, with a common
    ordering and orientation satisfying their distance conditions.
    In particular, $\calR_1=\calR_2=\calR$.

    \item The endpoints $\alpha_1,\alpha_2,\alpha$ are either
    all equal to the same vertex of $X_t$, or are all distance
    profiles with identical distance coordinates and signs
    satisfying
    \[
        (\alpha_1[0],\alpha_2[0],\alpha[0])
        \in
        \bigl\{(+,+,+),(-,+,-),(+,-,-)\bigr\}.
    \]
    The same condition holds for $\beta_1,\beta_2,\beta$.

    \item For every genuine link index $i$ in the common domain
    $L(\sfR_1)=L(\sfR_2)=L(\sfR)$,
    \[
        \bigl(\lambda_1(i),\lambda_2(i),\lambda(i)\bigr)
        \in
        \bigl\{
            (\top,\top,\top),
            (\bot,\top,\bot),
            (\top,\bot,\bot)
        \bigr\}.
    \]
\end{itemize}
Thus, an endpoint outside the bag may lie in the interior of
at most one child subtree; it is interior to the parent subtree
exactly when it is interior to one of the children.
Similarly, a link may be a bottom link in at most one child,
and it is a bottom link in the parent exactly when it is a
bottom link in one of the children.
See Figure~\ref{fig:join-compatibility} for an illustration.

\begin{figure}
    \centering
    \includegraphics[scale=1.2]{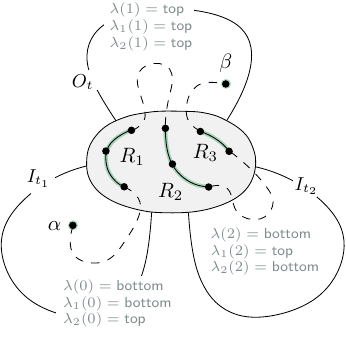}
    \caption{A join-compatible triple of bag-traces.
    The endpoint represented by $\alpha$ has a negative profile
    at $t_1$ and $t$, and a positive profile at $t_2$.
    Its link to $R_1$ is a bottom link at $t_1$ and $t$,
    and a top link at $t_2$.
    The three traces agree on the profile of the other endpoint
    and on the link between $R_1$ and $R_2$.}
    \label{fig:join-compatibility}
\end{figure}

\paragraph{Compatibility of states.}

We now extend this definition to states, taking multiplicities
into account.
Let $\sfS=(\calS,\nu)$ be a state for $t$, and let
$\sfS_1=(\calS_1,\nu_1)$ and $\sfS_2=(\calS_2,\nu_2)$
be states for $t_1$ and $t_2$, respectively.
We say that $\sfS$ is \emph{join-compatible} with
$(\sfS_1,\sfS_2)$ if there exist nonnegative integers
\[
    m_{\sfR_1,\sfR_2,\sfR}
    \qquad
    \text{for all }
    (\sfR_1,\sfR_2,\sfR)
    \in \calS_1\times\calS_2\times\calS
\]
such that:
\begin{itemize}
    \item $m_{\sfR_1,\sfR_2,\sfR}=0$ whenever
    $(\sfR_1,\sfR_2,\sfR)$ is not join-compatible;

    \item for every $\sfR_1\in\calS_1$,
    \[
        \sum_{\sfR_2\in\calS_2}
        \sum_{\sfR\in\calS}
        m_{\sfR_1,\sfR_2,\sfR}
        =\nu_1(\sfR_1);
    \]

    \item for every $\sfR_2\in\calS_2$,
    \[
        \sum_{\sfR_1\in\calS_1}
        \sum_{\sfR\in\calS}
        m_{\sfR_1,\sfR_2,\sfR}
        =\nu_2(\sfR_2);
    \]

    \item for every $\sfR\in\calS$,
    \[
        \sum_{\sfR_1\in\calS_1}
        \sum_{\sfR_2\in\calS_2}
        m_{\sfR_1,\sfR_2,\sfR}
        =\nu(\sfR).
    \]
\end{itemize}

Intuitively, $m_{\sfR_1,\sfR_2,\sfR}$ counts how many
occurrences of $\sfR_1$ and $\sfR_2$ are paired to produce
occurrences of $\sfR$.
These conditions allow different occurrences of the same trace
in one child to be paired with different traces in the other
child.
We call the last three conditions above the \emph{marginal constraints} of the join-compatibility.

We denote by $J(\sfS)$ the family of pairs
$(\sfS_1,\sfS_2)$ with which $\sfS$ is join-compatible.
Summing the above equalities shows that every
$(\sfS_1,\sfS_2)\in J(\sfS)$ satisfies
\[
    \ord(\sfS_1)=\ord(\sfS_2)=\ord(\sfS)
    =
    \sum_{\sfR_1\in\calS_1}
    \sum_{\sfR_2\in\calS_2}
    \sum_{\sfR\in\calS}
    m_{\sfR_1,\sfR_2,\sfR}.
\]

\begin{lemma}\label{lem:dp-join-node}
    If $t$ is a join node with children $t_1,t_2$
    and $\sfS$ is a state of $t$, then
    \[
        \dd_t[\sfS]
        =
        \min_{(\sfS_1,\sfS_2)\in J(\sfS)}
        \left\{
            \dd_{t_1}[\sfS_1]
            +\dd_{t_2}[\sfS_2]
            -\ord(\sfS)
        \right\}.
    \]
\end{lemma}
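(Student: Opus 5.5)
We prove the equality as two inequalities, following the pattern of Lemmas~\ref{lem:dp-introduce-node} and~\ref{lem:dp-forget-node}: a \emph{splitting} argument for $\geq$ and a \emph{gluing} argument for $\leq$. Throughout we use that $V(G_t)=V(G_{t_1})\cup V(G_{t_2})$, $V(G_{t_1})\cap V(G_{t_2})=X_t$, and that there are no edges between $I_{t_1}$ and $I_{t_2}$. Hence $I_t=I_{t_1}\sqcup I_{t_2}$, $O_{t_1}=O_t\sqcup I_{t_2}$ and $O_{t_2}=O_t\sqcup I_{t_1}$.

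\emph{Splitting ($\geq$).} Let $\sfP$ be a minimum-order partial solution realizing $\sfS$. Every complete occurrence of $\sfP$ lies entirely in $I_{t_1}$ or in $I_{t_2}$, since its path is connected, avoids $X_t$, and no edge joins the two interiors; we assign it to that side. For a partial occurrence $\sfQ=(\alpha,\calQ,\beta)$, define its restriction $\sfQ_i$ to $G_{t_i}$ as follows. Intersect each $Q_j$ with $V(G_{t_i})$, keeping the maximal subpaths. Keep explicit endpoints lying in $G_{t_i}$. Replace an explicit endpoint lying in $I_{t_{3-i}}$ by its profile at $t_i$, which is positive. Keep positive profiles, which are still positive at $t_i$ because $O_t\subseteq O_{t_i}$ and the coordinates are unchanged.

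Lemma~\ref{lem:exact-distance} shows that the functions $H^{t_i}$ agree with $H^t$ on $V(G_{t_i})$. So the distance conditions restrict correctly. The boundary conditions hold because any component end leaving $G_{t_i}$ must pass through $X_t$. The two restrictions are nonempty, since $\sfQ$ meets $X_t$, and their projections form a join-compatible triple with $\proj_t(\sfQ)$. This holds because a link that is bottom at $t$ goes through exactly one interior, and the same is true for an interior endpoint.

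Taking $m_{\sfR_1,\sfR_2,\sfR}$ to be the number of occurrences with these three projections gives the marginal constraints, so $(\sfS_1,\sfS_2)\in J(\sfS)$. Let $c_i$ be the number of complete occurrences assigned to side $i$. Then $\ord(\sfP_i)=\ord(\sfS)+c_i$, and $\ord(\sfP)=\ord(\sfS)+c_1+c_2=\ord(\sfP_1)+\ord(\sfP_2)-\ord(\sfS)$. Coverage of $G_{t_i}$ is inherited. This gives $\dd_{t_1}[\sfS_1]+\dd_{t_2}[\sfS_2]-\ord(\sfS)\leq\dd_t[\sfS]$.

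\emph{Gluing ($\leq$).} Fix $(\sfS_1,\sfS_2)\in J(\sfS)$ with witnesses $m$, and optimal realizations $\sfP_1,\sfP_2$. For each $\sfR_i$, list the occurrences of $\sfP_i$ projecting to $\sfR_i$. Using the marginal constraints, pair them into $m_{\sfR_1,\sfR_2,\sfR}$ triples for each compatible triple. For each pair $(\sfQ_1,\sfQ_2)$ assigned to $\sfR$, build $\sfQ$ as follows.
\begin{itemize}
\item Take the union of the subpaths of $\sfQ_1$ and $\sfQ_2$. They share exactly the bag subpaths $\calR$.
\item Merge along each bottom link of $\sfR$ the interior portion supplied by the unique child in which that link is bottom.
\item For each endpoint, take the explicit vertex from the child in which it is negative (and explicit), or keep the common positive profile otherwise.
\end{itemize}

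\textbf{The main obstacle} is verifying the distance conditions for $\sfQ$ at $t$ when an endpoint of the parent trace is a positive profile in one child but an explicit vertex $a\in I_{t_1}$ in the other. The distance function used by $\sfQ_2$ on $G_{t_2}$ is then $\delta_{t_2}(\prof_{t_2}(a),\cdot)$, which is $d(a,\cdot)$ only if the representatives of that positive class have the same distances. This is exactly Lemma~\ref{lem:exact-distance} applied at $t_2$: every representative of the profile at $t_2$ has distance $\min_x(\mathbf a[x]+d(x,v))$ to $v\in V(G_{t_2})$, and $a$ is such a representative. The same argument handles positive profiles common to both children, whose representatives in $O_t$ give one consistent function on all of $V(G_t)$. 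With all $H$-functions identified, the monotonicity conditions hold on each child's portion.

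Across a bag vertex, the gap inequalities come from the common bag-trace ordering. The pieces glue into vertex-disjoint paths because the interiors are disjoint. Links that are top in both children remain gaps whose ends lie in $X_t$, as the boundary conditions require.

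Complete occurrences of both children are added unchanged. Coverage follows, and the projections give $\sfS$ by construction. The resulting order is $\ord(\sfS)+c_1+c_2=\dd_{t_1}[\sfS_1]+\dd_{t_2}[\sfS_2]-\ord(\sfS)$, so $\dd_t[\sfS]$ is at most this quantity, which proves $\leq$.

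Finally, the cap ``order exceeds $n$ gives $+\infty$'' needs a check: $\ord(\sfP)\geq\ord(\sfP_i)$ for each child, so a finite parent value forces finite child values.
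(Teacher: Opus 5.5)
Your proposal is correct and follows the paper's proof essentially step for step. For $\leq$ you distribute occurrences via the marginal constraints and merge compatible pairs, with Lemma~\ref{lem:exact-distance} reconciling an interior endpoint in one child with its positive profile in the other; for $\geq$ you restrict each occurrence of an optimal $\sfP$ to both children, define $m$ by counting projection triples, and use $\ord(\sfP_1)+\ord(\sfP_2)=\ord(\sfP)+\ord(\sfS)$ (your closing check on the ``order exceeding $n$'' cap covers only the $\geq$ side, since a glued realization may exceed $n$ with both child values finite, so the cap must be applied to the right-hand side too, a point the paper also leaves implicit).
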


\begin{proof}
To simplify the coming argument, we regard each partial solution as a multiset of subtree-traces:
a trace of multiplicity $r$ has $r$ occurrences.
Under this interpretation, projection retains one occurrence for each nonempty projected trace, and aggregates occurrences with the same projection.

We first explain how compatible occurrences can be merged.
See Figure~\ref{fig:DP:merging} for an illustration.
Let $(\sfR_1,\sfR_2,\sfR)$ be a join-compatible triple,
and let $\sfQ_1$ and $\sfQ_2$ be subtree-traces at $t_1$
and $t_2$ projecting to $\sfR_1$ and $\sfR_2$, respectively.
Their subpaths have the same intersection with $X_t$,
with a common ordering and orientation.
Take the union of their subpaths, identifying their common
portions in $X_t$.

For each link, at most one child trace supplies a connection
through the interior of its subtree.
Consequently, this union consists of vertex-disjoint paths,
ordered according to the common ordering on the bag.
Indeed, the interiors of the two child subtrees are disjoint,
and the compatibility conditions prevent two interior
connections from occupying the same link.
A link remains unconnected precisely when it is a top link
in both children, which is precisely when it is a top link
in $\sfR$.

The endpoints of the merged trace are chosen as follows.
An endpoint represented by a vertex of $X_t$ is retained.
An endpoint whose profile in $\sfR$ is negative is taken
from the unique child in which the corresponding profile
is negative; in that child, the subtree-trace specifies
the actual endpoint.
An endpoint whose profile in $\sfR$ is positive is represented
by that positive profile.

\begin{figure}
    \centering
    \includegraphics[scale=1.2]{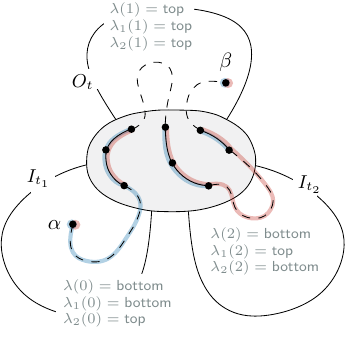}
    \caption{The merging of join-compatible partial solutions.}
    \label{fig:DP:merging}
\end{figure}

We verify the distance conditions for the resulting tuple.
Consider its first endpoint.
If it is represented by a positive profile, the 
distance function on either child subtree is the same as
in the parent, since the bags and the distance coordinates
of the profiles coincide.
If the endpoint is a vertex of $X_t$, the same exact distance
function is used throughout.
Finally, suppose that the endpoint is a vertex $a$ in
$V(G_{t_1})\setminus X_t$.
The first child uses distances from $a$, while the second
uses its positive profile.
By Lemma~\ref{lem:exact-distance}, for every
$v\in V(G_{t_2})$,
\[
d(a,v)
    =
    \delta_{t_2}\bigl(\prof_{t_2}(a),v\bigr).
\]
Thus the distance functions agree exactly on the second child.
The case where $a$ belongs to the interior of the second
child is symmetric.

It follows that distances from the first endpoint increase
by one along every oriented edge of the merged subpaths.
Between consecutive subpaths, the required strict inequality
follows from their boundary vertices in the bag and the
distance conditions of $\sfR$.
The same argument, with the orientation reversed, applies
to the other endpoint.
The explicit endpoints remain attached to the appropriate
ends of the resulting subpaths.
Moreover, every resulting subpath intersects $X_t$.
Hence the merged tuple is a subtree-trace at $t$,
and its projection is exactly $\sfR$.

\medskip

We now prove the recurrence.
First, let $(\sfS_1,\sfS_2)\in J(\sfS)$.
If either child value is infinite, the inequality
\[
    \dd_t[\sfS]
    \leq
    \dd_{t_1}[\sfS_1]+\dd_{t_2}[\sfS_2]-\ord(\sfS)
\]
is immediate.
Otherwise, choose minimum-order partial solutions
$\sfP_1$ and $\sfP_2$ realizing $\sfS_1$ and $\sfS_2$,
and fix nonnegative integers
$m_{\sfR_1,\sfR_2,\sfR}$ witnessing join-compatibility.

For every $\sfR_1\in\calS_1$, there are exactly
$\nu_1(\sfR_1)$ occurrences in $\sfP_1$ projecting to
$\sfR_1$.
The first marginal constraint allows us to distribute these
occurrences among the pairs $(\sfR_2,\sfR)$, allocating
$m_{\sfR_1,\sfR_2,\sfR}$ occurrences to each pair.
Do the analogous distribution in $\sfP_2$ using the
second marginal constraint.
For each triple with positive multiplicity, pair the two
allocated collections arbitrarily and merge corresponding
occurrences by the construction above.

Retain all complete trace occurrences from both children.
Together with the merged occurrences, they form a partial
solution $\sfP$ at $t$.
Indeed, merging preserves every vertex covered by either
occurrence, and
$V(G_t)=V(G_{t_1})\cup V(G_{t_2})$.
The third marginal constraint ensures that exactly
$\nu(\sfR)$ merged occurrences project to each
$\sfR\in\calS$.
Thus $\sfP$ realizes $\sfS$.

Each merge replaces two occurrences by one, and the number
of merges is
\[
    \sum_{\sfR_1,\sfR_2,\sfR}
    m_{\sfR_1,\sfR_2,\sfR}
    =\ord(\sfS).
\]
Therefore,
\[
    \dd_t[\sfS]
    \leq
    \dd_{t_1}[\sfS_1]+\dd_{t_2}[\sfS_2]-\ord(\sfS).
\]
Taking the minimum over $J(\sfS)$ proves the first inequality.

\medskip

For the converse, assume that $\dd_t[\sfS]$ is finite,
and let $\sfP$ be a minimum-order partial solution
realizing $\sfS$.
Restrict each of its occurrences to each child subtree,
discarding empty restrictions.
When an explicit endpoint lies in the interior of the
other child, replace it by its positive profile.
Positive endpoint profiles retain their distance coordinates,
since the bags coincide.
Denote the resulting multisets by $\sfP_1$ and $\sfP_2$.

These restrictions are valid subtree-traces.
To see this, intersect each subpath with the corresponding
child subtree and retain the inherited ordering.
Every cut occurs at the common separator $X_t$.
The distance conditions are inherited from the parent:
replacing an endpoint in the other child by its profile
preserves its distance function,
by Lemma~\ref{lem:exact-distance}.
The endpoint attachment conditions are preserved as well.
Since restriction preserves coverage of the corresponding
child subtree, $\sfP_1$ and $\sfP_2$ are partial solutions.
Let $\sfS_1$ and $\sfS_2$ be their projections.

Every occurrence of $\sfP$ meeting $X_t$ has a nonempty
restriction to both children.
Its two child projections and its parent projection form
a join-compatible triple.
Indeed, their bag subpaths coincide, and each endpoint
outside the bag lies either in one child interior or
outside the parent subtree.
Likewise, each connection through the parent interior
lies entirely in one child interior, since $X_t$ separates
the two interiors.

Define $m_{\sfR_1,\sfR_2,\sfR}$ to count the occurrences
of $\sfP$ having these three projections.
These integers satisfy all three marginal constraints,
so $(\sfS_1,\sfS_2)\in J(\sfS)$.

Every complete occurrence of $\sfP$ lies entirely in one
child interior and is counted once in $\sfP_1,\sfP_2$.
Every occurrence meeting $X_t$ is counted twice.
There are exactly $\ord(\sfS)$ occurrences of the latter
kind. Consequently,
\[
    \ord(\sfP_1)
    +\ord(\sfP_2)
    =
    \ord(\sfP)+\ord(\sfS).
\]
By minimality of the child values,
\[
\begin{aligned}
    \dd_t[\sfS]
    &=
    \ord(\sfP_1)
    +\ord(\sfP_2)-\ord(\sfS)\\
    &\geq
    \dd_{t_1}[\sfS_1]+\dd_{t_2}[\sfS_2]-\ord(\sfS)\\
    &\geq
    \min_{(\sfS'_1,\sfS'_2)\in J(\sfS)}
    \left\{
        \dd_{t_1}[\sfS'_1]
        +\dd_{t_2}[\sfS'_2]
        -\ord(\sfS)
    \right\}.
\end{aligned}
\]
If $\dd_t[\sfS]$ is infinite, this inequality is immediate.
Together, the two inequalities prove the recurrence,
with the convention that the minimum over an empty family
is $+\infty$.
\end{proof}

\subsection{Correctness and complexity of the algorithm}

We now prove the correctness and complexity of the algorithm.

\begin{lemma}\label{lem:dp-root-node}
If $r$ is the root, the minimum size of an isometric path cover of $G$
is $\dd_r[\sfS_0]$.
\end{lemma}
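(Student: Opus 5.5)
We prove the two inequalities separately. The idea is that at the empty root bag every subtree-trace is complete, hence a genuine isometric path of $G$. Since $X_r=\emptyset$, every subtree-trace at $r$ is complete. The boundary conditions force $k=1$ and explicit endpoints $\alpha=u_1$, $\beta=v_1$. Moreover $\DP(r)=\emptyset$, so $H^r_\alpha=d(\alpha,\cdot)$ and $H^r_\beta=d(\beta,\cdot)$ are genuine distance functions in $G$. Finally $G_r=G$, because the root subtree contains every bag.

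\emph{Upper bound.} Take a minimum isometric path cover $\calP^\ast$ of $G$; it has size at most $n$, since single vertices are isometric paths. Orient each path arbitrarily from $\alpha$ to $\beta$. Each path $P$ yields the tuple $(\alpha,(P),\beta)$. Along $P$, $d(\alpha,\cdot)$ increases by one and $d(\beta,\cdot)$ decreases by one, because $P$ is a shortest $\alpha$--$\beta$ path. So each tuple is a complete subtree-trace at $r$; this is the remark in the paper that restrictions of isometric paths are subtree-traces.

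Distinct paths give distinct traces. Even if two paths of $\calP^\ast$ coincide as vertex sequences, we can discard one, which contradicts minimality. Hence every trace gets multiplicity $1$, as complete traces require. The traces cover $V(G)=V(G_r)$, so we have a partial solution of order $|\calP^\ast|\le n$. Its projection discards all complete traces and is therefore $\sfS_0$. This gives $\dd_r[\sfS_0]\le|\calP^\ast|$.

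\emph{Lower bound.} Let $\sfP=(\calP,\mu)$ realize $\sfS_0$ with order $\dd_r[\sfS_0]$, assuming this value is finite. Every trace in $\calP$ is complete, so $\mu\equiv1$ and each trace is $(u_1,(Q_1),v_1)$. Along $Q_1$, $d(u_1,\cdot)$ increases by exactly one per edge. Since $d(u_1,u_1)=0$, this gives $d(u_1,v_1)=|Q_1|$, so $Q_1$ is an isometric path of $G$. These paths cover $V(G)$, so they form an IP-cover of size $\ord(\sfP)$, and the minimum cover size is at most $\dd_r[\sfS_0]$.

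\emph{Remaining details.} Two points need care. First, the finiteness cap "order at most $n$" in the definition of $\dd$: it is harmless because an optimal cover has size at most $n$. Second, the identification $H^r=d$ relies on endpoints at an empty bag being necessarily explicit vertices rather than profiles, and this follows from $\DP(r)=\emptyset$. The main obstacle is only this bookkeeping of conventions at the empty bag; the rest is immediate from the definitions, with no earlier recurrence needed.
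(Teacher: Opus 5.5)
Your proof is correct and follows the same route as the paper. At the empty root bag every subtree-trace is complete with explicit endpoints, hence an isometric path of $G$, so realizations of $\sfS_0$ correspond exactly to isometric path covers counted by their order. Your version is more explicit than the paper's, which leaves implicit both the order-$n$ cap in the definition of $\dd_r$ and the multiplicity-one requirement for complete traces; you check both.
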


\begin{proof}
Here $X_r=\emptyset$ and $G_r=G$. 
Every subtree-trace is complete and hence consists of one isometric path with explicit endpoints.
Thus a realization of $\sfS_0$ is a set of isometric paths covering $G$ with no multiplicity. 
Its order thus coincides with its cardinal.
We conclude to the claimed statement by minimality over the realizations of $\sfS_0$.
\end{proof}

The algorithm computes the values of the states bottom-up, using
Lemmas~\ref{lem:dp-leaf-node}, \ref{lem:dp-introduce-node},
\ref{lem:dp-forget-node}, and \ref{lem:dp-join-node}, and outputs
$\dd_r[\sfS_0]$. 
Lemma~\ref{lem:dp-root-node} proves correctness.
An optimal cover can also be recovered by storing a minimizing transition
for every finite entry, including the allocation integers at joins, and
following the constructive proofs of the bottom-up computation of the values of the states.

We now evaluate the number of states.

\begin{lemma}\label{lem:number-of-states}
Let $G$ be a connected graph on $n\geq 2$ vertices
with diameter $\diam\geq 2$, and let $t$ be a node whose bag has size $q\geq 1$.
The number of distinct bag-traces at $t$ is at most
\[
    B_q := \bigl(q+2\diam^q\bigr)^2\,2^{3q+1}.
\]
Consequently, the number of states at $t$ with
multiplicities in $\{1,\dots,n\}$ is at most
\[
    (n+1)^{B_q}
    \leq
    n^{\diam^{O(q)}}.
\]
\end{lemma}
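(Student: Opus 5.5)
The plan is to count bag-traces $\sfR=(\alpha,\calR,\beta,\lambda)$ by choosing their four components one after another, and then count states as multiplicity vectors indexed by bag-traces.

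\emph{Endpoints.} Each of $\alpha$ and $\beta$ lies in $X_t\cup\DP(t)$. By the remark following the definition of signed profiles, $|\DP(t)|\le 2\diam^q$. This gives at most $(q+2\diam^q)^2$ choices for the pair $(\alpha,\beta)$.

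\emph{Subpaths, the key step.} Once $\alpha$ is fixed, the function $h^t_\alpha$ on $X_t$ is fixed. The distance conditions of a bag-trace say two things: $h^t_\alpha$ increases by exactly one along every oriented edge of every $R_i$, and it increases strictly across every gap from $v_i$ to $u_{i+1}$. Concatenating $R_1,\dots,R_k$ in order therefore lists the covered vertex set $W\subseteq X_t$ in strictly increasing order of $h^t_\alpha$. So the values of $h^t_\alpha$ on $W$ are pairwise distinct, and the linear order of $W$, including the orientation of each path, is forced by $W$ alone. The family $\calR$ is then determined by $W$ together with the set of consecutive positions in this order at which a new subpath starts. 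That gives at most $2^q\cdot 2^{q-1}\le 2^{2q}$ choices; families that are not genuine paths in $G[X_t]$ only lower the count.

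\emph{Links.} The set $L(\sfR)$ of genuine link indices is contained in $\{0,\dots,k\}$ and is determined by the previous choices. Since $k\le q$, there are at most $2^{q+1}$ linking functions $\lambda$.

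Multiplying the three counts gives $B_q=(q+2\diam^q)^2\,2^{3q+1}$.

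\emph{States.} A state is determined by its multiplicity function, extended by $0$ to all bag-traces at $t$. With multiplicities in $\{1,\dots,n\}$, each bag-trace gets a value in $\{0,\dots,n\}$, so there are at most $(n+1)^{B_q}$ states. Values of $\dd_t$ exceeding $n$ are set to $+\infty$, so restricting multiplicities to at most $n$ loses nothing.

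\emph{Final estimate.} Since $n\ge 2$, we have $n+1\le n^2$. Since $\diam\ge 2$ and $q\ge1$, we have $q\le \diam^q$, so $B_q\le 9\diam^{2q}\cdot 2^{3q+1}\le \diam^{O(q)}$. Hence $(n+1)^{B_q}\le n^{2B_q}=n^{\diam^{O(q)}}$.

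The only non-routine point is the subpath count. The natural fear is a factor like $q!$ from ordering the subpaths and orienting them. That factor disappears because the strict monotonicity of $h^t_\alpha$ across edges and across gaps forces a total order on the covered vertices.
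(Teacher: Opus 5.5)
Your proposal is correct and follows essentially the same route as the paper. You count endpoint pairs, use strict monotonicity of $h^t_\alpha$ to force the order and orientation of the selected bag vertices, count a subset together with a consecutive partition, allow at most $2^{q+1}$ linking functions, and encode states as vectors in $\{0,\dots,n\}^{B_q}$. Your sharper $2^{q-1}$ partition count and the explicit estimates $n+1\le n^2$ and $q\le \mathsf{diam}^q$ are harmless refinements of the same argument.
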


\begin{proof}
We first count bag-traces.
Each endpoint is either a vertex of $X_t$ or a signed
distance profile.
For a vertex $a\notin X_t$, every coordinate of its
profile belongs to $\{1,\dots,\diam\}$.
There are therefore at most $2\diam^q$ signed profiles,
and at most $(q+2\diam^q)^2$ ordered choices of endpoints.

Fix such a choice $(\alpha,\beta)$.
The distance conditions of a bag-trace imply that
the distances from $\alpha,\beta$ to the subpaths is strictly increasing along the entire
ordered sequence of its selected bag vertices:
it increases by one within each subpath, and strictly
increases between consecutive subpaths.
Consequently, once the set $U\subseteq X_t$ of selected vertices is fixed, their order is uniquely determined.
There are at most $2^q$ choices for $U$.
Then, partitioning the order on $U$ into consecutive parts determines the ordered subpaths, and there are at most $2^q$ such partitions.

If a partition has $k$ subpaths, there are at most
$k+1\leq q+1$ links: those between consecutive subpaths
and those incident with the two endpoints.
Each link has at most two possible values.
Thus there are at most $2^{q+1}$ linking functions.

Multiplying these bounds, the number of bag-traces
is at most
\[
    (q+2\diam^q)^2\cdot 2^q\cdot 2^q\cdot 2^{q+1}
    =B_q.
\]

A state with positive multiplicities at most $n$ can
be encoded by assigning a value in $\{0,\dots,n\}$
to every possible bag-trace, where zero means that
the trace is absent.
Hence there are at most $(n+1)^{B_q}$ states.
Since $\diam\geq 2$ and $q\geq 1$, $B_q=\diam^{O(q)}$ and the second bound follows.
\end{proof}

We now evaluate the computation of the transitions.
All distances and actual profile classes can be computed in polynomial time per decomposition node, after an all-pairs shortest-path computation.
Bag-traces can be enumerated using the encoding in the proof: choose endpoints, a nonempty selected vertex set, a partition of its forced
order, and genuine link labels, then test the defining conditions.
These tests use only graph adjacency, distances, profiles, and bag data.

At an introduce node, admissibility, restriction, and the additional cost are computed directly from each state. 
At a forget node, we enumerate the child states, test admissibility, and update the entry for their restriction.
At a join node, we enumerate parent and child state triples, compute the marginal constraint and test compatibility.
The number of such triples is at most $\diam^{O(w)}$, and the number of integer values for $m_{\sfR_1,\sfR_2,\sfR}$ is $n^{\diam^{O(w)}}$.
These operations take $n^{\diam^{O(w)}}$ time, times a polynomial in the size of $G$. 

\begin{theorem}\label{thm:ipc-treewidth-diameter}
Given a connected $n$-vertex graph of diameter $\diam$ and a nice tree
decomposition of width $w$ with $N$ nodes, a minimum isometric path cover
can be computed in time
\[
 N\cdot n^{\diam^{O(w)}}.
\]
Consequently, \SoPC{} is in \XP{} parameterized by treewidth plus the diameter of the graph;
without a supplied decomposition, the running time is
$n^{\diam^{O(\operatorname{tw})}}$ for $n\geq2$.
\end{theorem}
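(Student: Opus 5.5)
We assemble the theorem from the lemmas already established in this section. Correctness comes first. By Lemma~\ref{lem:dp-root-node}, the optimum equals $\dd_r[\sfS_0]$. By induction over the nice tree decomposition, processed bottom-up, each table entry $\dd_t[\sfS]$ is computed exactly. Leaves are handled by Lemma~\ref{lem:dp-leaf-node}, introduce nodes by Lemma~\ref{lem:dp-introduce-node}, forget nodes by Lemma~\ref{lem:dp-forget-node}, and join nodes by Lemma~\ref{lem:dp-join-node}.

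One subtlety must be checked: the value $\dd_t$ is capped at $n$, and the tables only store states with multiplicities in $\{1,\dots,n\}$. We argue that neither restriction loses an optimum. An isometric path cover never needs more than $n$ paths, since one trivial path per vertex suffices. Moreover, the restriction of an optimal cover to any $G_t$ has order at most $n$, and so do all its projections. Hence every state used along an optimal solution's trajectory has multiplicities bounded by $n$. Finally, the recurrences remain valid when the minimization ranges only over these bounded states. To recover a cover rather than just its size, we store a minimizing transition together with the allocation integers $m_{\sfR_1,\sfR_2,\sfR}$ at joins, and we follow the constructive proofs top-down.

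For the running time, Lemma~\ref{lem:number-of-states} bounds the number of states per node by $n^{\diam^{O(w)}}$, using $q\le w+1$. The preprocessing (all-pairs shortest paths, the profile classes $\DP(t)$, and enumeration of bag-traces) is polynomial per node times $\diam^{O(w)}$.

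The transitions are costed as follows. At introduce and forget nodes, we iterate over the child's states and apply $\operatorname{res}_x$ or $\operatorname{forg}_x$; each application is polynomial in the number of traces, giving $n^{\diam^{O(w)}}$ per node.

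The join node is the main obstacle. Naively, we enumerate triples $(\sfS,\sfS_1,\sfS_2)$, of which there are $(n^{\diam^{O(w)}})^3=n^{\diam^{O(w)}}$. For each triple we must decide whether a nonnegative integer array $m$ satisfying the marginal constraints exists. The array has $B_q^3=\diam^{O(w)}$ entries, each in $\{0,\dots,n\}$, so brute-force enumeration costs $(n+1)^{\diam^{O(w)}}$, which again lies in $n^{\diam^{O(w)}}$. The exponent simply absorbs a constant factor. A cleaner alternative computes the join with a sweep over trace triples: we add compatible triples one at a time, maintaining a table indexed by partial marginals. Either way the per-node cost is $n^{\diam^{O(w)}}$, and multiplying by $N$ nodes gives the first bound.

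For the consequence, we compute a tree decomposition of width $O(\tw)$ in time $2^{O(\tw)}n$ using a constant-factor approximation, such as Korhonen's, and convert it to a nice decomposition with $N=O(\tw\cdot n)$ nodes. Plugging in yields $n^{\diam^{O(\tw)}}$ for $n\ge2$. The factor $N$ and the polynomial overhead are absorbed because $\diam\ge2$. Disconnected inputs, which the statement excludes, would be handled componentwise anyway.
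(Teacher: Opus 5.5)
Your proposal is correct and follows essentially the same route as the paper: correctness comes from the leaf, introduce, forget, and join recurrences together with the root lemma, and the state count from Lemma~\ref{lem:number-of-states}. Like the paper, you handle join nodes by brute-force enumeration of the allocation integers, and the unsupplied case by Korhonen's approximation plus standard nice normalization. Your explicit justification that restricting to multiplicities at most $n$ loses no optimum (restrictions and projections of an optimal cover have order at most $n$) is a welcome detail; the paper leaves it implicit by building the cap at $n$ into the definition of the table values.
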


\begin{proof}
Correctness and the bound for a supplied nice decomposition follow from
the preceding arguments. For the second statement, Korhonen's algorithm
\cite{DBLP:conf/focs/Korhonen21} produces a decomposition of width at most
$2k+1$ in time $2^{O(k)}n$ when $k=\operatorname{tw}(G)$.
Trying successive values of $k$ until a decomposition is returned has the
same asymptotic bound and may terminate earlier.
A decomposition of width $w$ with linearly many bags can be put in nice
form with $O((w+1)n)$ nodes and empty root and leaves by the standard
normalization~\cite{bodlaender2013fine,kloks1994treewidth}.
The parameter-dependent preprocessing and the number of nodes are absorbed
by $n^{\diam^{O(k)}}$.
\end{proof}


\section{\sloppy \textsc{\SoPC} W[1]-Hardness with Respect to Pathwidth on Graphs of Diameter 7} \label{sec:cover-W1h-tw-diam}

\restatethmIPCWhtwdiam*

The reduction is from \textsc{Unary Bin Packing}: given positive integer
item sizes and a bin capacity $b$, encoded in unary, decide whether the
items fit into at most $k$ bins of capacity $b$. This problem admits a
pseudopolynomial dynamic program with running time $n b^{O(k)}$, but is
\W[1]-hard parameterized by $k$ \cite{JansenKMS10}.
We use the following variant.

\defproblem{\textsc{Special Unary Bin Packing}}{$n$ items of even sizes
$t_1,\ldots,t_n\ge4$, a positive capacity $b$, all encoded in unary, and a positive
integer $k$, with $\sum_i t_i=kb$.}{Can we partition the items into $k$
bins, each of total size exactly $b$?}

\begin{lemma}
\textsc{Special Unary Bin Packing} is \W[1]-hard parameterized by $k$.
\end{lemma}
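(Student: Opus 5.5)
We reduce from \textsc{Unary Bin Packing} parameterized by $k$, which is \W[1]-hard by~\cite{JansenKMS10}, using a parameter-preserving polynomial-time transformation.

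Let $(s_1,\ldots,s_n;b;k)$ be an instance of \textsc{Unary Bin Packing}, and write $S=\sum_i s_i$. If $S>kb$ we output a trivial no-instance. Otherwise we pad the instance to reach equality. We add $kb-S$ new items of size $1$; this is polynomial because $kb$ is bounded by $k$ times the unary-encoded capacity, and $k\le n$ may be assumed. Any packing of the original items into $k$ bins of capacity $b$ leaves exactly $kb-S$ units of free space in total, so the unit items can fill every bin to exactly $b$. Conversely, removing the unit items from an exact packing of the padded instance gives a valid packing of the original one. Hence the padded instance has total size $kb$, and the question becomes whether an exact partition into $k$ bins of size exactly $b$ exists.

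Next we enforce even sizes that are at least $4$. We replace every size $t_i$ by $t_i'=4t_i$ and the capacity by $b'=4b$. Then $t_i'\ge 4$ is even, and $\sum_i t_i'=4kb=kb'$. A partition into bins of total size exactly $b$ corresponds bijectively to a partition into bins of total size exactly $b'$, because all sizes are simply scaled by the same factor. The unary encoding grows only by a constant factor, and the parameter $k$ is unchanged.

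Both steps run in polynomial time and preserve $k$, so together they form a parameterized reduction and the lemma follows. The only delicate point is the padding step. There we must check that ``at most $k$ bins of capacity $b$'' and ``exactly $k$ bins of size exactly $b$'' coincide once the total equals $kb$: a packing into at most $k$ bins with total size $kb$ forces exactly $k$ bins, each completely full. We must also check that the number of unit items stays polynomial, which holds because $b$ is encoded in unary.
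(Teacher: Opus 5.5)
Your proof is correct and follows the paper's argument essentially step for step: output a no-instance when $S>kb$, pad with $kb-S$ unit items, and multiply all sizes and $b$ by four. The one step you leave implicit is why $k\le n$ may be assumed, which is what keeps $kb-S$ polynomial; the paper makes this explicit by noting that when $k\ge n$ the instance is a yes-instance exactly when every item has size at most $b$, so it can be mapped to a fixed instance.
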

\begin{proof}
Start with an instance of \textsc{Unary Bin Packing}, and let $T$ be its
total item size. Instances with an item larger than $b$ or $T>kb$ are
immediate no-instances; if $k\ge n$ and every item fits, the answer is yes.
Map these cases to fixed instances with the corresponding answer.
Otherwise, add $kb-T$ items of size one, and multiply all sizes and $b$
by four. The added items fill precisely the unused capacity of any
packing. Conversely, deleting them from a solution gives a packing of
the original instance. Outside the fixed cases, the transformation has polynomial size,
preserves $k$, and enforces the stated size and total-sum conditions. 
\end{proof}

\subsection{Overview of the reduction}

In the reduction, we construct a graph $G$ that is shown in Figure \ref{fig:AS-full_reduction}. Among the vertices of the graph there are $n$ vertices $a_1,\dots, a_n$ that correspond to the items, and $k$ vertices $b_1,\dots, b_k$ that correspond to the bins. Each vertex $a_i$ has $t_i$ leaves that encode the size $t_i$ of this item, and each vertex $b_j$ has $b$ leaves that encode the capacity of the corresponding bin. 
Then, each pair $(a_i,b_j)$ is connected by a structure called an \emph{anti-split} gadget that essentially consists of $t_i$ ``parallel'' paths plus some other edges and vertices. If the $i$-th item is placed into the $j$-th bin, then we create $t_i$ paths from the leaves of $a_i$ to $t_i$ distinct leaves of $b_j$ that traverse the corresponding anti-split gadget. A potential problem here is that, in order to cover the graph, one could be tempted to ``split'' the items by having some paths starting from the same item but finishing in different bins. Our anti-split gadget prevents such a split by inducing a ``penalty'' for each gadget that is only partially traversed. 

\paragraph{Organisation.} In Section \ref{sec:anti-split-gadget}, we introduce our anti-split gadget, and describe the reduction in Section \ref{sec:AS:construction}, where we give some properties of the graph constructed. Finally, we prove the equivalence of the instances in Sections \ref{section:AS:forward} and \ref{section:AS:backward}. 

\subsection{The anti-split gadget}\label{sec:anti-split-gadget}

An anti-split gadget has two \emph{connection vertices} $a,b$ and
$8t$ \emph{internal vertices}, where its capacity $t\ge4$ is even.
Figure~\ref{fig:AS_intro} shows the case $t=4$.
The repeated pattern is cyclic: indices below are taken modulo $t$.
For each $r\in\{1,\ldots,t\}$, introduce the eight vertices
$x_r,p_r,q_r,y_r,c_r,d_r,e_r,f_r$, and add the paths
\[
 a,x_r,p_r,q_r,y_r,b;\qquad
 x_r,c_r,d_r,y_{r-1};\qquad d_r,e_r,f_r,c_{r+1}.
\]
These are all the gadget edges. In particular, lines crossing without a
marked vertex in the figures do not create adjacencies. The vertices
$x_r$ and $y_r$ will belong to $X$ and $Y$, respectively. In the final
graph they also have neighbors $w_1$ and $w_6$ outside the gadget.

We will use the following two types of internal paths, shown in green
and blue in Figure~\ref{fig:AS_intro}:
\begin{align*}
 \text{green at }r:\quad&
 f_{r-1},c_r,x_r,p_r,q_r,y_r,d_{r+1},e_{r+1};\\
 \text{blue at }r:\quad&
 e_r,d_r,c_r,f_{r-1},e_{r-1},d_{r-1},c_{r-1},f_{r-2}.
\end{align*}
Both have length seven. All $t$ green paths partition the internal
vertices. After the $t$ vertical paths $x_r,p_r,q_r,y_r$ are used,
the blue paths with even indices partition the remaining vertices.
Their isometry in the final graph is verified below.

\begin{figure}[h]
    \centering
    \includegraphics[page = 1, width=0.7\linewidth]{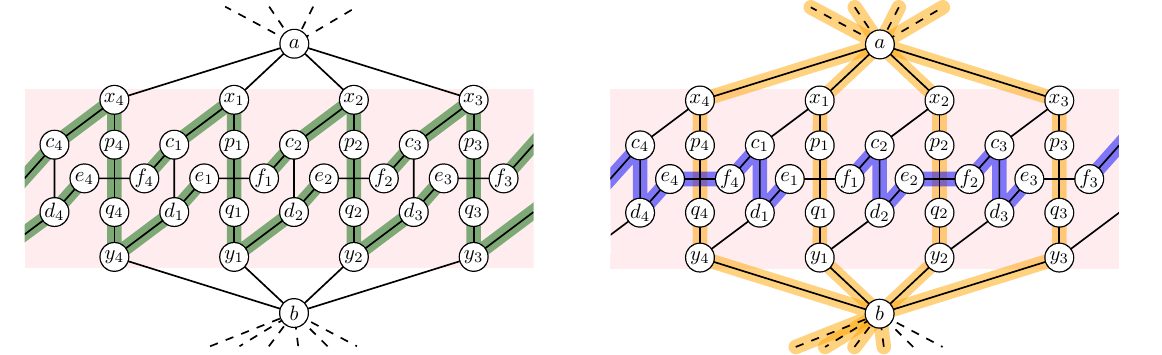}
    \caption{Two copies of an anti-split gadget of capacity $t=4$. The left side shows the optimal way to cover the internal vertices with $t$ isometric paths (green). We refer to this type of isometric paths as \emph{internal green paths}. The right side shows a situation where the gadget is traversed by $t$ isometric paths from $a$ to $b$. In this case, the optimal way to cover the remaining vertices is to use $t/2$ isometric paths (blue). We refer to this type of isometric paths as \emph{internal blue paths}}
    \label{fig:AS_intro}
\end{figure}

\subsection{Construction of the graph}\label{sec:AS:construction}

Let $(\{t_1,\dots, t_n\},k,b)$ be an instance of \textsc{Special Unary Bin Packing} where $t_1,\dots, t_n$ denote the sizes of the items. Notice that $\sum_i t_i=kb$. 

We construct a graph $G=(V,E)$ as follows. Figure \ref{fig:AS-full_reduction} illustrates the construction. 

\begin{figure}
    \centering
    \includegraphics[page=2,width=1\linewidth]{figures/figures_anti_split.pdf}
    \caption{The graph of the reduction from \textsc{Special Unary Bin Packing}. Orange paths are \emph{encoding paths} that indicate in which bin each item is packed. The path $W$ bounds the diameter and is forced to appear in every cover within the budget. Internal vertices (pink area) that are not covered by encoding paths are covered either by \emph{internal green paths} (middle gadget) or \emph{internal blue paths} (rightmost gadget).}
    \label{fig:AS-full_reduction}
\end{figure}

\begin{itemize}
    \item For each $i\in [n]$, add a vertex $a_i$ and $t_i$ leaf vertices $L(a_i)$ adjacent to $a_i$. 
\end{itemize}
\begin{itemize}
    \item For each $j\in [k]$, add a vertex $b_j$ and $b$ leaf vertices $L(b_j)$ adjacent to $b_j$.
\end{itemize}
 We denote $A=\{a_1,\dots,a_n\}$, $B=\{b_1,\dots, b_k\}$, $L_A=\cup_{i\in[n]} L(a_i)$, $L_B=\cup_{j\in[k]} L(b_j)$, and $L=L_A\cup L_B$. 
\begin{itemize}
    \item For each pair $(i,j)\in [n]\times [k]$, add an \emph{anti-split gadget} of capacity $t_i$ whose connection vertices are $a_i$ and $b_j$. 
\end{itemize}

Let $I$ denote the set of all internal vertices of these gadgets. 
Let $X\subset I$ (resp. $Y\subset I$) denote the set of internal vertices that are adjacent to $a_i$ for some $i\in [n]$ (resp. that are adjacent to $b_j$ for some $j\in [k]$, respectively). 

\begin{itemize}
    \item Finally, add a path $W=\{w_0,\dots w_7\}$ of length $7$, then connect $w_1$ to all vertices in $X$ and $w_6$ to all vertices in $Y$. 
\end{itemize}
This completes the construction of $G$. Set
$K=1+kb/2+k^2b$ which will be the cardinality of the optmial isometric path cover when the input is an \textsc{Yes}-instance.. 

\begin{lemma}
    $G$ has $n+k+2kb+8k^2b+8$ vertices.
\end{lemma}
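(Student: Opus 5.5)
The count is a direct tally over the disjoint vertex classes introduced in the construction, so the plan is to list these classes, count each using the parameters of the \textsc{Special Unary Bin Packing} instance, and sum.

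\emph{Connection vertices.} The item vertices $a_1,\dots,a_n$ contribute $n$, and the bin vertices $b_1,\dots,b_k$ contribute $k$.

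\emph{Leaves.} Item $a_i$ receives $t_i$ leaves, so $|L_A|=\sum_i t_i=kb$ by the total-sum condition $\sum_i t_i=kb$. Each $b_j$ receives $b$ leaves, so $|L_B|=kb$. Together $|L|=2kb$.

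\emph{Gadget interiors.} An anti-split gadget of capacity $t$ has exactly $8t$ internal vertices, namely $x_r,p_r,q_r,y_r,c_r,d_r,e_r,f_r$ for $r\in\{1,\dots,t\}$. Its connection vertices are $a_i$ and $b_j$, already counted above. Distinct gadgets have disjoint interiors, since each gadget introduces fresh vertices. Hence
\[
|I|=\sum_{i\in[n]}\sum_{j\in[k]}8t_i=8k\sum_i t_i=8k^2b.
\]

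\emph{The path $W$.} It consists of $w_0,\dots,w_7$, giving $8$ vertices. The edges from $w_1$ to $X$ and from $w_6$ to $Y$ create no new vertices.

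Summing gives $n+k+2kb+8k^2b+8$. There is no real obstacle here. The only points to check are that no vertex is counted twice, which holds because every class consists of freshly introduced vertices, and that the total-sum condition is applied twice, once for $L_A$ and once for $I$.
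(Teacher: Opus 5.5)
Your proposal is correct and follows the paper's proof: both tally the disjoint classes $A$, $B$, $L$, $I$, and $W$, using $\sum_i t_i=kb$ to get $|L|=2kb$ and $|I|=\sum_{i}8kt_i=8k^2b$. Your explicit remark that each class consists of freshly introduced vertices, so nothing is counted twice, is left implicit in the paper.
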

\begin{proof}
    The first two terms correspond to sets $A$ and $B$ respectively. The third term $2kb$ is the size of $L=L_A\cup L_B$. Then we have $|I|=8k^2b$, since a gadget associated to $a_i$ has $8t_i$ internal vertices and for each $i\in [n]$, there are $k$ gadgets incident to $a_i$. Thus, $|I|=\sum_{i\in [n]}8kt_i=8k(kb)=8k^2b$. Finally, the last term is the size of $W$. 
\end{proof}

\begin{lemma}
    $G$ has diameter $7$. 
    \label{claim:AS_diameter}
\end{lemma}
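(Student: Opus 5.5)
The plan is to show $\diam(G)\geq 7$ with one explicit pair, and $\diam(G)\leq 7$ by routing every pair through one of two hub vertices, $w_1$ or $w_6$, except for a few pairs handled directly.

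For the lower bound, I would show $d(w_0,w_7)=7$. Every $w_0$--$w_7$ path passes through $w_1$ and $w_6$. A $w_1$--$w_6$ path leaving $W$ must enter $X$ and later reach $Y$, which costs at least $1+3+1=5$ (via $x_r,p_r,q_r,y_r$). This is no shorter than the $W$-segment of length $5$, so $d(w_0,w_7)=1+5+1=7$.

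For the upper bound, I would tabulate for every vertex $v$ the pair $(d(v,w_1),d(v,w_6))$, class by class. The values I expect are:
\begin{itemize}
\item $w_i$: $(|i-1|,|6-i|)$;
\item $X$: $(1,4)$; $p_r$: $(2,3)$; $q_r$: $(3,2)$; $Y$: $(4,1)$;
\item $c_r$: $(2,3)$, since $c_r\sim x_r$; $d_r$: $(3,2)$, since $d_r\sim y_{r-1}$;
\item $e_r$: $(4,3)$; $f_r$: $(3,4)$, via $c_{r+1}$ and $d_{r+1}$ respectively;
\item $A$: $(2,5)$; $L_A$: $(3,6)$; $B$: $(5,2)$; $L_B$: $(6,3)$.
\end{itemize}
The upper bounds in the table are realised by the explicit paths just named, using the paths $a_i,x_r,p_r,q_r,y_r,b_j$ and the edges from $w_1$ to $X$ and from $w_6$ to $Y$. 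For the diameter bound only these upper bounds on distances are needed, so exactness is not essential. Then for every pair $u,v$ I would use
\[
 d(u,v)\leq\min\bigl(d(u,w_1)+d(w_1,v),\ d(u,w_6)+d(w_6,v)\bigr).
\]
A quick check of the table shows that one of the two sums is at most $7$ for all pairs except those with one vertex in $A\cup L_A$ and the other in $B\cup L_B$, e.g.\ $L_A$ vs.\ $B$ gives $8,8$ and $L_A$ vs.\ $L_B$ gives $9,9$. In particular every vertex of $I\cup W$ has coordinate sum at most $7$, and I would verify the $I\cup W$ versus leaf pairs individually, e.g.\ $e_r$ vs.\ $L_A$ gives $4+3=7$ and $w_7$ vs.\ $L_A$ gives $1+6=7$.

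For the remaining pairs I would use the fact that an anti-split gadget exists for every pair $(i,j)$. So $\ell,a_i,x_r,p_r,q_r,y_r,b_j,\ell'$ is a path of length $7$ between any leaf $\ell\in L(a_i)$ and any leaf $\ell'\in L(b_j)$, and its subpaths handle the cases involving $a_i$ or $b_j$ themselves.

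The main obstacle is bookkeeping rather than any idea: one has to make sure the table covers every vertex class, in particular the cyclic indices of $c,d,e,f$ and the mixed pairs between $I$ and the leaves. I would organise the check by classes of coordinate sums: all vertices of $I\cup W$ have sum at most $7$, while the only vertices far from both hubs are $A$, $B$, $L_A$ and $L_B$.
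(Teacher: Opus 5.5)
Your proof is correct, and it organises the upper bound differently from the paper.

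The paper also starts from the hubs $w_1,w_6$, but only uses them to locate a hypothetical far pair. It sets $H=\{u: d(w_1,u)\ge 4\}$ and $H'=\{u: d(w_6,u)\ge 4\}$, and observes that a pair at distance at least $8$ must have one vertex in each of these disjoint sets. It then runs a case analysis over the vertex types of $H'$ (namely $w_0,w_1,w_2$, $X$, $A$, $L_A$, and the $f_r$). That analysis relies on figures displaying full BFS distances from an item leaf and from an $f_r$-type vertex.

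You instead record, for every class, the pair $(d(\cdot,w_1),d(\cdot,w_6))$ and bound $d(u,v)$ by the smaller of the two hub sums.
\begin{itemize}
\item Every vertex of $I\cup W\cup A\cup B$ has coordinate sum at most $7$, so all pairs among them are settled at once by averaging.
\item Pairs of a leaf with a vertex of $I\cup W$ need only the simple checks you list.
\item The only pairs that genuinely fail the hub test are $L_A$--$B$, $A$--$L_B$ and $L_A$--$L_B$; your exceptional set also contains $A$--$B$, which harmlessly passes the test with sum $7$. These are handled by the path $\ell,a_i,x_r,p_r,q_r,y_r,b_j,\ell'$ through the gadget of the pair $(i,j)$.
\end{itemize}

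Your route is more self-contained and easier to verify in text. It needs only upper bounds on distances to two fixed vertices, each witnessed by an explicit short path, rather than complete BFS trees read off figures. The paper's figures, in exchange, give exact distances from a leaf and from an internal vertex. It reuses these in the backward direction to classify the length-seven isometric paths, so its case analysis serves double duty.

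Your lower-bound argument for $d(w_0,w_7)=7$ is the same as the paper's. To make it airtight, state explicitly that no vertex of $X$ is adjacent to, or shares a neighbour with, a vertex of $Y$. This gives $d(X,Y)\ge 3$, which is what the estimate $1+3+1$ uses.
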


\begin{figure}
    \centering
    \includegraphics[page= 5,width=\linewidth]{figures/figures_anti_split.pdf}
    \caption{Distances from $w_1$. Yellow vertices are at distance at least $4$ from $w_1$ while green vertices are at distance at least $4$ from $w_6$. }
    \label{fig:AS-dist-from-w}
\end{figure}

\begin{figure}
    \centering
    \includegraphics[page= 3,width=\linewidth]{figures/figures_anti_split.pdf}
    \caption{Distances from a vertex in $L_A$. The orange path is an encoding path from that vertex. Green vertices are at distance $7$ from the highlighted leaf. }
    \label{fig:AS-dist-from-leaf}
\end{figure}

\begin{figure}
    \centering
    \includegraphics[page= 4,width=\linewidth]{figures/figures_anti_split.pdf}
    \caption{Distances from an internal vertex (red node). The highlighted green vertex is the only endpoint reachable from the red vertex by a length-seven isometric path contained in this gadget. There are two such paths, shown in green and blue. }
    \label{fig:AS-dist-from-z}
\end{figure}

\begin{proof}
The proof uses Figures~\ref{fig:AS-dist-from-w},
\ref{fig:AS-dist-from-leaf}, and~\ref{fig:AS-dist-from-z}, which show
the distances from $w_1$, an item leaf, and an internal vertex,
respectively. These distances follow from the edges of the construction.

Suppose, for a contradiction, that two vertices $v,v'$ satisfy
$d(v,v')\ge8$. Let $H=\{u\in V:d(w_1,u)\ge4\}$ and
$H'=\{u\in V:d(w_6,u)\ge4\}$. These sets are shown in yellow and
green in Figure~\ref{fig:AS-dist-from-w}. If neither endpoint belonged
to $H$, the route through $w_1$ would have length at most six.
Thus $\{v,v'\}$ meets $H$, and the same argument using $w_6$ shows
that it meets $H'$. Since $H\cap H'=\emptyset$, we may assume that
$v\in H$ and $v'\in H'$.

We now consider the possible locations of $v'$. The set $H'$ consists
of $A\cup L_A\cup X$, the vertices $w_0,w_1,w_2$, and the vertices
$f_r$ of the gadgets.
\begin{itemize}
 \item If $v'=w_1$, Figure~\ref{fig:AS-dist-from-w} shows that every
 vertex is at distance at most six from $v'$.
 \item If $v'\in\{w_0,w_2\}\cup X$, then $v'$ is adjacent to $w_1$,
 so every vertex is at distance at most seven from $v'$.
 \item If $v'\in L_A$, Figure~\ref{fig:AS-dist-from-leaf} shows that
 every vertex is at distance at most seven from $v'$. The same bound
 holds for any choice of item leaf.
 \item If $v'\in A$, choose a leaf adjacent to $v'$. Distances from
 $v'$ to every other vertex are one less than the distances from that
 leaf. Thus every vertex is at distance at most six from $v'$.
 \item The remaining possibility is that $v'$ has type $f_r$ in a
 gadget. This is the type of the red vertex in
 Figure~\ref{fig:AS-dist-from-z}, which shows that every vertex is
 at distance at most seven. Cyclic symmetry gives the same conclusion
 for every index $r$, in any gadget.
\end{itemize}
Each case contradicts $d(v,v')\ge8$, so the diameter is at most seven.
Finally, $d(w_0,w_7)=7$: a route from $w_1$ to $w_6$ has at least
five edges, whether it follows $W$ or passes through the gadgets.
Therefore the diameter is exactly seven.
\end{proof}

\begin{lemma}
\label{lemma:AS:pathwidth}
The graph $G$ has pathwidth at most $k+32$.
\end{lemma}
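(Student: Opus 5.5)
We build an explicit path decomposition. The graph $G$ consists of the path $W$, the bin vertices $b_1,\ldots,b_k$ with their leaves, the item vertices $a_i$ with their leaves, and one anti-split gadget for each pair $(i,j)$. Removing the set $Z=B\cup\{w_1,w_6\}$, of size $k+2$, leaves components that are small. These components are: the pieces of $W$; the bin leaves; and, for each $i$, the vertex $a_i$ with its leaves and the internal vertices of the $k$ gadgets attached to $a_i$. Any two gadgets of $a_i$ share only $a_i$, $b_j$, $w_1$, $w_6$. We therefore place all of $Z$ in every bag and scan the rest sequentially.

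The order of the bags is as follows. First we handle $W$: we add $w_0,w_2,\dots,w_5,w_7$ one at a time (at most two extra vertices per bag), and then the bin leaves one at a time. Next we process each item $i$ in turn. We keep $a_i$ in the bags while introducing its leaves one at a time. Then, for each $j\in[k]$, we process the gadget $(a_i,b_j)$ while keeping $a_i$; its other connection vertex $b_j$ and the outside neighbours $w_1,w_6$ are already in $Z$. So at this point the bags contain $Z\cup\{a_i\}$, that is $k+3$ vertices, plus a path decomposition of the gadget interior.

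What remains is a path decomposition of a single gadget interior of width about $28$. The vertices $x_r,y_r$ must be bagged alongside their neighbours $a_i,w_1$ and $b_j,w_6$, which is automatic since those neighbours are permanent. The interior itself is a cyclic chain of the blocks $B_r=\{x_r,p_r,q_r,y_r,c_r,d_r,e_r,f_r\}$. Each block is adjacent only to $B_{r-1}$ and $B_{r+1}$, via the edges $d_r y_{r-1}$, $f_r c_{r+1}$ and $x_{r+1}c_{r+1}$ within the pattern. Standard treatment of a cycle-like structure applies: fix the block $B_1$ permanently (at most $8$ vertices), then sweep $r=2,\dots,t$ with bags $B_1\cup B_r\cup B_{r+1}$. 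Every edge lies inside a block, between consecutive blocks, or between $B_t$ and $B_1$. Hence this gives gadget-internal bags of size at most $24$, and the total bag size is at most $(k+3)+24+\text{small slack}\le k+33$, i.e., pathwidth at most $k+32$. We also check the interval property: every vertex of $Z$ is in all bags, $a_i$ is in a contiguous stretch, and gadget vertices appear only during their own gadget's sweep.

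The main obstacle is not conceptual but bookkeeping. We must confirm from the edge list that the inter-block adjacencies are exactly as claimed: $y_{r-1}$ is adjacent to $d_r$, and $c_{r+1}$ is adjacent to $f_r$. This ensures no edge jumps more than one block, modulo the cyclic wrap. We must then tally the constants so that the final bound does not exceed $k+33$ vertices per bag. If the constant is tight, we would refine the sweep to introduce and forget vertices one at a time, using $B_r\cup B_{r+1}$ incrementally instead of full blocks.
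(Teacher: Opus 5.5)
Your proposal is correct and is essentially the paper's construction: a permanent set containing $B$ (the paper keeps all of $W$, while you keep only $w_1,w_6$ and scan the rest of $W$ first), items processed one at a time with $a_i$ retained, and each gadget swept with bags $B_1\cup B_r\cup B_{r+1}$. Your count $(k+3)+24=k+27$ already gives width $k+26$, so no slack or refinement is needed. One small slip: $x_{r+1}c_{r+1}$ is an intra-block edge, and the only inter-block edges are $d_ry_{r-1}$ and $f_rc_{r+1}$.
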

\begin{proof}
We construct a path decomposition by keeping $B\cup W$ in every
bag and processing the remaining vertices one item at a time. First,
introduce and forget the leaves of $B$ one at a time. When processing
item $i$, retain $a_i$, process its leaves in the same way, and then
process its $k$ gadgets consecutively.

It remains to describe how to process one gadget without keeping all
its vertices in a bag. Let $C_r$ consist of its eight internal vertices
with index $r$. Use the following bags, in the indicated order:
\[
 B\cup W\cup\{a_i\}\cup C_1\cup C_r\cup C_{r+1},
 \qquad r=1,\ldots,t_i-1.
\]
Every internal edge lies within one cell or joins two cyclically
consecutive cells. The cells $C_r,C_{r+1}$ cover the consecutive-cell
edges, while retaining $C_1$ throughout covers the edges joining the
last cell to the first. All edges incident with $a_i$, $b_j$, $w_1$,
or $w_6$ are covered because these vertices remain in the bags.
Moreover, every internal vertex occurs in a consecutive sequence of
bags. We can therefore concatenate these sequences for all gadgets
and items to obtain a path decomposition of $G$.

The largest bag contains at most $k+8$ vertices of $B\cup W$, the
vertex $a_i$, and three cells of eight vertices. Its size is at most
$k+33$, giving pathwidth at most $k+32$.
\end{proof}

In Sections \ref{section:AS:forward} and \ref{section:AS:backward}, we prove the following equivalence. 

\begin{lemma}
The instance of \textsc{Special Unary Bin Packing} has a solution if and only if $G$ can be covered by at most $K=1 + \frac{kb}{2} + k^2b$ isometric paths. 
\label{lemma:AS:equivalence}
\end{lemma}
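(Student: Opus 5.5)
The plan is to prove the two directions separately, first the forward one and then the backward one, with the counting argument in the backward direction as the core.

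\emph{Forward direction.} Given a packing $\sigma:[n]\to[k]$, take $W$ as one path. For each item $i$, route $t_i$ \emph{encoding paths} through the gadget of $(a_i,b_{\sigma(i)})$. Each such path goes from a leaf of $a_i$ along $a_i,x_r,p_r,q_r,y_r,b_{\sigma(i)}$ to a distinct leaf of $b_{\sigma(i)}$. Because each bin has total size exactly $b$, the leaves of $b_j$ are used bijectively. This gives $kb$ paths covering all of $L$, $A$, $B$ and the vertical paths of the traversed gadgets.

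In each traversed gadget the remaining vertices are covered by the $t_i/2$ internal blue paths. In each of the other $k-1$ gadgets of item $i$, we use the $t_i$ internal green paths. The total is $1+kb+\sum_i\bigl(t_i/2+(k-1)t_i\bigr)=1+kb+kb/2+(k-1)kb=K$.

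It remains to check isometry. An encoding path has length $6$. We verify from the distance pictures (Figures~\ref{fig:AS-dist-from-leaf} and~\ref{fig:AS-dist-from-z}) that a leaf of $a_i$ is at distance $6$ from a leaf of $b_j$, since routes via $w_1,w_6$ have length at least $1+1+1+5>6$. We also verify that the two endpoints of a green or blue path are at distance $7$, which is exactly what Figure~\ref{fig:AS-dist-from-z} records.

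\emph{Backward direction.} Let $\mathcal P$ be a cover of size at most $K$. The vertices $w_0,w_7$ are leaves, and the only paths of length at most $7$ reaching both lie inside $W$. Any other isometric path through $w_0$ has length at most $7$ and uses $w_1$. We would argue by an exchange argument that we may assume $W\in\mathcal P$ and that no other path uses $w_0,\dots,w_7$ usefully.

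Next we count leaves. There are $2kb$ leaves in $L$, and each path covers at most two of them, so at least $kb$ paths end in $L$. Moreover, a path with both endpoints in $L_A$, or both in $L_B$, wastes capacity.

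The key counting step is to assign each remaining path to a gadget and bound what it can cover. A path between $L_A$ and $L_B$ has length $6$, so it is an encoding path of the type above; it covers exactly one vertical path $x_r p_r q_r y_r$ of a single gadget and no other internal vertices. Every other isometric path has length at most $7$, and by the distance structure of Figure~\ref{fig:AS-dist-from-z} it covers at most $8$ internal vertices, with equality only for green or blue type paths confined to one gadget.

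Consider a gadget of capacity $t$ traversed by $s$ encoding paths. We would show that its remaining $8t-4s$ internal vertices need at least $t-s/2$ further paths, and that when $0<s<t$ they need strictly more, say $t-s/2+1/2$. The intuition is that the $f,e,d,c$ cycle is broken into pieces whose lengths cannot all be cut into blue pairs, because a blue path consumes two consecutive ``free'' indices. Summing the bound gives $|\mathcal P|\ge 1+kb+\sum_{\text{gadgets}}(t-s/2)$, and the right-hand side equals $K$ exactly. Hence equality must hold in every gadget, so every gadget has $s\in\{0,t\}$, meaning no item is split. Then the bin loads are forced to equal $b$ by the leaf counts, which yields a packing.

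The main obstacle is this per-gadget lower bound together with the accounting for paths that pass through several gadgets or through $w_1,w_6$ and $A,B$. First I would prove a lemma that every non-encoding, non-$W$ isometric path covers at most $8$ internal vertices, and at most $4$ from any gadget that it does not stay inside. I would then do a discharging argument: each path receives weight $1$, which it splits among gadgets in proportion to the internal vertices it covers, with vertices on partially traversed cycles charged at a rate exceeding $1/8$. If the fractional bound is too weak to force $s\in\{0,t\}$, I would instead use parity: in a partially traversed gadget, the uncovered cycle segments cannot be perfectly tiled by length-$7$ paths.
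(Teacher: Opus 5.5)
\paragraph{Forward direction.} This follows the paper's construction, but the arithmetic is off. The encoding path (leaf, $a_i,x_r,p_r,q_r,y_r,b_j$, leaf) has eight vertices, so both its length and the leaf-to-leaf distance are $7$, not $6$. Every path in the cover has length exactly $7$. This error is cosmetic.

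\paragraph{Backward direction.} Here there are genuine gaps.

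First, your bound $|\mathcal P|\ge 1+kb+\sum(t-s/2)$ presupposes two things you do not establish. One is that $W\in\mathcal P$; the ``exchange argument'' is never given. The other is that the leaves are covered by exactly $kb$ paths from $L_A$ to $L_B$, each using one vertical segment and four internal vertices. Neither holds automatically:
\begin{itemize}
\item Such a geodesic may use the diagonal segment $x_r,c_r,d_r,y_{r-1}$, because $a_i,x_r,c_r,d_r,y_{r-1},b_j$ also has length $5$.
\item Take $y'\in Y$ in the gadget of $(a_{i'},b_j)$ with $i'\neq i$. Then $\ell,a_i,x_r,p_r,q_r,y_r,b_j,y'$ is an isometric $8$-vertex path covering one leaf and five internal vertices.
\end{itemize}

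Second, the crux is that a partially traversed gadget ($0<s<t$) costs strictly more. You support this only by intuition, offering discharging or parity as fallbacks. Per-gadget rounding is also unavailable: paths can cross between gadgets via $w_1$, $w_6$ and the connection vertices, so the integrality of $|\mathcal P|$ is global, not per gadget.

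\paragraph{The missing idea.} A global tight double count removes any need for per-gadget discharging.
\begin{itemize}
\item Write $c(V,\mathcal P)=\sum_{P\in\mathcal P}|P|$, and let $\mathcal P_{\le3}$ be the paths with at most three vertices.
\item Since the diameter is $7$, $c(V,\mathcal P)\le 8|\mathcal P|-5|\mathcal P_{\le3}|$.
\item Every leaf on a longer path brings its own connection vertex onto that path, and distinct leaves on such a path have distinct neighbors. Hence $c(A\cup B,\mathcal P)\ge 2kb-2|\mathcal P_{\le3}|$, and so $c(V,\mathcal P)\ge |V\setminus(A\cup B)|+2kb-2|\mathcal P_{\le3}|$.
\end{itemize}
With $|\mathcal P|\le K$ the two bounds meet. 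This forces $\mathcal P_{\le3}=\emptyset$, every path to have exactly eight vertices, every vertex outside $A\cup B$ to be covered exactly once, and every connection vertex on a path to be accompanied by one of its leaves.

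Everything then follows cleanly from this exactness:
\begin{itemize}
\item $W\in\mathcal P$, by unique coverage of $w_1$ and $w_6$.
\item Paths starting at a leaf of $L_A$ end in $L_B$; the second example above contains $b_j$ without a leaf of $b_j$.
\item Diagonal encodings are excluded: $p_r$ has degree two and eccentricity at most $6$, so the path covering it contains $x_r$.
\item The anti-split property becomes a short local computation. Every $8$-vertex geodesic inside a gadget is the green path at $r-1$ or the blue path at $r$, since its endpoints must be of types $e$ and $f$. Exact coverage of $e_r$ and $f_{r-1}$ gives $g_{r-1}+h_r+h_{r+1}=1=g_r+h_r+h_{r+1}$. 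So all $g_r$ agree, i.e.\ $s\in\{0,t\}$.
\end{itemize}
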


Thanks to this lemma we can prove Theorem \ref{thm:cover-W1h-tw-diam}. 

\begin{proof}[Proof of Theorem \ref{thm:cover-W1h-tw-diam}]
The construction has polynomial size in the unary input, diameter
seven, and pathwidth at most $k+32$ by Lemma~\ref{lemma:AS:pathwidth}.
Lemma~\ref{lemma:AS:equivalence} establishes equivalence of the instances.
This is a parameterized reduction from
\textsc{Special Unary Bin Packing}, proving the claimed pathwidth hardness.
\end{proof}

\subsection{Forward direction}\label{section:AS:forward}

Suppose the instance of \textsc{Special Unary Bin Packing} has a solution,
with item $i$ packed into bin $j(i)$. The green and blue paths listed
above are isometric in $G$: tracing their internal routes gives distance
seven between their endpoints, and a route leaving the gadget meets one
of $a_i,b_j,w_1,w_6$, whose distances to these endpoints sum to seven.
These distances are illustrated in Figure~\ref{fig:AS-dist-from-z}. 

We construct a cover $\mathcal{P}$ of $G$ with $kb + \frac{kb}{2} + (k^2b-kb) + 1$ isometric paths as follows. 
\begin{itemize}
    \item add a path of length $7$ that covers $W$. Figure \ref{fig:AS-dist-from-w} shows that this is an isometric path.
    \item For each $i\in [n]$, add $t_i$ \emph{encoding} paths, each one starting in a different leaf of $L(a_i)$ and finishing in a different leaf of $L(b_{j(i)})$. Such a family is shown on Figure \ref{fig:AS-full_reduction} by the orange paths (on the figure we have $j(2)=2$). These encoding paths go through the anti-split gadget with connection vertices $a_i$ and $b_{j(i)}$, and each uses a different vertical path $x_r,p_r,q_r,y_r$. Assign bin leaves bijectively to the incoming paths; the packing guarantees exactly $b$ paths per bin. Figure \ref{fig:AS-dist-from-leaf} shows that encoding paths are isometric. There are $\sum_{i\in [n]}t_i=kb$ encoding paths.
    \item For each $i\in [n]$, add the $t_i/2$ \emph{internal blue} paths with even indices that cover the uncovered vertices of the anti-split gadget connecting $a_i$ and $b_{j(i)}$. These paths are shown on Figure \ref{fig:AS-full_reduction} (on the gadget between $a_2$ and $b_2$), and on Figure \ref{fig:AS_intro} (right). Figure \ref{fig:AS-dist-from-z} shows that these paths are isometric. In total, $\mathcal{P}$ uses $\sum_{i\in [n]}t_i/2=kb/2$ internal blue paths. 
    \item For each $(i,j)\in [n]\times [k]$ such that $j\neq j(i)$, add $t_i$ \emph{internal green paths} to cover the gadget between $a_i$ and $b_j$. These paths are shown on Figure \ref{fig:AS-full_reduction} (in the gadget between $a_1$ and $b_2$), and on Figure \ref{fig:AS_intro} (left). Figure \ref{fig:AS-dist-from-z} shows that these paths are isometric. In total, the number of internal green paths is $\sum_{i\in [n]}\sum_{j\in [k], j\neq j(i)} t_i=(k-1)\sum_{i\in [n]} t_i=(k-1)kb$. 
\end{itemize}
This defines $\mathcal{P}$. Observe that, $\mathcal{P}$ has size $1+kb+kb/2+(k-1)kb=1 +{kb}/{2} + k^2b$. This concludes the proof of the forward direction. Notice that all paths in $\mathcal{P}$ have length exactly $7$, the diameter of the graph, and that the only vertices of the graph that are covered by several paths are vertices of $A\cup B$. 

\subsection{Backward direction}\label{section:AS:backward}
Fix an isometric path cover $\mathcal P$ of $G$ with at most
$K=1+kb/2+k^2b$ paths. We first establish the properties that will
allow us to recover a packing.

\begin{lemma}
\label{lemma:AS:cover-structure}
The cover $\mathcal P$ has the following properties.
\begin{enumerate}
 \item Every path has length seven.
 \item Every vertex outside $A\cup B$ is covered exactly once.
 \item Apart from $W$, every path either consists entirely of internal
 vertices of one anti-split gadget, or joins a leaf of $a_i$ to a leaf
 of $b_j$ through a vertical path $x_r,p_r,q_r,y_r$ of their gadget,
 as the orange paths in Figure~\ref{fig:AS-full_reduction}.
 In particular, every path meeting $A\cup B\cup L$ has this latter form.
\end{enumerate}
\end{lemma}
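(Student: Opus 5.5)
The argument is a counting one: we compare the budget $K$ with a lower bound on how many paths are needed, and we use the diameter bound (Lemma~\ref{claim:AS_diameter}) to cap how much any single path can cover.

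\textbf{Leaves force the encoding paths and $W$.} There are $2kb$ leaves in $L$ and the two leaves $w_0,w_7$ of $W$. A leaf can only be covered as an endpoint of a path, so every path has at most two leaf endpoints. We first show that $w_0$ and $w_7$ lie on a common path. Since $d(w_0,w_7)=7$ is the diameter, any path through $w_0$ that covers $w_1,\dots,w_6$ must reach them along $W$. The cleaner route is to argue that any path containing $w_0$ whose other endpoint is not $w_7$ wastes budget.

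\textbf{Vertex budget.} Every isometric path has at most $8$ vertices, since its length is at most the diameter $7$. Counting the vertices outside $A\cup B$ gives
\[
|V\setminus(A\cup B)| = 2kb+8k^2b+8 .
\]
Compare this with
\[
8K = 8+4kb+8k^2b .
\]
The slack is exactly $2kb$, which has to account for the vertices of $A\cup B$ lying on paths.

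Next, a path with a leaf of $L$ as an endpoint must pass through the leaf's neighbour in $A\cup B$. Each of the $2kb$ leaves is an endpoint of some path, and the leaf endpoints of paths are paired at most two per path. For each path, let its waste be $8$ minus the number of vertices outside $A\cup B$ that it covers for the first time. This waste is at least the number of its vertices in $A\cup B$ plus any shortfall in length or any overlap.

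We then have to show that the paths meeting $L$ contribute waste at least $2kb$ in total. The cleanest accounting is per leaf: each leaf endpoint forces its adjacent vertex of $A\cup B$ onto the path, which gives one unit of waste per leaf. A path with two leaf endpoints whose neighbours are distinct contributes two units, matching its two leaves. A path with a single leaf endpoint also contributes at least one unit, and this pairing of leaves to units is fine as it stands.

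Summing over all paths, total waste is at least $2kb$, which equals the slack. So equality holds everywhere. Every path has exactly $8$ vertices, hence length seven, which proves item~1. No vertex outside $A\cup B$ is covered twice, which proves item~2. Each path meets $A\cup B$ in exactly as many vertices as it has leaf endpoints. The remaining waste term is zero, which rules out a path meeting $A\cup B$ without a leaf endpoint. For such a path, $A\cup B$ vertices are pure waste, so internal paths avoid $A\cup B\cup L$.

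\textbf{Shape of the paths.} For item~3, take a path meeting $L$. It has length seven and contains exactly one vertex of $A\cup B$ per leaf endpoint. A path with one leaf endpoint and only one vertex of $A$ would have to continue for $6$ more edges avoiding $A\cup B$. We rule this out using the distances shown in Figure~\ref{fig:AS-dist-from-leaf}. From a leaf of $a_i$, any vertex at distance $7$ that is reachable while avoiding $B$ is either a leaf of some $b_j$ or lies on $W$. The $W$ case contradicts item~2, because $W$ must be covered by the path through $w_0$. So such paths have two leaf endpoints, one in $L(a_i)$ and one in $L(b_j)$. Their six middle vertices form $a_i,x_r,p_r,q_r,y_r,b_j$, since $a_i$ and $b_j$ are at distance $5$ and the only length-$5$ routes are the vertical paths; a route through $w_1,w_6$ would be longer. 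Two leaves both in $L_A$ would require two $A$ vertices at distance $\le 5$ through $w_1$, and such a path would not be isometric of length $7$.

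Finally, a path avoiding $A\cup B\cup L$ that is not $W$ cannot use $w_1$ or $w_6$, by item~2 together with the fact that $W$ covers them. It therefore stays inside $I$, and since the gadgets are connected to each other only via $A\cup B\cup W$, it lies inside a single gadget.

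\textbf{Main obstacle.} The delicate step is showing that $W$ is forced and that no path wanders through $w_1$ or $w_6$ between gadgets. This needs careful use of the distance pictures (Figures~\ref{fig:AS-dist-from-w}--\ref{fig:AS-dist-from-z}). I would first try to show that $w_0$ has eccentric vertices only in $\{w_7\}\cup(\text{far side})$ and charge waste accordingly.
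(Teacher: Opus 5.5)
Your counting argument for items 1 and 2 is correct and is essentially the paper's argument with different bookkeeping. Your per-path waste is the paper's comparison of $\sum_{P}|P|$ with $8|\mathcal P|$, and the short paths the paper isolates in $\mathcal P_{\le 3}$ are absorbed by your shortfall term. You should say explicitly that a path whose two leaf endpoints share a neighbour has three vertices, hence waste at least $6$.

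The gaps are in item 3. First, every step there that excludes detours through $w_1$ or $w_6$ uses $W\in\mathcal P$. This covers both the paths from $L$ and the confinement of the remaining paths to one gadget. You name this as the main obstacle but never prove it. It follows quickly from items 1 and 2:
\begin{itemize}
\item The path through $w_0$ has eight vertices. If it is not $W$, it must turn from $w_1$ into $X$, because the alternative $w_0,\dots,w_6,y$ with $y\in Y$ is not isometric ($d(w_0,y)\le 5$).
\item The path covering $w_2$ then avoids $w_1$, so it starts at $w_2$. To reach eight vertices it needs $w_6$ and a vertex of $Y$.
\item The path covering the leaf $w_7$ must also contain $w_6$, so $w_6$ is covered twice, contradicting item 2.
\end{itemize}
Separately, your claim that every vertex at distance $7$ from a leaf of $a_i$, reachable while avoiding $B$, is a leaf of some $b_j$ or lies on $W$ is false as stated. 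Leaves of $b_j$ are not reachable while avoiding $B$. Internal vertices of other gadgets, including vertices of $Y$, are at distance $7$ via $w_1$ or $w_6$. What is true is that every vertex of the gadgets at $a_i$ lies within distance $5$ of that leaf. So a length-seven isometric path from the leaf must use $w_1$, $w_6$ or some $b_j$, and only the last option survives once $W\in\mathcal P$ is known.

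Second, the vertical paths are not the only length-$5$ routes from $a_i$ to $b_j$. The route $a_i,x_r,c_r,d_r,y_{r-1},b_j$ also has length $5$, and the corresponding leaf-to-leaf path is a genuine isometric path of length seven. So your argument does not yield the vertical form claimed in item 3, and you need an extra idea to exclude this route. The paper's works:
\begin{itemize}
\item $p_r$ and $q_r$ have distances $(2,3)$ and $(3,2)$ to $(w_1,w_6)$. Every vertex is within distance $3$ of $w_1$ or of $w_6$, so both have eccentricity at most $6$, and neither can be an endpoint of a length-seven path.
\item Both have degree $2$, so the path covering $p_r$ contains all of $x_r,p_r,q_r,y_r$.
\item A diagonal encoding path through $x_r$ would then cover $x_r$ a second time, contradicting item 2.
\end{itemize}
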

\begin{proof}
Here $|P|$ denotes the number of vertices of a path $P$.
Let $\mathcal P_{\le3}=\{P\in\mathcal P:|P|\le3\}$ and
$\mathcal P_{\ge4}=\mathcal P\setminus\mathcal P_{\le3}$.
For each vertex $v$, let $c(v,\mathcal P)$ be the number of paths
of $\mathcal P$ containing $v$. For a vertex set $V'$, put
$c(V',\mathcal P)=\sum_{v\in V'}c(v,\mathcal P)$.
Thus $c(V',\mathcal P)$ counts coverage with multiplicity, and double
counting gives
\[
 c(V,\mathcal P)=\sum_{P\in\mathcal P}|P|.
\]

We first give a lower bound on the coverage of $A\cup B$.
Let $L_{\ge4}\subseteq L$ be the leaves not covered by any path
in $\mathcal P_{\le3}$. There are $2kb$ leaves in $L$, and a path can
cover at most two leaves. Hence
\[
 |L_{\ge4}|\ge2kb-2|\mathcal P_{\le3}|.
\]
Every leaf of $L_{\ge4}$ must be covered by a path in
$\mathcal P_{\ge4}$. Now consider any such path $P$. If it contains
a leaf of $L$, it also contains its neighbor in $A\cup B$.
Furthermore, it cannot contain two leaves with the same neighbor:
those leaves would be its endpoints and the path would have only
three vertices. Therefore
$|P\cap(A\cup B)|\ge|P\cap L|$, and summing gives
\begin{align}
 c(A\cup B,\mathcal P)
 &\ge c(A\cup B,\mathcal P_{\ge4})\nonumber\\
 &=\sum_{P\in\mathcal P_{\ge4}}|P\cap(A\cup B)|\nonumber\\
 &\ge\sum_{P\in\mathcal P_{\ge4}}|P\cap L|\label{eq:comp_A_LA}\\
 &=c(L,\mathcal P_{\ge4})\nonumber\\
 &\ge |L_{\ge4}|\ge2kb-2|\mathcal P_{\le3}|.\nonumber
\end{align}

We now compare the coverage required by the graph with the coverage
provided by the paths. On the one hand, every vertex outside $A\cup B$
must be covered at least once. Using the vertex count and
the preceding bound, we obtain
\begin{align}
 c(V,\mathcal P)
 &=c(V\setminus(A\cup B),\mathcal P)+c(A\cup B,\mathcal P)\nonumber\\
 &\ge |V\setminus(A\cup B)|+c(A\cup B,\mathcal P)
       \label{eq:coverage_rest_a}\\
 &=2kb+8k^2b+8+c(A\cup B,\mathcal P)\nonumber\\
 &\ge4kb+8k^2b+8-2|\mathcal P_{\le3}|.
       \label{eq:ineq-lemma}
\end{align}
On the other hand, an isometric path has at most eight vertices by
Lemma~\ref{claim:AS_diameter}, so
\begin{align}
 c(V,\mathcal P)
 &=\sum_{P\in\mathcal P}|P|\nonumber\\
 &\le3|\mathcal P_{\le3}|+8|\mathcal P_{\ge4}|
       \label{eq:lenght-paths}\\
 &=8|\mathcal P|-5|\mathcal P_{\le3}|\nonumber\\
 &\le4kb+8k^2b+8-5|\mathcal P_{\le3}|,\nonumber
\end{align}
where the last inequality uses $|\mathcal P|\le K=1+kb/2+k^2b$.
Comparing the two bounds forces $|\mathcal P_{\le3}|=0$.
The bounds then coincide, so $|\mathcal P|=K$ and all the displayed
inequalities are equalities. In particular, equality
in~\eqref{eq:lenght-paths} means that every path has eight vertices,
and equality in~\eqref{eq:coverage_rest_a} means that no vertex outside
$A\cup B$ is covered twice. This proves the first two properties.

Moreover, equality in~\eqref{eq:ineq-lemma} gives
$c(A\cup B,\mathcal P)=2kb$, and the coverage bounds above are tight. In particular, equality in~\eqref{eq:comp_A_LA} gives
$|P\cap(A\cup B)|=|P\cap L|$ for every $P\in\mathcal P$.
Every leaf of $P\cap L$ has its neighbor on $P$, and these neighbors
are distinct because $P$ has eight vertices. They therefore account
for all vertices of $P\cap(A\cup B)$. Consequently, if $u\in A\cup B$
belongs to $P$, one of its leaves belongs to $P$ as well.

We next show that $W$ belongs to the cover. This will exclude paths
that use $w_1$ or $w_6$ to move between gadgets.
    Refer to Figure~\ref{fig:AS-dist-from-w}. 
    Let $P$ denote the isometric path that contains $w_0$, and suppose for a contradiction that $P\neq W$. Since $|P|=8$, either it uses an edge between $w_1$ and a vertex in $X$, or it uses an edge between $w_6$ and a vertex in $Y$. Notice that the second case cannot happen since $P$ is an isometric path, $d(w_0,w_6)=6$ and the distance between $w_0$ and any vertex of $Y$ is $5$. Thus, $w_2\notin P$, and let $P'\in \mathcal{P}$ be the path that contains $w_2$. Since $w_1$ is covered only once, $w_2$ is an endpoint of $P'$. Its length is seven, so we must have $w_6\in P'$ and  $w_7\notin P'$. But then, the path that contains $w_7$ must intersect $P'$ in $w_6$ which contradicts the fact that this vertex is covered only once.

We now show that a path meeting $L_A$ must end in $L_B$, and conversely.
    Refer to Figure~\ref{fig:AS-dist-from-leaf}. 
    Let $P$ be a path in $\mathcal{P}$ that intersects $L_A$. The other case is symmetric. We show that $P$ has to connect to a leaf in $L_B$ like the orange path shown in the Figure. 
    Since $|P|=8$, the other endpoint $u'$ must be one of the (green) vertices labeled with $7$ on Figure \ref{fig:AS-dist-from-leaf}. 

    Since $w_1$ and $w_6$ are already covered by the preceding argument and are covered only once, we know that $u'\notin \{w_5,w_7\}$. Then, $u'$ cannot be one of the green internal vertices that are not in $Y$ since all the corresponding isometric paths contain $w_1$ which would again contradict unique coverage. 

    Suppose now that $u'\in Y$. Then, either $P$ passes by $w_1$ which would again be a contradiction,  or passes by a vertex in $B$ ($b_2$ on the figure). This contradicts the fact proved above that a covered connection vertex must have a leaf on the same path, because $P$ would then contain a vertex $b_j$, for some $j\in[k]$, without containing a leaf in $L(b_j)$. 

    Thus, the only remaining option is $u'\in L_B$. This establishes the required endpoint pairing.

Consider such a path, with endpoints in $L(a_i)$ and $L(b_j)$.
It cannot meet another connection vertex, since that would force a
third leaf onto the path. It also avoids $W$, by unique coverage.
Its internal segment therefore stays in the gadget between $a_i$
and $b_j$. Four of its seven edges connect its endpoint leaves to
$X$ and $Y$, leaving three edges within the gadget. From the gadget
edges, this segment is either $x_r,p_r,q_r,y_r$ or
$x_r,c_r,d_r,y_{r-1}$.

We can exclude the second possibility using the vertices $p_r,q_r$.
Each of these vertices is at distance at most six from every vertex
of $G$. Indeed, their distances to $(w_1,w_6)$ are $(2,3)$ and $(3,2)$,
respectively; the distances in Figure~\ref{fig:AS-dist-from-w} give
the bound six in both cases. Neither can therefore be an endpoint of
a length-seven isometric path. Since they have degree two, the path
covering $p_r$ must contain the entire segment $x_r,p_r,q_r,y_r$.
An encoding path using $x_r,c_r,d_r,y_{r-1}$ cannot also contain
$p_r$. It would therefore share $x_r$ with the path covering $p_r$,
contradicting unique coverage. Thus every path meeting $L$
is an orange encoding path.

Finally, any path that meets $A\cup B$ also meets $L$, as proved above.
Every remaining path other than $W$ avoids both the connection vertices
and $W$, so it is contained in the internal vertices of a single gadget.
This proves the third property and completes the proof.
\end{proof}

By Lemma~\ref{lemma:AS:cover-structure}, the unique path covering a
leaf $u\in L(a_i)$ ends in $L(b_j)$ for some $j$. We call $j$ the
\emph{destination of} $u$. We say that item $i$ is \emph{split} if there exist two vertices in $L(a_i)$ with different destinations. Otherwise, we say that item $i$ is \emph{packed} in bin $j$. 

\begin{lemma}
    No item is split. 
    \label{claim:AS:no_split}
\end{lemma}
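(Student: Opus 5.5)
The plan is to show a dichotomy for every anti-split gadget: the number $s$ of encoding paths traversing the gadget between $a_i$ and $b_j$ is either $0$ or $t_i$. Once we have that, no item can be split. Indeed, the $t_i$ leaves of $a_i$ are covered by $t_i$ distinct encoding paths, since every vertex outside $A\cup B$ is covered exactly once by Lemma~\ref{lemma:AS:cover-structure}. Each of these paths goes through the gadget between $a_i$ and its destination bin, using a distinct vertical path $x_r,p_r,q_r,y_r$ of that gadget. A gadget has only $t_i$ vertical paths, so if two leaves of $a_i$ had different destinations, each of the two gadgets involved would carry strictly between $0$ and $t_i$ encoding paths, contradicting the dichotomy.

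First, I would classify the paths of $\mathcal P$ lying entirely inside one gadget. By Lemma~\ref{lemma:AS:cover-structure} such a path has exactly eight vertices, is isometric, avoids $a_i,b_j,W$, and is disjoint from every other path of $\mathcal P$ at internal vertices. Since $p_r$ and $q_r$ are at distance at most six from every vertex, neither is an endpoint. As they have degree two, any internal path containing one of them contains the whole segment $x_r,p_r,q_r,y_r$. I would then enumerate, using the distances of Figures~\ref{fig:AS-dist-from-w} and~\ref{fig:AS-dist-from-z} and cyclic symmetry, the eight-vertex isometric paths inside a gadget. The expected outcome is the following. An internal path through a vertical segment is exactly the green path at $r$, namely $f_{r-1},c_r,x_r,p_r,q_r,y_r,d_{r+1},e_{r+1}$; its extremities are forced, because by Figure~\ref{fig:AS-dist-from-z} an $f$-type vertex has a unique endpoint at distance seven inside the gadget. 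Any other internal path avoids the $x,p,q,y$ vertices and is a blue-type path on the cycle $\ldots,c_r,d_r,e_r,f_r,c_{r+1},\ldots$, namely seven consecutive edges along it.

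Next comes a propagation argument around the cyclic index. Call index $r$ \emph{green} if its vertical segment is covered by an internal path, and \emph{encoded} if it is covered by an encoding path. Suppose $r$ is green, so the green path at $r$ covers $f_{r-1},c_r,d_{r+1},e_{r+1}$. I would show that $r+1$ cannot be encoded. If it were, then $c_{r+1}$, $f_{r+1}$ and the remaining cycle vertices near $r+1$ would have to be covered by blue-type paths, each consisting of eight consecutive vertices of the $c,d,e,f$ cycle. But $d_{r+1}$ and $e_{r+1}$ are already taken, so $c_{r+1}$ lies on a run of free cycle vertices that is bounded on one side by $d_{r+1}$. On the other side the run is bounded by $c_r$ (taken by the green path at $r$), so it consists only of $d_r,e_r,f_r,c_{r+1}$, which has fewer than eight vertices. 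Exact coverage then fails. Symmetrically, $r-1$ must also be green. By cyclicity, either every index is green, giving $s=0$, or every index is encoded, giving $s=t_i$. That is the dichotomy.

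The main obstacle is the classification step, in particular proving that an eight-vertex internal path avoiding the vertical segments must run consecutively along the $c,d,e,f$ cycle, and cannot use the shortcut edges $x_r c_r$ or $d_r y_{r-1}$ in some other way. Here I would argue with exact coverage: $x_r$ and $y_{r-1}$ are already used by the encoding path or the green path through the corresponding vertical segment, so an internal path avoiding the vertical segments cannot pass through them. That leaves only the cycle, and for the cycle I would just need to check isometry, which Figure~\ref{fig:AS-dist-from-z} supplies.
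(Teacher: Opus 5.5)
Your proof is correct. It reaches the same dichotomy as the paper (each gadget carries $0$ or $t_i$ encoding paths), but the classification and propagation steps work differently.

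The paper classifies every internal path at once. All eight vertex types are at distance at most $4$ from both $a_i$ and $b_j$. So two internal vertices of the gadget at distance seven must be one of type $e$ (distances $(4,3)$) and one of type $f$ (distances $(3,4)$). Enumerating routes from $e_r$ then leaves only the green path at $r-1$ and the blue path at $r$. Exact coverage of $e_r$ and $f_{r-1}$ gives $g_{r-1}+h_r+h_{r+1}=1=g_r+h_r+h_{r+1}$, hence $g_{r-1}=g_r$.

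You instead confine non-green paths and propagate by blocking:
\begin{itemize}
\item Unique coverage of the vertical segments, forced through the degree-two non-endpoint vertices $p_r,q_r$, confines every non-green internal path to the $c,d,e,f$ cycle.
\item A green path at $r$ occupies $c_r$ and $d_{r+1}$.
\item The four vertices $d_r,e_r,f_r,c_{r+1}$ between them cannot host an eight-vertex cycle segment, so covering $c_{r+1}$ forces the green path at $r+1$.
\end{itemize}
What your route buys is that you never need to know which cycle segments are geodesics; not all of them are, and the blue paths play no role in your backward argument. The paper's route buys a complete description of internal paths and a one-line algebraic propagation.

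One step should not be left to the figure: the claim that an internal path through $x_r,p_r,q_r,y_r$ is exactly the green path at $r$. Your propagation needs this path to contain both $c_r$ and $d_{r+1}$, that is, neither $x_r$ nor $y_r$ is an endpoint. Appealing to the fact that an $f$-type vertex has a unique partner at distance seven presupposes that one endpoint is of type $f$, which you have not shown. The paper's distance argument with respect to $(a_i,b_j)$ supplies exactly this. Once both endpoints are known to be of types $e$ and $f$, the extension $f_{r-1},c_r,x_r,p_r,q_r,y_r,d_{r+1},e_{r+1}$ is the only possibility.
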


\begin{proof}
Fix a gadget between $a_i$ and $b_j$, of capacity $t=t_i$.
We will show that it carries either no encoding path or exactly $t_i$
encoding paths. Since there are only $t_i$ leaves at $a_i$, this will
exclude splitting item $i$ between several bins.

By Lemma~\ref{lemma:AS:cover-structure}, every path meeting the gadget
is either an encoding path from $L(a_i)$ to $L(b_j)$ or consists
entirely of its internal vertices. Encoding paths follow vertical
segments $x_r,p_r,q_r,y_r$; in particular, they never cover a vertex
of type $e_r$ or $f_r$.

We first classify paths contained in the gadget. They have length
seven. The distances of its vertex types to $(a_i,b_j)$ are
\[
\begin{array}{c|cccccccc}
 &x_r&p_r&q_r&y_r&c_r&d_r&e_r&f_r\\\hline
(d(a_i,\cdot),d(b_j,\cdot))
 &(1,4)&(2,3)&(3,2)&(4,1)&(2,3)&(3,2)&(4,3)&(3,4).
\end{array}
\]
For endpoints at distance seven, the sums of their distances to each
connection vertex are at least seven. Since all distances in the table are at most four, one endpoint
must be at distance four from $a_i$ and at least three from $b_j$.
It therefore has type $e_r$. The other endpoint must then be at
distance four from $b_j$ and at least three from $a_i$, so it has
type $f_s$.
Starting from $e_r$ and stopping at a vertex of type $f$, the neighbor
lists give short internal routes to $f_r,f_{r-1},f_{r+1}$ of lengths
one, three, and five, respectively. The only other simple internal
routes of length at most seven are
\[
\begin{split}
 &e_r,d_r,y_{r-1},q_{r-1},p_{r-1},x_{r-1},c_{r-1},f_{r-2},\\
 &e_r,d_r,c_r,f_{r-1},e_{r-1},d_{r-1},c_{r-1},f_{r-2}.
\end{split}
\]
This enumeration follows by choosing one of the two neighbors $d_r,f_r$
of $e_r$ and continuing without repeating a vertex. Thus a length-seven
geodesic must be the green path at $r-1$ or the blue path at $r$.

We can now use unique coverage to compare consecutive cells. Let
$g_r$ be one if the green path at $r$ belongs to the cover, and zero
otherwise; define $h_r$ in the same way for the blue path at $r$.
The vertex $e_r$ belongs to the green path at $r-1$ and to the blue
paths at $r$ and $r+1$. Likewise, $f_{r-1}$ belongs to the green path
at $r$ and to those same two blue paths. Neither vertex can be covered
by an encoding path. Since each is covered exactly once, we have
\[
 g_{r-1}+h_r+h_{r+1}=1,
 \qquad g_r+h_r+h_{r+1}=1.
\]
Subtracting gives $g_{r-1}=g_r$. As the indices are cyclic, all the
$g_r$ have the same value.

If this value is one, the green paths cover every vertex of $X$ in
the gadget, and unique coverage excludes any encoding path through it.
If the value is zero, no green path is used. Blue paths do not meet
$X$, so all $t_i$ vertices of $X$ must be covered by encoding paths.
Each encoding path contains exactly one such vertex, and hence there
are exactly $t_i$ encoding paths through the gadget.

Finally, choose any gadget used by a path from a leaf of $a_i$.
It carries $t_i$ encoding paths, which already account for all $t_i$
leaves of $a_i$. Thus every leaf of $a_i$ has the same destination,
and item $i$ is not split.
\end{proof}

We can use the encoding paths to construct a valid solution to \textsc{Special Unary Bin Packing}: for each $i\in [n]$, put item $i$ in the bin $j$, where $j$ is the destination of any vertex in $L(a_i)$. Exactly $t_i$ vertices of $L(b_j)$ are covered by paths from $L(a_i)$ and since all the $b$ vertices are covered, the sum of the sizes of the items placed in each bin is exactly $b$, as expected. 

This concludes the backward direction and completes the proof of Lemma \ref{lemma:AS:equivalence}.


\section{Conclusion}\label{sec:conclusion}

In this article, we studied the parameterized complexity of \SoPC. 
We proved that the problem does not admit an \XP-algorithm w.r.t.~pathwidth or treewidth, unless \P=\NP. On the contrary, \SoPC admits an XP-algorithm w.r.t.~treewidth plus diameter. However, it is \W[1]-hard w.r.t.~the combined parameters diameter and pathwidth. 

We mention two future research directions. 
First, we wonder 
whether \SoPC and \SoPP are \FPT\ or \W[1]-hard when parameterized by the solution size $k$. {Both problem are known to be in \XP~\cite{dumas2024graphs}.} Arguably, these are the most important open questions on these problems from the perspective of parameterized complexity.

In the same direction, other interesting questions are whether \SoPP or \SoPC become \FPT\ for treewidth on planar graphs. We note that this question has been raised (and is still open) for other distance-based optimization problems such as \textsc{Metric Dimension}~\cite{DBLP:journals/algorithmica/BonnetP21}.

\paragraph*{Disclosure on generative AI usage.}
This paper was initially written solely by the authors. AI tools were then used to run preliminary checks of the correctness, to obtain the arguments of some minor lemmas, to revise, and to enhance the presentation of the manuscript. 
We detail this usage for each of the sections where it was used.
\begin{description}
    \item[Section \ref{sec:SCP-tree-width-para-NP-hard}.]
The authors designed and wrote the initial reduction. Discussions with AI tools helped identify a flaw, which the tools could not resolve. The authors developed a revised approach using a different source problem; ChatGPT Astra supplied the proof of its NP-hardness (Lemma~\ref{lem:cpm-source-hard}). Further discussions revealed a missing component, leading the authors to add the ``spine'' construction. AI tools drafted the current proof from the authors' detailed construction and arguments, and the authors revised it for clarity and readability and created the figure. In addition to our own proofreading, ChatGPT Astra subsequently reviewed the reduction without identifying further errors. 

    \item[Section \ref{sec:algo-treewidth-plus-diam}.] A first version of the dynamic programming was written by the authors. AI tools were used to verify the approach and enhance the presentation.
   \item[Section \ref{sec:cover-W1h-tw-diam}.]
AI tools reviewed the reduction for possible errors and reported none. 
\end{description}

\paragraph*{Acknowledgments.}
The authors would like to thank the organizers of the 2024 Workshop of the AGC team of LIMOS at Saint-Jacques-d'Ambur, where this research was initiated.

Florent Foucaud was supported by the ANR project GRALMECO (ANR-21-CE48-0004), the French government IDEX-ISITE initiative 16-IDEX-0001 (CAP 20-25), the International Research Center ``Innovation Transportation and Production Systems'' of the I-SITE CAP 20-25, and the CNRS IRL ReLaX. 

Dibyayan Chakraborty received funding from the London Mathematical Society (Scheme 7, SC7-2526-02).

\bibliographystyle{alpha}
\bibliography{references} 

\end{document}